\newtheorem{cor}{Corollary}[section]
\newtheorem{lem}{Lemma}[section]
\newtheorem{prop}{Proposition}[section]
\theoremstyle{definition}
\newtheorem{defn}{Definition}[section]
\theoremstyle{remark}
\newtheorem{rem}{Remark}[section]
\numberwithin{equation}{section}
\newtheorem{ex}{Example}[section]
\newcommand{\R}{\mathbb{R}}
\newcommand{\N}{\mathbb{N}}
\newcommand{\Z}{\mathbb{Z}}
\newcommand{\p}{\varphi}
\newcommand{\id}{\mathrm{id}}
\newcommand{\eps}{{\varepsilon}}
\newcommand{\dnp}{d_\mathrm{NP}}
\newcommand{\db}{d_\mathrm{B}}
\newcommand{\dht}{d_\mathrm{HT}}
\newcommand{\di}{d_\mathrm{I}}
\newcommand{\epi}{\mathrm{epi}}
\newcommand{\veceps}{{\boldsymbol{\varepsilon}}}
\newcommand{\epsdiag}{{\vec{\boldsymbol{\varepsilon}}}}
\newcommand{\bsy}[1]{{\boldsymbol{#1}}}
\newcommand{\vecalpha}{{\boldsymbol{\alpha}}}
\newcommand{\vecbeta}{{\boldsymbol{\beta}}}
\newcommand{\veczero}{{\boldsymbol{0}}}
\newcommand{\alphadiag}{{\vec{\boldsymbol{\alpha}}}}
\newcommand{\bfp}{{\boldsymbol{\varphi}}}
\newcommand{\bfs}{{\boldsymbol{\psi}}}
\title{The Persistent Homotopy Type Distance}
\author{Patrizio Frosini}
\address{Patrizio Frosini, Department of Mathematics, University of
  Bologna, Italy}
\email{patrizio.frosini@unibo.it}
\author{Claudia Landi}
\address{Dipartimento di Scienze e Metodi dell’Ingegneria,
Universit\`a di Modena e Reggio Emilia
Reggio Emilia, Italy}
\email{claudia.landi@unimore.it}
\author{Facundo M\'emoli}
\address{Department of Mathematics,
The Ohio State University,
Columbus, Ohio, U.S.A.}
\email{memoli@math.osu.edu}
\date{\today}
\begin{document}
\maketitle

\begin{abstract}
We introduce the {\em persistent homotopy type distance} $\dht$ to compare two
real valued functions defined on possibly different homotopy
equivalent topological
spaces. The underlying idea in the definition of $\dht$ is to measure
the minimal  shift that is necessary to apply to one of the two
functions in order that the  sublevel sets of the two functions become
homotopy equivalent.  This distance is interesting in connection
with persistent homology. Indeed, our main result states  that $\dht$
still provides an upper  bound for the bottleneck distance between the
persistence diagrams of the intervening functions. Moreover,
because homotopy equivalences are weaker than homeomorphisms,  this
implies a lifting of the standard stability results provided by the $L^\infty$ distance and the
natural pseudo-distance $\dnp$. From a different standpoint, we prove
that $\dht$ extends  the $L^\infty$ distance and $\dnp$ in two
ways. First, we show that, appropriately restricting the category of
objects to which $\dht$ applies, it can be made to coincide with the other two distances.   Finally, we show that $\dht$ has an interpretation in terms of interleavings that naturally places it in the family of distances used in persistence theory.\\

Keywords:   bottleneck distance between persistence diagrams, natural pseudo-distance, interleaving distance, stability, merge trees.\\

AMS Subject Classification:  55P10, 68U05, 18A23.
\end{abstract}

\section{Introduction}
Persistent homology has been developed as a theory to study topological properties of noisy or incomplete data, establishing itself as a fundamental tool for topological data analysis \cite{EdMo13,Gr08,Ca09}. Persistent homology  is characterized by an invariant called the persistence diagram (also known as the barcode) which summarizes both topological features of a dataset and their prominence. One of the reasons for the success of persistent homology is that persistence diagrams change continuously provided that the input dataset also changes continuously. This is known as the Stability Theorem of Persistence \cite{edelsStab}. Usually, (a) changes in persistence homology are measured via the bottleneck distance between persistence diagrams, (b) datasets are modeled as real valued functions defined on the same space, and (c) one uses the $L^\infty$ distance between functions to quantify their changes:

\begin{restatable}[Stability Theorem of Persistence \cite{edelsStab}]{thm}{thmstab}\label{thm:stab}
Let $X$ be a compact polyhedron. Then, for all continuous tame functions $\varphi_1,\varphi_2:X\rightarrow \mathbb{R}$, and all integers $k\geq 0$,
\begin{equation}\label{eq:stab}\db(D_k(\varphi_1),D_k(\varphi_2))\leq \|\varphi_1-\varphi_2\|_\infty.\end{equation}
\end{restatable}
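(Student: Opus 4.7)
Set $\varepsilon = \|\varphi_1 - \varphi_2\|_\infty$. The plan is to show that the persistence modules obtained by applying $H_k$ to the sublevel set filtrations of $\varphi_1$ and $\varphi_2$ are $\varepsilon$-interleaved as functors from $(\mathbb{R},\leq)$ to vector spaces, and then to deduce the bottleneck estimate from an algebraic stability principle for interleaved modules.

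First, I would set up the two persistence modules explicitly. For $i \in \{1,2\}$ and $t \in \mathbb{R}$, let $X_t^i = \varphi_i^{-1}((-\infty, t])$ and define $V_i(t) = H_k(X_t^i)$, with structure maps induced by the canonical inclusions $X_s^i \hookrightarrow X_t^i$ for $s \leq t$. The hypothesis forces the pointwise inequalities $\varphi_1 - \varepsilon \leq \varphi_2 \leq \varphi_1 + \varepsilon$, which translate into sublevel set inclusions
\[
X_t^1 \subseteq X_{t+\varepsilon}^2 \subseteq X_{t+2\varepsilon}^1, \qquad X_t^2 \subseteq X_{t+\varepsilon}^1 \subseteq X_{t+2\varepsilon}^2
\]
for every $t \in \mathbb{R}$. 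Applying $H_k$ yields homomorphisms $\alpha_t \colon V_1(t) \to V_2(t+\varepsilon)$ and $\beta_t \colon V_2(t) \to V_1(t+\varepsilon)$ that, together with the internal structure maps of each module, commute on the nose. This is exactly the data of an $\varepsilon$-interleaving between $V_1$ and $V_2$.

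Tameness enters next, to guarantee that $V_1$ and $V_2$ decompose as finite direct sums of interval modules, so that the persistence diagrams $D_k(\varphi_1)$ and $D_k(\varphi_2)$ are well-defined discrete multisets supported off the diagonal. The heart of the proof is then the algebraic stability statement: whenever two such modules are $\varepsilon$-interleaved, their diagrams admit a partial matching of $\ell^\infty$-cost at most $\varepsilon$, with unmatched points transported to the diagonal. Following the original Cohen-Steiner--Edelsbrunner--Harer argument, I would establish this via a Box Lemma, using the rank function of the persistence module and showing that rectangles of side $2\varepsilon$ in the upper half-plane trap birth-death pairs of one diagram within reach of pairs of the other; the matching itself is then assembled inductively on the diagonal-distance of the points, invoking Hall's theorem or a direct greedy construction.

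The main obstacle I anticipate is this combinatorial step that turns the algebraic interleaving into a genuine global matching: deriving the interleaving from the $L^\infty$-bound is essentially formal, and tameness only serves to make the diagrams honest finite objects, so all the geometric content concentrates in the Box Lemma and the ensuing matching argument. An alternative route, which I would keep as a backup, is to first prove the theorem for PL functions on a simplicial complex by vineyard/interpolation arguments (where one can track birth-death pairs continuously as one function is linearly deformed into the other), and then extend to continuous tame functions by a limiting argument that exploits tameness to control the limit of the diagrams.
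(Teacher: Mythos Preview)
The paper does not prove this statement at all: Theorem~\ref{thm:stab} is quoted from \cite{edelsStab} as a known background result (it is restated in Section~\ref{sec:overview} with the comment that it ``is standard'') and no argument for it is given anywhere in the paper. So there is no ``paper's own proof'' to compare your proposal against.

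That said, your outline is a correct and standard route to the classical stability theorem. The first half---deriving an $\varepsilon$-interleaving of the sublevel set persistence modules from the inclusions $X_t^1\subseteq X_{t+\varepsilon}^2\subseteq X_{t+2\varepsilon}^1$---is exactly the mechanism the paper itself uses (in Lemma~\ref{prop_interleaving}) to prove its own main result, Theorem~\ref{stability_theorem}, and the paper then invokes the algebraic stability theorem for $q$-tame modules from \cite{interleaving-brief} as a black box, just as you propose to do. Your anticipated ``main obstacle'' (the Box Lemma and the matching construction) is indeed where all the real work lies, and both the original CEH argument and the later algebraic-stability proofs you sketch are legitimate ways to carry it out. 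The PL/vineyard interpolation backup you mention is in fact closer in spirit to the original \cite{edelsStab} proof than the interleaving route.
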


Above, $\db$ stands for the bottleneck distance between persistence diagrams, and $D_k(\varphi)$ is the persistence diagram corresponding to the $k$-th homology of the sub-level set filtration of the function $\varphi$.

In order to lift the Stability Theorem of Persistence to the case when functions are defined on different, albeit homeomorphic spaces, one can resort to the natural pseudo-distance. If two continuous  functions $\varphi_X:X\to \R$, $\varphi_Y:Y\to \R$ are given on homeomorphic spaces $X$ and $Y$, then the natural pseudo-distance $\dnp$ \cite{DoFr04}  between them is defined by
\begin{equation}\label{eq:dnp}
\dnp\left((X,\varphi_X),(Y,\varphi_Y)\right):=\inf_{h} \|\varphi_1-\varphi_2\circ h\|_\infty\end{equation}
 where $h$ varies in the set of all homeomorphisms from the topological space $X$ onto the topological space $Y$.
 If two continuous  functions $\varphi_X:X\to \R$, $\varphi_Y:Y\to \R$ are given on non-homeomorphic spaces $X$ and $Y$, then we set $\dnp\left((X,\varphi_X),(Y,\varphi_Y)\right):=\infty$.
Then, since persistence diagrams of sub-level set filtrations are invariant under reparameterization, one obtains an improvement of inequality (\ref{eq:stab}) stated in Theorem~\ref{thm:stab}:
\begin{equation}\label{eq:stab-np}\db(D_k(\varphi_X),D_k(\varphi_Y))\leq \dnp\big((X,\varphi_X),(Y,\varphi_Y)\big).\end{equation}

However, the natural pseudo-distance is not suitable when we are interested in analyzing functions defined on non-homeomorphic topological spaces.

In this paper we construct a new extended pseudo-metric,  called the \emph{persistent homotopy type distance}, denoted $\dht$, to quantify the distance between real-valued functions defined on different spaces which is meaningful when the spaces are at least homotopy equivalent. In plain words,  the persistent homotopy type distance is a generalization of  the natural pseudo-distance in that it uses homotopy equivalences in place of homeomorphisms. This allows us to use the persistent homotopy type distance to obtain a new and stronger stability theorem for persistent homology, which is the main contribution of our paper:

\begin{restatable}{thm}{thmstabht}\label{stability_theorem}
Let $X$ and $Y$ be compact polyhedra, and $k$ be any non-negative integer. Let $\varphi_X:X\rightarrow \R$ and $\varphi_Y:Y\rightarrow \R$ be continuous functions. Then,
$$\db\big(D_k(\varphi_X),D_k(\varphi_Y)\big)\leq \dht\big((X,\varphi_X),(Y,\varphi_Y)\big).$$
\end{restatable}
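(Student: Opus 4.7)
The plan is to reduce the theorem to the algebraic stability theorem for interleaved persistence modules, using the interleaving interpretation of $\dht$ hinted at the end of the abstract. Fix any $\eps > \dht((X,\varphi_X),(Y,\varphi_Y))$. I would first use the definition of $\dht$ to extract a family of continuous maps
\[
\alpha_t\colon \varphi_X^{-1}\!\bigl((-\infty,t]\bigr) \to \varphi_Y^{-1}\!\bigl((-\infty,t+\eps]\bigr),\qquad
\beta_t\colon \varphi_Y^{-1}\!\bigl((-\infty,t]\bigr) \to \varphi_X^{-1}\!\bigl((-\infty,t+\eps]\bigr),
\]
indexed by $t\in\R$, commuting up to homotopy with the sublevel-set inclusions, and such that the compositions $\beta_{t+\eps}\circ\alpha_t$ and $\alpha_{t+\eps}\circ\beta_t$ are homotopic to the inclusions into the $(t+2\eps)$-sublevel sets, with the homotopies taking place inside those sublevel sets. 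Concretely, these should be produced by a pair of homotopy equivalences $h\colon X\to Y$ and $h'\colon Y\to X$ that approximately intertwine $\varphi_X$ and $\varphi_Y$ up to error $\eps$, together with appropriately level-controlled homotopies $h'\circ h\simeq \id_X$ and $h\circ h'\simeq \id_Y$.

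The second step is to apply singular homology $H_k$ with field coefficients: homotopic maps induce equal maps on homology, so the homotopy-commutative diagrams above become strictly commutative. The resulting diagrams express exactly an $\eps$-interleaving between the sublevel-set persistence modules $\mathbb{V}_X(t):=H_k\!\bigl(\varphi_X^{-1}((-\infty,t])\bigr)$ and $\mathbb{V}_Y(t):=H_k\!\bigl(\varphi_Y^{-1}((-\infty,t])\bigr)$. Because $X$ and $Y$ are compact polyhedra and $\varphi_X,\varphi_Y$ are continuous, both modules are q-tame, so the algebraic stability theorem for q-tame persistence modules (Chazal--de Silva--Glisse--Oudot) yields $\db(D_k(\varphi_X),D_k(\varphi_Y))\le \eps$. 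Letting $\eps$ decrease to $\dht$ gives the claimed inequality.

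The main obstacle is the first step: turning a homotopy equivalence that is $\eps$-close to intertwining the two functions into a sublevel-set interleaving. A map $h\colon X\to Y$ satisfying $\|\varphi_Y\circ h-\varphi_X\|_\infty\le\eps$ automatically sends $\{\varphi_X\le t\}$ into $\{\varphi_Y\le t+\eps\}$, but an arbitrary homotopy $h'\circ h\simeq\id_X$ can sweep through points where $\varphi_X$ exceeds $t+2\eps$ and thus fails to furnish the sublevel-set homotopy needed for an interleaving. The crux of the argument will be either to build such sublevel-set control directly into the definition of $\dht$, or to exhibit, after possibly reparameterizing or perturbing the given homotopies (using compactness of $X$ and $Y$), a representative whose trajectories stay inside the prescribed sublevel sets up to an error that can be absorbed by letting $\eps$ tend to $\dht$.
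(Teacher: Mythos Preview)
Your proposal is essentially the paper's own argument: produce an $\eps$-interleaving of the sublevel-set persistence modules from an $\eps$-homotopy equivalence (this is Lemma~\ref{prop_interleaving}), invoke q-tameness for continuous functions on compact polyhedra, and apply algebraic stability. The structure is right.

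However, the ``main obstacle'' you describe in your last paragraph is not an obstacle at all, and your uncertainty about it suggests you have not fully absorbed Definition~\ref{alpha-map}. An $\alpha$-homotopy $H:X\times[0,1]\to X$ between $g\circ f$ and $\id_X$ is \emph{by definition} required to satisfy $\varphi_X(H(x,t))\le \varphi_X(x)+2\alpha$ for every $t\in[0,1]$, not just for $t=0,1$. Hence the restriction of $H$ to $X^\delta\times[0,1]$ automatically lands in $X^{\delta+2\alpha}$, and this is exactly the sublevel-set homotopy you need to conclude that $g_{\delta+\alpha}\circ f_\delta$ and the inclusion $\iota^X_{\delta,\delta+2\alpha}$ induce the same map on $H_k$. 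No perturbation, reparametrization, or compactness argument is required; the control is built into the definition of $\dht$ from the outset. (The paper even contains Example~\ref{ex_d_star} showing that if one drops this instant-by-instant condition, the resulting weaker distance $d^*$ does \emph{not} bound $\db$.) Once you recognize this, your Step~1 becomes a one-line restriction argument, and the rest of your outline goes through exactly as in the paper.
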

We note that in the statement above $\dht$ becomes infinity when the underlying spaces above are not homotopy equivalent.

\subsection*{Variations on the basic definition of $\dht$}
In the definition (\ref{eq:dnp}) of the natural pseudo-distance above, one can in fact consider two interesting variations: on one hand, one can extend the class of considered functions from scalar functions \mbox{$\p_X:X\to \R$} to vector-valued functions $\bfp_X=(\p_i):X\to\R^n$,  with  \mbox{$\|\bfp_X\|_\infty:=\sup_{x\in X}\max_{1\le i\le n}|\p_i(x)|$}; on the other hand, one can  restrict from the category $\mathbf{H_0}$ (whose objects are all topological spaces endowed with $\R^n$-valued continuous functions and morphisms are all homeomorphisms between topological spaces)  to any subcategory $\mathbf{H}$ of $\mathbf{H_0}$ closed under inverse \cite{BiFlFa08}: for any $\bfp_X:X\to \R^n$, $\bfp_Y:X\to \R^n$, 
\begin{equation}\label{eq:dnp_sub}
\dnp^{\mathbf{H}}\left((X,\bfp_X),(Y,\bfp_Y)\right):=\inf_{h\in {hom_{\mathbf{H}}(X,Y)}} \|\bfp_1-\bfp_2\circ h\|_\infty\end{equation}
if there is a homeomorphism from $X$ to $Y$ in  $\mathbf{H}$, $\dnp^{\mathbf{H}}\left((X,\bfp_X),(Y,\bfp_Y)\right)=\infty$ otherwise. 

Generalizing the natural pseudo-distance to vector-valued functions permits to lift the stability of the interleaving distance of multidimensional persistence modules \cite{Lesnick} in much the same way as we lift the stability of the bottleneck distance in one-dimensional persistence. 

Restricting the set of homeomorphisms allows for the application of the natural pseudo-distance to cases when the desired invariance is not the one expressed by any homeomorphism, as shown in previous papers  \cite{Fr14,FrJa16}. For example, two monotonic functions $\p,\psi:[0,1] \rightarrow \mathbb{R}$ with the same set of extrema are equivalent under $\dnp$ (and therefore equivalent for standard persistent homology) when every homeomorphism from $[0,1]$ to $[0,1]$ is accepted. Thus, suitably restricting the set of acceptable homeomorphisms would permit distinguishing two such functions.

From a different perspective, since the group of all self-homeomorphisms of a topological  space, even a compact one, is not itself compact, the possibility of restricting the set of homeomorphisms is also motivated by the desire of working with compact groups. This would be useful for obtaining interesting theorems e.g. good finite approximations of the considered groups.

Analogously to the ability to specify a subcategory $\mathbf{H}$ in the case of the natural pseudo-distance, our proposal for a persistent homotopy type distance also permits specifying what constitutes suitable classes of homotopy equivalences, therefore allowing to select the class that is judged more relevant for a given application.

\subsection*{The homotopy type distance as an interleaving distance}
Starting with  \cite{interleaving,lesnick-phd}, and more recently with \cite{Lesnick,Bubenik,interleaving-brief}, a unifying look at all the metrics usually used to state the stability theorems of persistence has been proposed in terms of interleaving distances. Interleavings apply between pairs of functors from the category of ordered reals to any other category. Interleavings are given by pairs of natural transformations between each one of the functors and a \emph{shifted} version of the other functor.

The interleaving distance measures the smallest shift that allows the existence of an interleaving.  Since many distances considered in topological data analysis can be formulated in terms of interleavings, it is natural to ask whether the same holds true for the  persistent homotopy type distance.  A further contribution of this paper is a positive answer to this question.  Related recent work in this direction is that of Blumberg and Lesnick \cite{lesnick-hid}. In a related thread, we prove that, when restricted to merge trees, our homotopy type distance agrees with the  interleaving distance between merge trees of Morozov et al. \cite{Morozov}.

\subsection*{Organization of the paper}
After introducing the persistent homotopy type distance in Section~\ref{sec:math-setting}, we discuss its properties and give some examples. In Section~\ref{sec:stability-main}, we prove that the bottleneck distance between persistence diagrams is upper-bounded by the persistent homotopy type distance (Theorem~\ref{stability_theorem}).  In other words, we lift  the Stability Theorem of Persistence  (Theorem~\ref{thm:stab})  to functions defined on different spaces provided that  they are homotopy equivalent. In Section~\ref{sec:mergetrees}, we show that the interleaving distance between merge trees can be obtained as a special case of the persistence homotopy type distance. Then, Section~\ref{sec:interleaving} offers an interpretation of the persistent homotopy type distance as an interleaving distance more in general. Finally, a discussion section offers some thoughts on possible extensions.

\section{Mathematical setting}\label{sec:math-setting}

For any integer $n\ge 1$, let us endow $\R^n$ with the partial order $\preceq$ defined by setting, for any $\vecalpha,\vecbeta\in\R^n$, $\vecalpha=(\alpha_i)\preceq \vecbeta=(\beta_i)$ whenever $\alpha_i\le \beta_i$ for $i=1,\ldots, k$.  When $\vecalpha\preceq \vecbeta$ we also write $\vecbeta\succeq\vecalpha$. For $\vecalpha=(\alpha_i)\in\R^n$, we set $\|\vecalpha\|_\infty=\max_{1\le i\le k}|\alpha_i|$. Recall that  for a function $\bfp_X:X\to \R^n$, we set $\|\bfp_X\|_\infty=\sup_{x\in X} \|\bfp_X(x)\|_\infty$.
For $\alpha\in\R$, we denote by  $\alphadiag$ the diagonal element $(\alpha,\alpha,\ldots, \alpha)\in\R^n$.

Let us consider the category $\mathbf{S}$  such that: the objects  of $\mathbf{S}$ are all the pairs $(X,\bfp_X)$ where $X$ is a  topological space and $\bfp_X:X\to\R^n$ is a vector-valued continuous  function; the morphisms of $\mathbf{S}$ from an object $(X,\bfp_X)$ to another object $(Y,\bfp_Y)$ are all the continuous maps $f: X\to Y$ such that $\bfp_Y\circ f\preceq \bfp_X$. The composition of morphisms is the usual composition of maps and identity morphisms are identity maps.

\begin{defn}
For every $\vecalpha\in\R^n$, with $ \vecalpha\succeq \veczero$,  the {\em $\vecalpha$-shift functor} $(\cdot)_\vecalpha : \mathbf{S}
\rightarrow \mathbf{S}$ is defined as follows:   for every $(X,\bfp_X)$ in $ob(\mathbf{S})$,  $(X,\bfp_X)_\vecalpha = (X,\bfp'_X)$, where $\bfp'_X(x)=\bfp_X(x)-\vecalpha$ for every $x\in X$;   for every $f:(X,\bfp_X)\to (Y,\bfp_Y)$ in $hom({\mathbf S})$, $(f)_\vecalpha=f$ regarded as a morphism between $(X,\bfp_X)_\vecalpha$ and  $(Y,\bfp_Y)_\vecalpha$.
\end{defn}

Instead of $\mathbf{S}$,  we can restrict ourselves to  any sub-category $\mathbf{C}$ of $\mathbf{S}$ provided that $\mathbf{C}$ is closed with respect to the {\em $\vecalpha$-shift functor} for any $\vecalpha\in\R^n$. From now on we assume one such $\mathbf C$ is fixed.

\begin{defn}\label{alpha-map}
Let $(X,\bfp_X)$, $(Y,\bfp_Y)$ be objects in $\mathbf{C}$. Given $\vecalpha\succeq \veczero$ in $\R^n$, an \emph{$\vecalpha$-map} $f$ with respect to $(\bfp_X,\bfp_Y)$ is a  morphism in $\mathbf{C}$ from $(X,\bfp_X)$ to $(Y,\bfp_Y)_\vecalpha$, that is a continuous map $f:X\to Y$ such that $\bfp_Y\circ f\preceq \bfp_X+\vecalpha$.  Furthermore, given two $\vecalpha$-maps $f_1:X\to Y$ and $f_2:X\to Y$, an \emph{$\vecalpha$-homotopy} in $\mathbf{C}$ between   $f_1$ and $f_2$ with respect to the pair  $(\bfp_X,\bfp_Y)$ is a continuous map $H:X\times [0,1]\to Y$  such that
\begin{enumerate}
\item $f_1\equiv H(\cdot,0)$;
\item $f_2\equiv H(\cdot,1)$;
\item $H(\cdot,t)$ is an \emph{$\vecalpha$-map} in $\mathbf{C}$ with respect to the pair
    $(\bfp_X,\bfp_Y)$ for every $t\in [0,1]$.
\end{enumerate}
If an \emph{$\vecalpha$-homotopy} between the $\vecalpha$-maps $f_1$ and $f_2$  with respect to the pair
    $(\bfp_X,\bfp_Y)$ exists, we say that  $f_1$, $f_2$ are \emph{$\vecalpha$-homotopic with respect to $(\bfp_X,\bfp_Y)$}.
\end{defn}

In plain words,  an $\vecalpha$-homotopy between $\vecalpha$-maps is a homotopy that is a $\vecalpha$-map at every instant.

\begin{rem}
Any map $f:X\to Y$ that induces a morphism  $f:(X,\bfp_X)\rightarrow (Y,\bfp_Y)_\vecalpha$ with respect to  $(\bfp_X,\bfp_Y)$, also induces a morphism  $f:(X,\bfp_X)\rightarrow (Y,\bfp_Y)_\vecbeta$, still with respect to  $(\bfp_X,\bfp_Y)$,  for every $\vecbeta\succeq \vecalpha$. Therefore, an $\vecalpha$-map can  also be regarded as  a $\vecbeta$-map, for every $\vecbeta\succeq \vecalpha$.
\end{rem}

\begin{rem}
For any fixed $\vecalpha\succeq \veczero$, the $\vecalpha$-homotopy with respect to  $(\bfp_X,\bfp_Y)$ is an equivalence relation on $\vecalpha$-maps from $(X,\bfp_X)$ to $(Y,\bfp_Y)_\vecalpha$.
\end{rem}

We now introduce the relation of $\vecalpha$-homotopy equivalence between objects of ${\mathbf C}$. In spite of its name, in general it is not going to be an equivalence relation.

\begin{defn}\label{alpha-eqiv}
For every $\vecalpha\succeq \veczero$ in $\R^n$ and any two objects $(X,\bfp_X)$ and $(Y,\bfp_Y)$  in ${\mathbf C}$, we say that $(X,\bfp_X)$ and $(Y,\bfp_Y)$ are {\em $\vecalpha$-homotopy equivalent} in ${\mathbf C}$ if there exist  $\vecalpha$-maps $f:X\to Y$ and $g:Y\to X$ in ${\mathbf C}$, with respect to $(\bfp_X,\bfp_Y)$ and $(\bfp_Y,\bfp_X)$ respectively, such that the following properties hold:
\begin{itemize}
\item the map
$g\circ f:X\to X$ is $2\vecalpha$-homotopic to $\id_X$ with respect to $(\bfp_X,\bfp_X)$; \item  the map
$f\circ g:Y\to Y$ is $2\vecalpha$-homotopic to $\id_Y$ with respect to $(\bfp_Y,\bfp_Y)$.
\end{itemize}
If the two previous conditions hold, we say that $g$ (resp. $f$) is an \emph{$\vecalpha$-homotopy  inverse} of $f$ (resp. $g$) with respect to $(\bfp_X,\bfp_Y)$ (resp. $(\bfp_Y,\bfp_X)$), and that $(f,g)$ constitutes a pair of \emph{$\vecalpha$-homotopy equivalences} in $\mathbf{C}$ with respect to the pair $(\bfp_X,\bfp_Y)$.
\end{defn}

 We observe that in general the $\vecalpha$-homotopy inverse of $f$ with respect to the pair $(\bfp_X,\bfp_Y)$ is not unique.

We are now ready to define  the persistent homotopy type distance.

\begin{defn}\label{defdG}
For each $\alpha\in\R$, set $\alphadiag=(\alpha,\alpha,\ldots  ,\alpha)\in \R^n$.  For each pair $\big((X,\bfp_X),(Y,\bfp_Y)\big)\in ob(\mathbf{C})\times ob(\mathbf{C})$   we set
$$\Lambda^\mathbf{C} \big((X,\bfp_X),(Y,\bfp_Y)\big):=\big\{\alpha\geq 0: \mbox{$(X,\bfp_X)$ and $(Y,\bfp_Y)$ are $\alphadiag$-homotopy equivalent in $\mathbf{C}$}\big\},$$
and
$$\dht^{\mathbf C}((X,\bfp_X),(Y,\bfp_Y)):=\inf\Lambda^\mathbf{C} \big((X,\bfp_X),(Y,\bfp_Y)\big),$$
where we use the convention that the infimum over the empty set is $+\infty$.

$\dht^{\mathbf C}$ will be called the \emph{persistent homotopy type (pseudo-)distance}  on the category $\mathbf{C}$. When ${\mathbf{C}}$ is taken to be the whole ${\mathbf{S}}$, sometimes we simply denote $\dht^{\mathbf C}$ by $\dht$.
\end{defn}

\begin{prop}\label{prop:homotopequiv}
Let $(X,\bfp_X)$ and $(Y,\bfp_Y)$  be two objects in $\bf S$ such that $\bfp_X$ and $\bfp_Y$ are bounded, that is $\|\bfp_X\|_\infty, \|\bfp_Y\|_\infty<+\infty$. Then, \mbox{$\dht((X,\bfp_X),(Y,\bfp_Y))<\infty$} if and only if  $X$ and $Y$ are homotopy equivalent. 
\end{prop}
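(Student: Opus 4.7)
The plan is to prove the two directions separately, with the easy direction being necessity (finite distance implies homotopy equivalence) and the substantive direction being sufficiency, where boundedness of the functions is essential.

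For the forward direction, suppose $\dht((X,\bfp_X),(Y,\bfp_Y))<\infty$. Then there exists some $\alpha\ge 0$ for which $(X,\bfp_X)$ and $(Y,\bfp_Y)$ are $\alphadiag$-homotopy equivalent. By Definition~\ref{alpha-eqiv}, this supplies continuous maps $f:X\to Y$ and $g:Y\to X$ together with continuous maps $H_1:X\times[0,1]\to X$ and $H_2:Y\times[0,1]\to Y$ realizing a $2\alphadiag$-homotopy from $g\circ f$ to $\id_X$ and from $f\circ g$ to $\id_Y$. Forgetting the constraints involving $\bfp_X,\bfp_Y$, the pair $(f,g)$ is a homotopy equivalence in the usual topological sense, so $X$ and $Y$ are homotopy equivalent.

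For the converse, assume $X$ and $Y$ are homotopy equivalent, fix a homotopy equivalence $(f,g)$ with homotopies $H_1$ (from $g\circ f$ to $\id_X$) and $H_2$ (from $f\circ g$ to $\id_Y$), and set $M:=\max(\|\bfp_X\|_\infty,\|\bfp_Y\|_\infty)<+\infty$. The key observation is that, because both $\bfp_X$ and $\bfp_Y$ take values in a bounded region of $\R^n$, any continuous map between $X$ and $Y$ is automatically an $\vecalpha$-map provided $\vecalpha$ is large enough. Concretely, choose $\alpha:=\|\bfp_X\|_\infty+\|\bfp_Y\|_\infty$. Then for every $x\in X$ and every $i$,
\begin{equation*}
(\bfp_Y)_i(f(x))-(\bfp_X)_i(x)\;\le\;\|\bfp_Y\|_\infty+\|\bfp_X\|_\infty=\alpha,
\end{equation*}
so $f$ is an $\alphadiag$-map with respect to $(\bfp_X,\bfp_Y)$, and symmetrically $g$ is an $\alphadiag$-map with respect to $(\bfp_Y,\bfp_X)$. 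An analogous estimate applied slicewise to the homotopies gives, for every $t\in[0,1]$,
\begin{equation*}
(\bfp_X)_i(H_1(x,t))-(\bfp_X)_i(x)\;\le\;2\|\bfp_X\|_\infty\;\le\;2\alpha,
\end{equation*}
and likewise for $H_2$ with $\bfp_Y$, so $H_1$ and $H_2$ are $2\alphadiag$-homotopies. Therefore $(X,\bfp_X)$ and $(Y,\bfp_Y)$ are $\alphadiag$-homotopy equivalent in $\mathbf{S}$, which yields $\dht((X,\bfp_X),(Y,\bfp_Y))\le\alpha<+\infty$.

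There is no real obstacle here; the only point that needs attention is verifying that the constant $\alpha=\|\bfp_X\|_\infty+\|\bfp_Y\|_\infty$ is simultaneously large enough to make $f$ and $g$ into $\alphadiag$-maps and the homotopies $H_1,H_2$ into $2\alphadiag$-homotopies. This is precisely where boundedness of $\bfp_X$ and $\bfp_Y$ is used; the hypothesis cannot be dropped, since without it the gap $\bfp_Y\circ f-\bfp_X$ need not be uniformly bounded even when $f$ is a topological homotopy equivalence.
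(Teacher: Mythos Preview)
Your proof is correct and follows essentially the same approach as the paper. The only difference is cosmetic: the paper defines $\alpha$ as the supremum of the four ``exact'' quantities $\|\bfp_X-\bfp_Y\circ f\|_\infty$, $\|\bfp_X\circ g-\bfp_Y\|_\infty$, $\tfrac{1}{2}\sup_t\|\bfp_X-\bfp_X\circ H_1(\cdot,t)\|_\infty$, $\tfrac{1}{2}\sup_t\|\bfp_Y\circ H_2(\cdot,t)-\bfp_Y\|_\infty$, whereas you use the cruder but equally valid universal bound $\alpha=\|\bfp_X\|_\infty+\|\bfp_Y\|_\infty$; both yield a finite $\alpha$ with $\alpha\in\Lambda^{\mathbf S}$, which is all that is needed.
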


\begin{proof}
Let us assume that $X$ and $Y$ are homotopy equivalent. Then, there exist maps $f:X\to Y$ and \mbox{$g:Y\to X$} and homotopies $H:X\times [0,1]\to X$ and $G:Y\times [0,1]\to Y$  between $g\circ f$ and $\id_X$  and between $f\circ g$ and $\id_Y$, respectively.  Since $\bfp_X$ and $\bfp_Y$ are bounded functions, the numbers
$\alpha_X:=\|\bfp_X-\bfp_Y\circ f\|_\infty$, $\alpha_Y:=\|\bfp_X\circ g-\bfp_Y \|_\infty$, $\widehat{\alpha}_X := \frac{1}{2}\max_{t\in[0,1]}\|\bfp_X-\bfp_X\circ H(\cdot,t)\|_\infty$, and $\widehat{\alpha}_Y:=\frac{1}{2}\max_{t\in[0,1]}\|\bfp_Y\circ G(\cdot, t)-\bfp_Y\|_\infty$ are all finite. Thus,
$\alpha:=\sup\{\alpha_X,\alpha_Y,\widehat{\alpha}_X,\widehat{\alpha}_Y\}<\infty,$
 and  $\alpha\in \Lambda^\mathbf{S} \big((X,\bfp_X),(Y,\bfp_Y)\big)$ so that $\dht\le \alpha$.
Conversely, if $\dht((X,\bfp_X),(Y,\bfp_Y))<\infty$, there is some  $\alpha\in \R$ such that  $(X,\bfp_X)$ and $(Y,\bfp_Y)$ are $\alphadiag$-homotopy equivalent in $\mathbf{S}$ with $\alphadiag=(\alpha,\alpha,\ldots,\alpha)$. Hence, $X$ and $Y$ are homotopy equivalent.
\end{proof}

In this paper we will follow the convention that $\alpha+\infty=\infty+\alpha=\infty$ for every $\alpha\in \overline{\mathbb{R}}:=\R\cup \{\infty\}$.

We now prove that $\dht^{\mathbf C}$ is an extended pseudo-metric. To this aim we use the following lemma.

\begin{lem}\label{lemma1}
 Let $(X,\bfp_X)$, $(Y,\bfp_Y)$, $(Z,\bfp_Z)$ be three objects in $\bf C$.
 If $(f_1,g_1)$ is a pair of $\vecalpha_1$-homotopy equivalences with respect to  $(\bfp_X,\bfp_Y)$ in $\bf C$ and
 $(f_2,g_2)$ is a pair of $\vecalpha_2$-homotopy equivalences with respect to  $(\bfp_Y,\bfp_Z)$ also in $\bf C$, then
 $(f_2\circ f_1, g_1\circ g_2)$ is a pair of $(\vecalpha_1+\vecalpha_2)$-homotopy equivalences with respect to the pair $(\bfp_X,\bfp_Z)$ in $\bf C$.
\end{lem}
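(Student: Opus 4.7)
The plan is to verify, directly from Definition~\ref{alpha-eqiv}, the two defining conditions of an $(\vecalpha_1+\vecalpha_2)$-homotopy equivalence for the pair $(f_2\circ f_1,\, g_1\circ g_2)$. The first, easy step is to show that $f_2\circ f_1$ is an $(\vecalpha_1+\vecalpha_2)$-map with respect to $(\bfp_X,\bfp_Z)$: chaining $\bfp_Z\circ f_2\preceq \bfp_Y+\vecalpha_2$ and $\bfp_Y\circ f_1\preceq \bfp_X+\vecalpha_1$ yields $\bfp_Z\circ f_2\circ f_1\preceq \bfp_X+\vecalpha_1+\vecalpha_2$. The symmetric computation gives the analogous statement for $g_1\circ g_2$.

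The main task is to construct a $2(\vecalpha_1+\vecalpha_2)$-homotopy in $\mathbf{C}$ between $(g_1\circ g_2)\circ(f_2\circ f_1)$ and $\id_X$ with respect to $(\bfp_X,\bfp_X)$; the analogous construction on the $Z$-side is identical after swapping roles. Let $H_1:X\times[0,1]\to X$ be the given $2\vecalpha_1$-homotopy from $g_1\circ f_1$ to $\id_X$, and $H_2:Y\times[0,1]\to Y$ the given $2\vecalpha_2$-homotopy from $g_2\circ f_2$ to $\id_Y$. I would concatenate two homotopies: first, for $t\in[0,\tfrac12]$, push the interior $g_2\circ f_2$ toward $\id_Y$ by setting $K(x,t):=g_1\bigl(H_2(f_1(x),2t)\bigr)$; then for $t\in[\tfrac12,1]$, collapse the remaining $g_1\circ f_1$ to $\id_X$ by setting $K(x,t):=H_1(x,2t-1)$. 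The values agree at $t=\tfrac12$ (both equal $g_1\circ f_1$), so $K$ is continuous, and it interpolates from $g_1\circ g_2\circ f_2\circ f_1$ at $t=0$ to $\id_X$ at $t=1$.

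It remains to check that every time-slice $K(\cdot,t)$ is a $2(\vecalpha_1+\vecalpha_2)$-map in $\mathbf{C}$ with respect to $(\bfp_X,\bfp_X)$. For $t\in[\tfrac12,1]$ the slice is $H_1(\cdot,2t-1)$, which is a $2\vecalpha_1$-map and hence a $2(\vecalpha_1+\vecalpha_2)$-map since $\vecalpha_2\succeq\veczero$. For $t\in[0,\tfrac12]$ the slice is $g_1\circ H_2(\cdot,2t)\circ f_1$, and the key estimate chains the three available inequalities:
\[
\bfp_X\circ g_1\circ H_2(f_1(\cdot),2t)\ \preceq\ \bfp_Y\circ H_2(f_1(\cdot),2t)+\vecalpha_1\ \preceq\ \bfp_Y\circ f_1+2\vecalpha_2+\vecalpha_1\ \preceq\ \bfp_X+2\vecalpha_1+2\vecalpha_2,
\]
using in succession that $g_1$ is an $\vecalpha_1$-map, $H_2(\cdot,2t)$ is a $2\vecalpha_2$-map, and $f_1$ is an $\vecalpha_1$-map. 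The mirror-image argument, concatenating $f_2\circ H_1(g_2(\cdot),\cdot)\circ g_2$ with $H_2$, yields the required $2(\vecalpha_1+\vecalpha_2)$-homotopy between $(f_2\circ f_1)\circ(g_1\circ g_2)$ and $\id_Z$.

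The one point requiring a little care, rather than a deep obstacle, is the membership of the intermediate maps in the subcategory $\mathbf{C}$: each slice of $K$ is built by composing morphisms already present in $\mathbf{C}$ (the slices of $H_1,H_2$ and the maps $f_1,g_1$), so $\mathbf{C}$-membership is automatic from the subcategory axioms. Beyond this, the proof is essentially the standard concatenation-of-homotopies argument, made quantitative by the partial-order inequalities governing the shifts.
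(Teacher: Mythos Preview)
Your proof is correct and follows essentially the same approach as the paper: both establish that $f_2\circ f_1$ and $g_1\circ g_2$ are $(\vecalpha_1+\vecalpha_2)$-maps, and both build the required $2(\vecalpha_1+\vecalpha_2)$-homotopy by concatenating $g_1\circ H_2(f_1(\cdot),\cdot)$ with $H_1$. Your version is in fact more explicit than the paper's, supplying the inequality chain for the time-slices and addressing $\mathbf{C}$-membership, both of which the paper leaves implicit.
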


 \begin{proof}
By definition,  $f_2\circ f_1$ (resp. $g_1\circ g_2$) is an $(\vecalpha_1+\vecalpha_2)$-map with respect to $(\bfp_X,\bfp_Z)$ (resp. $(\bfp_Z,\bfp_X)$).
Again by definition,
a $2\vecalpha_1$-homotopy $H_1:X\times [0,1]\to X$ with respect to $(\bfp_X,\bfp_X)$ from $g_1\circ f_1$ to $\id_X$ and
a $2\vecalpha_2$-homotopy $H_2:Y\times [0,1]\to Y$ with respect to $(\bfp_Y,\bfp_Y)$ from $g_2\circ f_2$ to $\id_Y$ exist. Thus, we can define a map
$\bar H:X\times [0,1]\to X$ by setting
$\bar H(x,t):= \begin{cases} g_1\circ H_2(f_1(x),2t), & \mbox{if } t\in [0,1/2) \\ H_1(x,2t-1), & \mbox{if } t\in [1/2,1] \end{cases}.$
 $\bar H$ is a $(2\vecalpha_1+2\vecalpha_2)$-homotopy with respect to $(\bfp_X,\bfp_X)$ from $g_1\circ g_2\circ f_2\circ f_1$ to $\id_X$.
Similarly, we can define a $(2\vecalpha_1+2\vecalpha_2)$-homotopy  $(\bfp_Z,\bfp_Z)$ from $f_2\circ f_1\circ g_1\circ g_2$ to $\id_Z$.
This proves the claim.
 \end{proof}

\begin{rem}
The term \emph{extended pseudo-metric} means that the function $\dht^{\mathbf C}$ is a function defined on $ob(\mathbf{C})\times ob(\mathbf{C})$ such that, for every $(X,\bfp_X),(Y,\bfp_Y) \in ob({\mathbf C})$, it holds that $(i)$ $\dht^{\mathbf C}((X,\bfp_X),(Y,\bfp_Y))\in [0,\infty]$, $(ii)$
if $(X,\bfp_X)=(Y,\bfp_Y)$ then $\dht^{\mathbf C}((X,\bfp_X),(Y,\bfp_Y))=0$, $(iii)$  $\dht^{\mathbf C}$ satisfies the symmetry property, $(iv)$  $\dht^{\mathbf C}$ satisfies the triangle inequality.
\end{rem}

\begin{prop}\label{prop_pseudometric}
The function $\dht^{\mathbf C}:ob(\mathbf{C})\times ob(\mathbf{C}) \rightarrow \overline{\mathbb{R}}$ is an extended pseudo-metric.
\end{prop}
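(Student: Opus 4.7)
The plan is to verify the four conditions listed in the preceding remark, in order of increasing difficulty, leveraging Lemma~\ref{lemma1} for the triangle inequality.

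First I would observe that condition (i) is immediate: $\Lambda^{\mathbf C}\big((X,\bfp_X),(Y,\bfp_Y)\big)\subseteq [0,+\infty)$, and by our convention $\inf\emptyset=+\infty$, so $\dht^{\mathbf C}$ takes values in $[0,+\infty]$.

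For condition (ii), when $(X,\bfp_X)=(Y,\bfp_Y)$ I would exhibit the pair $(\id_X,\id_X)$ as a $\veczero$-homotopy equivalence with respect to $(\bfp_X,\bfp_X)$: the identity trivially satisfies $\bfp_X\circ\id_X\preceq \bfp_X+\veczero$, and the constant homotopy $H(x,t):=x$ is a $2\veczero=\veczero$-homotopy from $\id_X\circ\id_X$ to $\id_X$. Hence $0\in \Lambda^{\mathbf C}\big((X,\bfp_X),(X,\bfp_X)\big)$, which forces $\dht^{\mathbf C}((X,\bfp_X),(X,\bfp_X))=0$.

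For condition (iii), symmetry follows directly from the definition of $\vecalpha$-homotopy equivalence: the relation is stated symmetrically in the roles of $(X,\bfp_X)$ and $(Y,\bfp_Y)$ (if $(f,g)$ witnesses $\vecalpha$-homotopy equivalence with respect to $(\bfp_X,\bfp_Y)$, then $(g,f)$ does so with respect to $(\bfp_Y,\bfp_X)$). Consequently $\Lambda^{\mathbf C}\big((X,\bfp_X),(Y,\bfp_Y)\big)=\Lambda^{\mathbf C}\big((Y,\bfp_Y),(X,\bfp_X)\big)$, and taking infima yields $\dht^{\mathbf C}((X,\bfp_X),(Y,\bfp_Y))=\dht^{\mathbf C}((Y,\bfp_Y),(X,\bfp_X))$.

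The main step is (iv). Given three objects and any $\alpha_1\in \Lambda^{\mathbf C}\big((X,\bfp_X),(Y,\bfp_Y)\big)$, $\alpha_2\in \Lambda^{\mathbf C}\big((Y,\bfp_Y),(Z,\bfp_Z)\big)$, I pick witnessing pairs $(f_1,g_1)$ and $(f_2,g_2)$ of $\alphadiag_1$- and $\alphadiag_2$-homotopy equivalences respectively. Since $\alphadiag_1+\alphadiag_2=\vec{\boldsymbol{(\alpha_1+\alpha_2)}}$ is diagonal, Lemma~\ref{lemma1} yields that $(f_2\circ f_1, g_1\circ g_2)$ is a pair of $(\alpha_1+\alpha_2)\vec{\mathbf{1}}$-homotopy equivalences with respect to $(\bfp_X,\bfp_Z)$ in $\mathbf{C}$, so $\alpha_1+\alpha_2\in \Lambda^{\mathbf C}\big((X,\bfp_X),(Z,\bfp_Z)\big)$. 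Taking the infimum over all such $\alpha_1,\alpha_2$, and handling the trivial case where one of the two quantities is $+\infty$ via our additive convention, gives
$$\dht^{\mathbf C}\big((X,\bfp_X),(Z,\bfp_Z)\big)\le \dht^{\mathbf C}\big((X,\bfp_X),(Y,\bfp_Y)\big)+\dht^{\mathbf C}\big((Y,\bfp_Y),(Z,\bfp_Z)\big).$$
I expect no substantive obstacle beyond being careful with the diagonal-shift notation when invoking Lemma~\ref{lemma1}, and with the $+\infty$ convention when one of the distances is infinite.
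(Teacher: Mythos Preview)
Your proposal is correct and follows essentially the same route as the paper: both verify the four pseudo-metric axioms directly, using the identity pair for (ii), the evident symmetry of Definition~\ref{alpha-eqiv} for (iii), and Lemma~\ref{lemma1} to compose witnessing pairs for the triangle inequality. The only cosmetic difference is that the paper phrases step (iv) via an explicit $\eps$-argument (choosing $\alpha_i \le \dht^{\mathbf C}+\eps$ and letting $\eps\to 0$), whereas you argue by showing $\Lambda^{\mathbf C}\big((X,\bfp_X),(Y,\bfp_Y)\big)+\Lambda^{\mathbf C}\big((Y,\bfp_Y),(Z,\bfp_Z)\big)\subseteq \Lambda^{\mathbf C}\big((X,\bfp_X),(Z,\bfp_Z)\big)$ and passing to infima; these are equivalent formulations of the same computation.
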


\begin{proof}
We check that $\dht^{\mathbf C}$ satisfies the properties of an extended pseudo-metric.

\begin{enumerate}
\item 
By definition, $\dht^{\mathbf C}$ cannot take negative values.
\item $\dht^{\mathbf C}((X,\bfp_X),(X,\bfp_X))=0$ because the identity map of $X$, $\id_X$, belongs to $hom({\mathbf C})$ and $(\id_X,\id_X)$ is a pair of $\veczero$-homotopy equivalences with respect to $(\bfp_X,\bfp_X)$.
\item
The equality $\dht^{\mathbf C}((X,\bfp_X),(Y,\bfp_Y))=\dht^{\mathbf C}((Y,\bfp_Y),(X,\bfp_X))$ immediately follows from the symmetry of the definition of pairs of $\vecalpha$-homotopy equivalences.
\item Let $\bfp_X:X\to\R^n$, $\bfp_Y:Y\to\R^n$, $\bfp_Z:Z\to\R^n$ be three objects in our category $\mathbf{C}$. Let $\alphadiag=(\alpha,\alpha,\ldots,\alpha)\in\R^n$ with $\alpha\ge 0$. If either $(X,\bfp_X)$ is not $\alphadiag$-homotopy equivalent to $(Y,\bfp_Y)$, or  $(Y,\bfp_Y)$ is not $\alphadiag$-homotopy equivalent to $(Z,\bfp_Z)$ for any $\alpha\in\R$, then the definition of $\dht^{\mathbf C}$ implies that $\dht^{\mathbf C}((X,\bfp_X),(Y,\bfp_Y)) + \dht^{\mathbf C}((Y,\bfp_Y),(Z,\bfp_Z))=\infty$. In this case the inequality $\dht^{\mathbf C}((X,\bfp_X),(Y,\bfp_Y)) + \dht^{\mathbf C}((Y,\bfp_Y),(Z,\bfp_Z))\ge \dht^{\mathbf C}((X,\bfp_X),(Z,\bfp_Z))$ is trivially satisfied. Therefore, let us assume that $(f_1,g_1)$ is a pair of   $\alphadiag_1$-homotopy equivalences with respect to  $(\bfp_X,\bfp_Y)$ for some $\alpha_1\ge 0$, and $(f_2,g_2)$ is a pair of  $\alphadiag_2$-homotopy equivalences with respect to $(\bfp_Y,\bfp_Z)$ for some $\alpha_2\ge 0$. By definition of $\dht^{\mathbf C}$,
we can assume that
$\alpha_1\le \dht^{\mathbf C}((X,\bfp_X),(Y,\bfp_Y))+\eps$
and  $\alpha_2\le \dht^{\mathbf C}((Y,\bfp_Y),(Z,\bfp_Z))+\eps$ for an arbitrarily small $\eps>0$. We know from Lemma~\ref{lemma1} that
$(f_2\circ f_1, g_1\circ g_2)$ is a pair of $(\alphadiag_1+\alphadiag_2)$-homotopy equivalences with respect to $(\bfp_X,\bfp_Z)$.
It follows that
\begin{eqnarray*}
\lefteqn{\dht^{\mathbf C}((X,\bfp_X),(Z,\bfp_Z))\le \alpha_1+\alpha_2}\\
&\le& \dht^{\mathbf C}((X,\bfp_X),(Y,\bfp_Y))+ \dht^{\mathbf C}((Y,\bfp_Y),(Z,\bfp_Z))+2\eps.
\end{eqnarray*}
By taking the limit for $\eps$ tending to $0$, we obtain the triangle inequality
$$\dht^{\mathbf C}((X,\bfp_X),(Z,\bfp_Z))\le \dht^{\mathbf C}((X,\bfp_X),(Y,\bfp_Y))+ \dht^{\mathbf C}((Y,\bfp_Y),(Z,\bfp_Z)).$$
\end{enumerate}
\end{proof}

As we mentioned in the introduction,  our definition of the persistent homotopy type distance is meant to be a generalization of the natural pseudo-distance. The next proposition gives a relationship between these two distances when we compare functions on two homeomorphic spaces, whereas Example \ref{exNP} proves that we can read the natural pseudo-distance $\dnp$ into the persistent homotopy type distance $\dht^{\mathbf C}$ when we suitably restrict the underlying category.

\begin{prop}\label{rem-bound}
Let  $(X,\bfp_X)$ and $(Y,\bfp_Y)$ be objects in $\mathbf{S}$ where $X$ and $Y$ are homeomorphic. Then,
$$\dht((X,\bfp_X),(Y,\bfp_Y)) \leq \dnp\left((X,\bfp_X),(Y,\bfp_Y)\right).$$
 In particular, if $X=Y$, and $(X,\bfp_1),(X,\bfp_2)$ are objects in $\bf S$, then
$$\dht\big((X,\bfp_1),(X,\bfp_2)\big) \le\|\bfp_1-\bfp_2\|_{\infty}.$$
\end{prop}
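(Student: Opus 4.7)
The plan is to show that every homeomorphism $h:X\to Y$ canonically yields a pair of $\alphadiag$-homotopy equivalences for $\alpha=\|\bfp_X-\bfp_Y\circ h\|_\infty$, so that taking the infimum of such $\alpha$ over all homeomorphisms (which is $\dnp$) bounds the infimum over pairs of homotopy equivalences (which is $\dht$). The second inequality will follow from the first by taking $h=\id_X$.

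First I would dispose of the trivial case: if $\dnp((X,\bfp_X),(Y,\bfp_Y))=\infty$ there is nothing to prove, so assume $X$ and $Y$ are homeomorphic, fix any homeomorphism $h:X\to Y$, and set $\alpha:=\|\bfp_X-\bfp_Y\circ h\|_\infty$. It suffices to show that $(X,\bfp_X)$ and $(Y,\bfp_Y)$ are $\alphadiag$-homotopy equivalent in $\mathbf{S}$, because then $\dht\le\alpha$ and taking the infimum over $h$ gives the claim.

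Next I would check the $\alphadiag$-map condition for both $h$ and $h^{-1}$. The definition of $\alpha$ gives coordinate-wise $|\p_i^X-\p_i^Y\circ h|\le\alpha$ for each $i$, hence $\bfp_Y\circ h\preceq \bfp_X+\alphadiag$, so $h$ is an $\alphadiag$-map with respect to $(\bfp_X,\bfp_Y)$. Substituting $x=h^{-1}(y)$ in the same pointwise estimate yields $\|\bfp_X\circ h^{-1}-\bfp_Y\|_\infty\le\alpha$, whence $\bfp_X\circ h^{-1}\preceq\bfp_Y+\alphadiag$, so $h^{-1}$ is an $\alphadiag$-map with respect to $(\bfp_Y,\bfp_X)$.

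Then I would supply the required $2\alphadiag$-homotopies. Since $h$ is a homeomorphism, $h^{-1}\circ h=\id_X$ and $h\circ h^{-1}=\id_Y$ on the nose, so the constant homotopies $H(x,t):=x$ and $G(y,t):=y$ connect these compositions to the identities. Each slice is the identity, which is obviously a $\veczero$-map, hence a $2\alphadiag$-map, with respect to $(\bfp_X,\bfp_X)$ and $(\bfp_Y,\bfp_Y)$ respectively. Thus $(h,h^{-1})$ is a pair of $\alphadiag$-homotopy equivalences in $\mathbf{S}$, and the first inequality follows. Specializing to $X=Y$ and $h=\id_X$ immediately gives $\dht((X,\bfp_1),(X,\bfp_2))\le\|\bfp_1-\bfp_2\|_\infty$.

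There is no real obstacle here; the proposition is essentially a direct unfolding of definitions. The only point requiring a line of verification is the equivalence between $\|\bfp_X-\bfp_Y\circ h\|_\infty$ and $\|\bfp_X\circ h^{-1}-\bfp_Y\|_\infty$, which uses that $h$ is a bijection.
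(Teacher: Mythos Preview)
Your proof is correct and follows essentially the same route as the paper: dispose of the case $\dnp=\infty$, take a homeomorphism with $\|\bfp_X-\bfp_Y\circ h\|_\infty\le\alpha$, observe that $h$ and $h^{-1}$ are $\alphadiag$-maps, and use the constant homotopies (which are $\veczero$-homotopies, hence $2\alphadiag$-homotopies) to conclude that $(h,h^{-1})$ is a pair of $\alphadiag$-homotopy equivalences. Your write-up is in fact slightly more explicit than the paper's, spelling out the verification that $h^{-1}$ is an $\alphadiag$-map via the substitution $x=h^{-1}(y)$ and making the specialization $h=\id_X$ for the second inequality explicit.
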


\begin{proof}
If $\dnp\left((X,\bfp_X),(Y,\bfp_Y)\right)=\infty$ then there is nothing to prove. If not, then there exist $\alpha\geq 0$ and
a homeomorphism $f:X\to Y$ such that  $\|\bfp_X- \bfp_Y\circ f\|_\infty\le\alpha$. Thus, $f$ and $f^{-1}$ are  $\alphadiag$-maps, and  $f^{-1}\circ f$ and $f\circ f^{-1}$ are $\veczero$-homotopic, and hence $2\alphadiag$-homotopic, with $\alphadiag=(\alpha,\alpha,\ldots,\alpha)$  to $\id_X$ and $\id_Y$, respectively, by constant homotopies. Therefore, $(f,f^{-1})$ is a pair of  $\alphadiag$-homotopy equivalences for $(\bfp_X,\bfp_Y)$, implying that $(X,\bfp_X)$ and $(Y,\bfp_Y)$ are $\alphadiag$-homotopy equivalent. Thus, $\dht\le \dnp$.
\end{proof}

In the next few sections we will compare the persistent homotopy type distance with other metrics widely used in the topological data analysis literature to measure the perturbations in the input functions. In particular,  in Section~\ref{sec:interleaving} we will show that the persistent homotopy type distance can be represented as an interleaving type distance. For the moment, however, we  proceed with the study of the persistent homotopy type distance using the current  definition.

\subsection{Examples}
We now show how to compute the persistent homotopy type distance in some simple cases. We take $\mathbf{C}={\mathbf S}$ and denote $\dht^{\mathbf S}$ simply by $\dht$. For the sake of simplicity, we take $n=1$.

The first example pertains to the case when one can retract a given space $X$ to a subset $A$ in such a way that the function values do not increase.

\begin{prop}\label{prop:retract} Let $(X,\varphi_X)$ be an object in $\mathbf{S}$ and let $A$ be a subspace of $X$ such that there exists a  deformation retract $F:X\times[0,1]\rightarrow X$ of $X$ onto $A$ with the property that $\varphi_X(F(x,t))\leq \varphi_X(x)$ for all $x\in X$ and $t\in[0,1]$. Then,
$$d_{\mathrm{HT}}\big((X,\varphi_X),(A,\varphi_X|_A)\big)=0.$$
\end{prop}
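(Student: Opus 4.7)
\medskip

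The plan is to show that $(X,\varphi_X)$ and $(A,\varphi_X|_A)$ are actually $\vec{0}$-homotopy equivalent, which immediately gives $\dht\big((X,\varphi_X),(A,\varphi_X|_A)\big)=0$ (since $0\in\Lambda^{\mathbf S}$).

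The natural candidates for the pair of maps are $r:X\to A$, $r(x):=F(x,1)$, and the inclusion $i:A\hookrightarrow X$. First I would check that these are $\vec{0}$-maps. For $i$, the condition $\varphi_X\circ i\preceq \varphi_X|_A$ is an equality. For $r$, the required inequality $\varphi_X|_A\circ r\preceq \varphi_X$ reads $\varphi_X(F(x,1))\le \varphi_X(x)$, which is the assumption at $t=1$.

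Next, I would verify the two homotopy conditions in Definition~\ref{alpha-eqiv} with $\vecalpha=\vec{0}$, so that $2\vecalpha=\vec 0$ as well. The composition $r\circ i$ equals $\id_A$ because $F$ restricts to the identity on $A$ (deformation retract), so a constant homotopy works and is trivially a $\vec 0$-homotopy with respect to $(\varphi_X|_A,\varphi_X|_A)$. For the other composition $i\circ r=F(\cdot,1)$, I would define
\[
H:X\times[0,1]\to X, \qquad H(x,t):=F(x,1-t).
\]
Then $H(\cdot,0)=F(\cdot,1)=i\circ r$ and $H(\cdot,1)=F(\cdot,0)=\id_X$, so the boundary conditions of Definition~\ref{alpha-map} are satisfied. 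The requirement that $H(\cdot,t)$ is a $\vec 0$-map with respect to $(\varphi_X,\varphi_X)$ for each $t$ amounts to $\varphi_X(F(x,1-t))\le \varphi_X(x)$, which is precisely the hypothesis evaluated at the parameter $1-t\in[0,1]$.

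There is no real obstacle here: the whole argument is a matter of repackaging the hypothesis $\varphi_X\circ F(\cdot,t)\le \varphi_X$ so that it plays the role of the $\vec{0}$-map condition at every stage. The only mild subtlety is noticing that the same inequality simultaneously (i) makes $r$ a $\vec 0$-map and (ii) makes the reversed deformation $F(\cdot,1-t)$ a $\vec 0$-homotopy, so that no positive shift is ever needed. Concluding, $(r,i)$ is a pair of $\vec 0$-homotopy equivalences with respect to $(\varphi_X,\varphi_X|_A)$, hence $0\in\Lambda^{\mathbf S}\big((X,\varphi_X),(A,\varphi_X|_A)\big)$, and therefore $\dht\big((X,\varphi_X),(A,\varphi_X|_A)\big)=0$.
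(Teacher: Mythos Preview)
Your proof is correct and is precisely the unpacking of what the paper means by ``follows directly from the definition of the homotopy type distance'': the retraction $r=F(\cdot,1)$ and the inclusion $i$ form a pair of $\vec 0$-homotopy equivalences, with the hypothesis $\varphi_X(F(x,t))\le\varphi_X(x)$ supplying the $\vec 0$-map condition both for $r$ and for the homotopy $F$ at every instant. There is nothing to add; your argument is just the detailed version of the paper's one-line proof.
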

\begin{proof}
The proof follows directly from the definition of the homotopy type distance.
\end{proof}
As an immediate corollary, we obtain:
\begin{ex}\label{prop:ex-contractible0}
Let $X$ be a contractible space and let $z\in X$. For fixed $c\in\R$, let $\p_c:X\to \R$ denote the constant function $\p_c(x)=c$ on $X$. Simply denote again by $c$ the constant function equal to $c$ on $\{z\}$.   It holds that $\dht((X,\p_c),(\{z\},c))= 0$.
\end{ex}

In  the next two examples, we  show that $\dht$ may be different from $\dnp$ even when the spaces are homeomorphic.

\begin{ex}\label{ex:band}  Let  $X$ be the band obtained by gluing without any twist two opposite sides  of a rectangle $R$, and  $Y$ the band obtained by gluing the same
sides of $R$ after applying a complete twist (i.e. a torsion  of $2\pi$ radians). Assume that the glued sides have length equal to 2 and that  $X$ and $Y$ are embedded into $\R^3$ as follows (see Figure~\ref{fig:band}):
\begin{figure}
\begin{center}\includegraphics[width=0.5\textwidth]{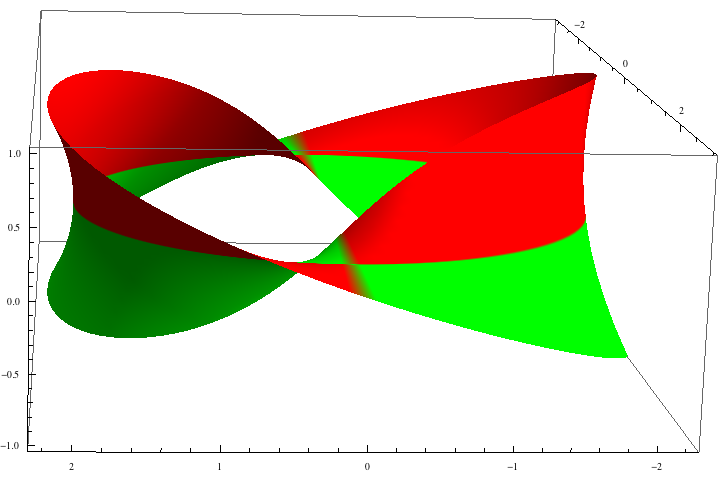}\end{center}
\caption{The twisted band $Y$ used in Example~\ref{ex:band}. Colors are as follows: the red region corresponds to points with non-negative $z$-coordinate whereas the green region is the complement, that is, it corresponds to those points with negative $z$-coordinate.}
\label{fig:band}
\end{figure}

$$X: \left\{\begin{array}{l}x(u,v)= 2\cos u\\
y(u,v)= 2\sin u\\
z(u,v)= v\end{array}\right.
\  \  \
Y: \left\{\begin{array}{l}x(u,v)= \left(2+v \sin u\right)\cos u\\
y(u,v)= \left(2+v \sin u\right)\sin u\\
z(u,v)= v\cos u\end{array}\right.$$
with $0\le u<2\pi$ and $-1 \le v \le 1$. Moreover, let $\p_X:X\to \R$ be defined by $\p_X(x,y,z)=z$ and  similarly $\p_Y:Y\to\R$  be defined by $\p_Y(x,y,z)=z$. This way  the centerlines of both $X$ and $Y$ coincide with the curve $C = \{(x, y, z) \in \R^3 : x^2 + y^2 = 4, z = 0\}$ and $\p_X$ takes values in  $\{-1,1\}$ at every boundary point of $X$, while $\p_Y$ continuously varies in $[-1,1]$ for every boundary point of $Y$. We claim that
\begin{equation}\label{eq:dht-ex}
\dht((X,\p_X),(Y,\p_Y))= 1,
\end{equation}
and
\begin{equation}\label{eq:dnp-ex}
\dnp((X,\p_X),(Y,\p_Y))=2.
\end{equation}


In order to establish (\ref{eq:dnp-ex}) note that for any homeomorphism $f:X\to Y$ one has that $\|\p_X-\p_X\circ f\|_\infty= 2$ simply because $f$ must take boundary points to
boundary points.
In order to establish (\ref{eq:dht-ex}), defining the retractions onto $C$ given by $r_X:X\to C$ by $r_X(x,y,z)=(x,y,0)$ and   $r_Y:Y\to C$ by $r_Y((2 + v \sin u) \cos u,(2 + v \sin u) \sin u,v \cos u)=(2\cos u,2\sin u,0)$, and the inclusions $i_{C,X}$ and $i_{C,Y}$ of $C$ into $X$ and $Y$ respectively,  one immediately checks that $(i_{C,Y}\circ r_X, i_{C,X}\circ r_Y)$ is a pair of  1-homotopy equivalences with respect to $(\p_X,\p_Y)$. This means that $\dht((X,\p_X),(Y,\p_Y))\leq 1$.
On the other hand, we claim that by Theorem \ref{stability_theorem} for $k=1$ we have that $\dht((X,\p_X),(Y,\p_Y))\geq 1$. Indeed, notice that $D_1(\varphi_X) = \{(-1,\infty)\}$ whereas $D_1(\varphi_Y) = \{(0,\infty)\},$ see Figure \ref{fig:band}.  It then follows that the bottleneck distance (see Section \ref{sec:stability}) between these diagrams satisfies $d_B(D_1(\varphi_X),D_1(\varphi_Y)) = 1$, which via  Theorem \ref{stability_theorem} implies our claim.

\end{ex}

\begin{ex}[Lens spaces]
Let $X=Y$ be the disjoint union of the lens spaces $L(7,1)$ and $L(7,2)$. Define $\p_X:X\to \R$ by setting $\p_{X|L(7,1)}\equiv 0$ and $\p_{X|L(7,2)}\equiv 1$, and define $\p_Y:Y\to \R$ by setting $\p_{Y|L(7,1)}\equiv 1$ and $\p_{Y|L(7,2)}\equiv 0$. Then, it holds that $\dht((X,\p_X),(Y,\p_Y))=0$  because $L(7,1)$ and $L(7,2)$ are homotopy equivalent but not homeomorphic \cite{SeifertThrelfall}, whereas $\dnp((X,\p_X),(Y,\p_Y))=1$.
\end{ex}

Finally, we consider another example which shows that $\dnp$ and $\dht$ can in fact be the same.
\begin{ex}
Let $M$ be any closed connected oriented manifold. Consider $X=Y=M$ and $\varphi_Y=\varphi_c$, the constant function equal to $c$. In this case $\dnp(\varphi_X,\varphi_c) = \|\varphi_X-c\|_\infty$. We claim that in this case $\dht(\varphi_X,\varphi_c)=\dnp(\varphi_X,\varphi_c)$. Assume that $\alpha\geq 0$ is such that there exists a pair $(f,g)$ a pair of $\alpha$-homotopy equivalences with respect to the pair $(\varphi_X,\varphi_c)$. We will prove that $\alpha\geq \|\varphi_X-c\|_\infty$. Note that from $c=\varphi_c(f(x))\leq \alpha + \varphi_X(x)$ for $x\in M$ we have that $\alpha\geq c-\min \varphi_X.$ Now, since $g\circ f$ and $\mathrm{id}_M$ are homotopic, their degrees are the same, therefore $|\mathrm{deg}(f)| = |\mathrm{deg}(g)|=1$. Hence, both $f$ and $g$ are surjective. From the condition $\varphi_X(g(x))\leq \alpha + \varphi_c(x) = \alpha+c$ for all $x\in M$ we obtain that $\alpha\geq \max \varphi_X-c.$ Hence, $\alpha\geq \max\big(c-\min \varphi_X,\max \varphi_X-c\big) = \|\varphi_X-c\|_\infty.$ The fact that $\dht\leq \dnp$ (Proposition \ref{rem-bound}) yields the claim.
\end{ex}

\subsection{Comments on Definition~\ref{defdG}}

We now consider three questions that may naturally arise:

\begin{itemize}
\item  What if we simplify the definition of $\alpha$-homotopies given in Definition~\ref{alpha-map} by removing the condition about $H$ being an $\alpha$-map at each instant, requiring only it to be an $\alpha$-map for $t=0$ and $t=1$? And, analogously, what if we remove the condition about an $\alpha$-homotopy in a subcategory ${\mathbf C}$ of ${\mathbf S}$ to be a morphism in ${\mathbf C}$ at each instant?
\item Would it be possible to define $\dht^{\mathbf C}$ via a minimum instead of an infimum?
\item Is $\dht^{\mathbf C}$ actually only an extended pseudo-metric or rather an extended metric?
\end{itemize}

We will answer this questions by means of examples,  taking $n=1$, i.e. real valued functions, for the sake of simplicity. Moreover, when possible, we take ${\mathbf C}={\mathbf S}$, and in such cases we simply write $\dht$ instead of $\dht^{\mathbf S}$.

As for the first issue, the definition of $\alpha$-homotopy we have given may seem more complex than necessary.  One could think of removing the condition about being an $\alpha$-map at each instant, maintaining only the condition that it be an $\alpha$-map for $t=0$ and $t=1$. Unfortunately, the new metric $d^*$ that we would obtain from this simplified definition of $\alpha$-homotopies would not give an upper bound for the bottleneck distance in persistent homology. In particular, the vanishing of
$d^*$ would not imply that the considered sublevel-set persistent homologies are the same. This is shown in the following example, proving that the analogue of Theorem~\ref{stability_theorem}  for $d^*$  would not hold.

\begin{figure}[htbp]
\begin{center}
\includegraphics[width=13cm]{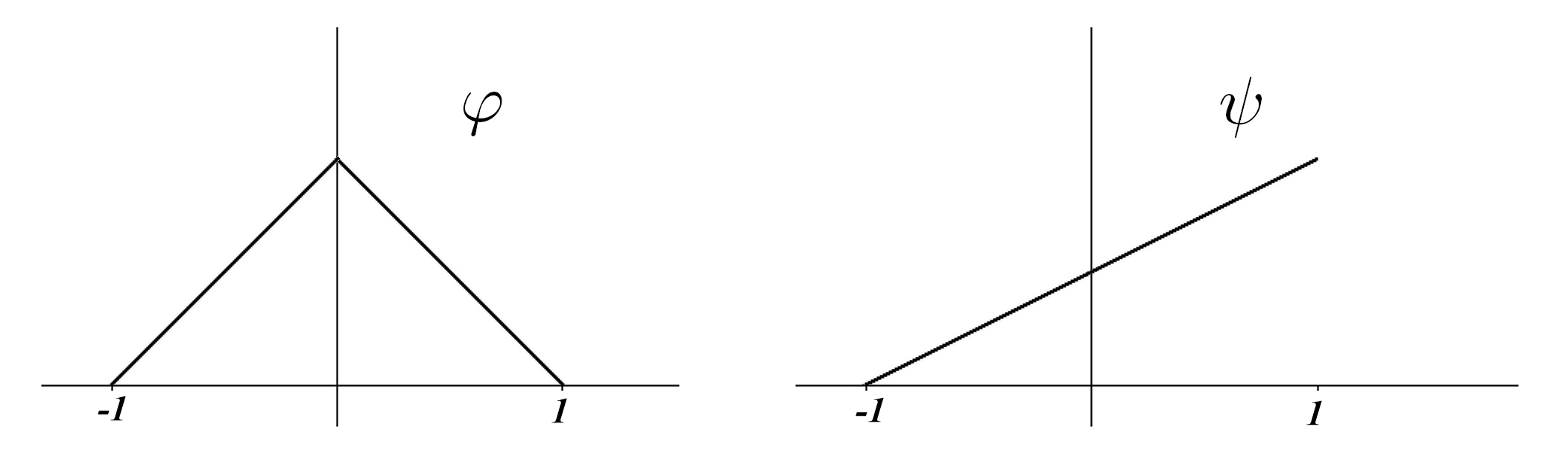}
\end{center}
\caption{The functions considered in Example~\ref{ex_d_star}. In this case $d^*(\varphi,\psi)=0$ but $\dht(([-1,1],\varphi),([-1,1],\psi))> 0$. }\label{fig1Sec2}
\end{figure}

\begin{ex}\label{ex_d_star}
 Define $\varphi,\psi:[-1,1]\to\R$ by  $\varphi(x)=1-|x|$ and $\psi(x)=(1+x)/2$ (see Figure~\ref{fig1Sec2}).
We also define $f,g:[-1,1]\to [-1,1]$ by setting $f(x)=1-2|x|$ and $g(x)=(x-1)/2$.
We have that $\psi \circ f(x)=\frac{1+(1-2|x|)}{2}=1-|x|=\varphi(x)$ and
$\varphi \circ g(x)=1-\left|\frac{x-1}{2}\right|=1-\frac{1-x}{2}=\frac{1+x}{2}=\psi(x)$, so that the maps $f$ and $g$ are $0$-maps with respect to $(\varphi,\psi)$ and $(\psi,\varphi)$, respectively. Furthermore,
$g \circ f(x)=\frac{(1-2|x|)-1}{2}=-|x|$ and
$f \circ g(x)=1-2\left|\frac{x-1}{2}\right|=1-2\left(\frac{1-x}{2}\right)=x$. It follows that  $f\circ g$ equals the identity, whereas $g \circ f$ is homotopic to the identity via the homotopy $H(x,t):=(t-1)|x|+tx$). Note that $H(\cdot,0)$ and $H(\cdot,1)$ are $0$-maps with respect to $(\varphi,\varphi)$.
As a consequence, $d^*(\varphi,\psi)=0$. Now we can observe that the sublevelset persistent homologies of $\varphi$ and $\psi$ are clearly different from each other, and hence the corresponding bottleneck distance is positive.
This can be easily seen by checking that $1$ is a homological critical value for $\varphi$, but not for $\psi$.
Therefore, $d^*$ is not an upper bound for the bottleneck distance.
On the contrary, we shall prove  in Section~\ref{sec:stability} that that property holds for $\dht$
(Theorem~\ref{stability_theorem}). This fact leads us to prefer the definition of $\alpha$-homotopy, and consequently of $\dht$, that we have presented.
\end{ex}

Analogously,  the following Example~\ref{comb} shows the difference between asking an $\alpha$-homotopy to be a morphism in $\mathbf{C}$ rather than in $\mathbf S$ at each instant.
\begin{ex}\label{comb}
Denoting by $2\Z$ the even integers and by $2\Z+1$ the odd integers, let $X=\{0\}\times\R\cup [0,1]\times 2\Z$ and $Y=\{0\}\times\R\cup [0,1]\times (2\Z+1)$ be two subsets of $\R^2$. Let $\p_X(s,t)=t$ and $\p_Y(s,t)=t$. Taking ${\mathbf C}$ the subcategory of $\mathbf S$ whose morphisms are homeomorphisms, it holds that $\dht^{\mathbf C}((X,\p_X),(Y,\p_Y))=1$ whereas $\dht((X,\p_X),(Y,\p_Y))=0$. Notice that removing the request for an $\alpha$-homotopy in $\mathbf C$ to be a homeomorphism at each instant, we would obtain $\dht^{\mathbf C}((X,\p_X),(Y,\p_Y))=0$ by using the $0$-maps $f:X\to Y$ $f(s,t)=(s,t-1)$ and $g:Y\to X$, $g(s,t)=(s,t-1)$ shown in Figure~\ref{fig2Sec2}.
\end{ex}
\begin{figure}[htbp]
\begin{center}
\includegraphics[height=4cm]{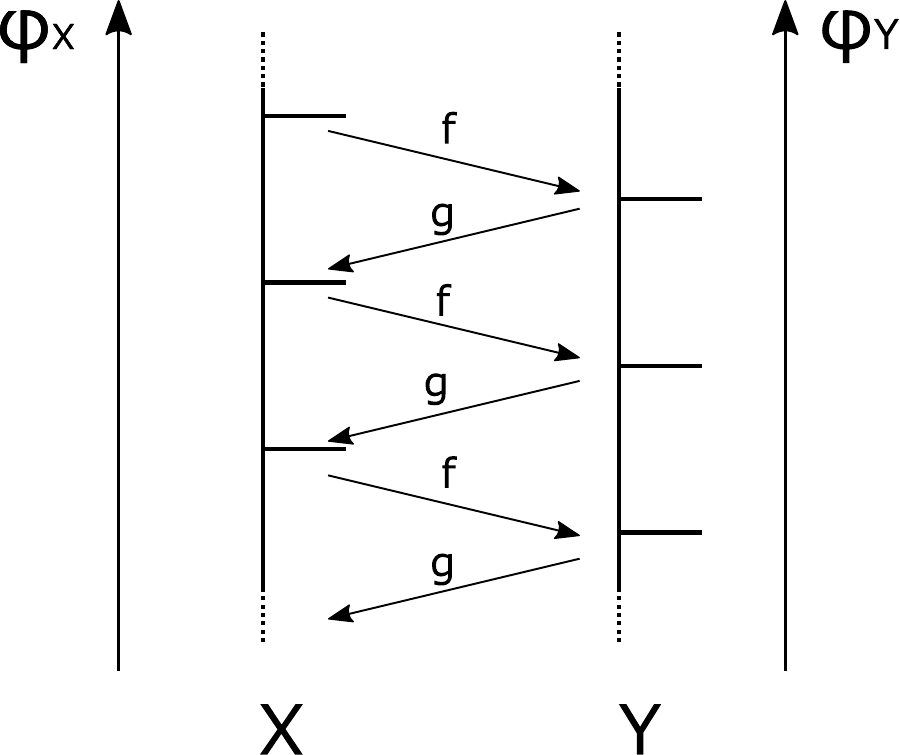}
\end{center}
\caption{The objects and maps considered in Example~\ref{comb}.}\label{fig2Sec2}
\end{figure}

As for the second issue, the use of an infimum instead of a minimum  is necessary, as the following example shows.
The same example shows that $\dht$ is not an extended metric, but only an extended pseudo-metric, thus clarifying the third issue.

\begin{figure}[htbp]
\begin{center}
\includegraphics[width=13cm]{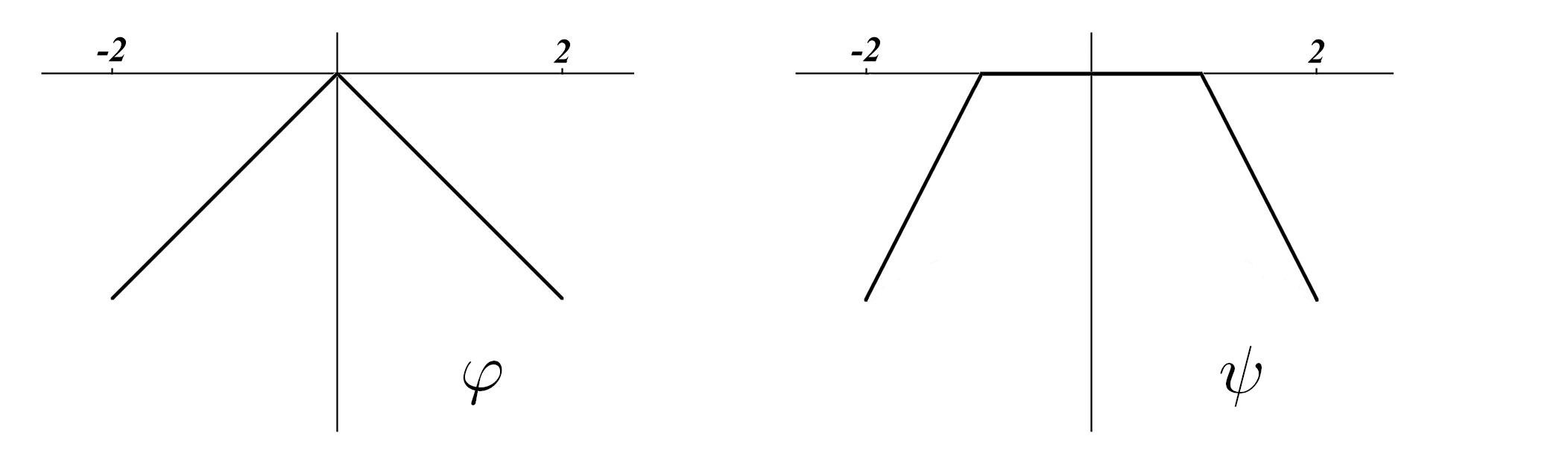}
\caption{The functions considered in Example~\ref{ex_inf}. In this case the infimum of the values $\alpha>0$ such that a pair of $\alpha$-homotopy equivalences with respect to $(\varphi,\psi)$ exists is $0$. However, no pair of $0$-homotopy equivalences with respect to $(\varphi,\psi)$ exists.}\label{fig3Sec2}
\end{center}
\end{figure}

\begin{ex}\label{ex_inf}
Define $\varphi,\psi:[-2,2]\to\R$ by setting $\varphi(x)=-|x|$, $\psi(x)=0$ for $|x|\le 1$ and $\psi(x)=2-2|x|$ for $|x|>1$  (see Figure~\ref{fig3Sec2}).
We claim that, for every small enough $\alpha>0$ in $\R$, we can find a pair of $\alpha$-homotopy equivalences $(f_\alpha,f_\alpha^{-1})$ with respect to $(\varphi,\psi)$. As a consequence, $\dht(([-2,2],\varphi),([-2,2],\psi))=0$. In order to show our claim, let us take $0<\alpha<2$ and consider the homeomorphism $f_\alpha:[-2,2]\to [-2,2]$  defined by setting $$f_\alpha(x):=
\begin{cases} \frac{1}{2-\alpha}x+\frac{2\alpha-2}{2-\alpha}, & \mbox{if } -2\le x< -\alpha \\
\frac{1}{\alpha}x, & \mbox{if }  -\alpha\le x< \alpha \\
\frac{1}{2-\alpha}x-\frac{2\alpha-2}{2-\alpha}, & \mbox{if }  \mbox{if } \alpha\le x\le 2.
\end{cases}$$
Observe that $f_\alpha([-2,-\alpha])=[-2,-1]$, $f_\alpha([-\alpha,\alpha])=[-1,1]$ and $f_\alpha([\alpha,2])=[1,2]$.
It follows that
$$\psi\circ f_\alpha(x):=
\begin{cases} 2-2\left|\frac{1}{2-\alpha}x+\frac{2\alpha-2}{2-\alpha}\right|=2+2\left(\frac{1}{2-\alpha}x+\frac{2\alpha-2}{2-\alpha}\right)=2\frac{x+\alpha}{2-\alpha}, & \mbox{if } -2\le x< -\alpha \\
0, & \mbox{if }  -\alpha\le x< \alpha \\
2-2\left|\frac{1}{2-\alpha}x-\frac{2\alpha-2}{2-\alpha}\right|=2-2\left(\frac{1}{2-\alpha}x-\frac{2\alpha-2}{2-\alpha}\right)=-2\frac{x-\alpha}{2-\alpha}, & \mbox{if } \alpha\le x\le 2.
\end{cases}
$$
As a consequence
$$\varphi(x)-\psi\circ f_\alpha(x):=
\begin{cases} -|x|-2\frac{x+\alpha}{2-\alpha}=x-2\frac{x+\alpha}{2-\alpha}=-\alpha\frac{2+x}{2-\alpha}, & \mbox{if } -2\le x< -\alpha \\
-|x|, & \mbox{if }  -\alpha\le x< \alpha \\
-|x|+2\frac{x-\alpha}{2-\alpha}=-x+2\frac{x-\alpha}{2-\alpha}=-\alpha\frac{2-x}{2-\alpha}, & \mbox{if } \alpha\le x\le 2.
\end{cases}
$$
We can easily check that $-\alpha\le \varphi(x)-\psi\circ f_\alpha(x)\le 0$ for every $x\in [-2,2]$.
It follows that $\|\varphi-\psi\circ f_\alpha\|_\infty\le \alpha$.
Hence, $(f_\alpha,f^{-1}_\alpha)$ is a pair of $\alpha$-homotopy equivalences with respect to $(\varphi,\psi)$. Given that $\alpha$ can be chosen arbitrarily close to 0, this implies that $\dht(([-2,2],\varphi),([-2,2],\psi))=\dnp(([-2,2],\varphi),([-2,2],\psi))=0$.

We claim that no pair of $0$-homotopy equivalences with respect to $(\varphi,\psi)$ exists. We  prove this by contradiction. Assume that a pair of $0$-homotopy equivalences $(f_0,g_0)$ with respect to $(\varphi,\psi)$ exists. Then a homotopy $H_0:[-2,2]\times [0,1]\to[-2,2]$ exists, such that $H_0(x,0)=g_0\circ f_0(x)$, $H_0(x,1)=x$ and $\varphi(H_0(x,t))\le \varphi(x)$ for every $x\in [-2,2]$ and every $t\in[0,1]$.
It is easy to prove that $H_0(-2,t)=-2$ for every $t\in[0,1]$: $H_0(-2,1)=-2$, $\varphi(H_0(-2,t))\le \varphi(-2)$ for every $t\in [0,1]$, and $-2$ is a strict local minimum point for $\varphi$.
Analogously, $H_0(2,t)=2$ for every $t\in[0,1]$. It follows that $g_0\circ f_0(-2)=-2$ and $g_0\circ f_0(2)=2$.

Now, we observe that $f_0(\{-2,2\})\subseteq\{-2,2\}$, because $f_0$ is a $0$-map and $-2,2$ are the only points where $\psi$ takes a value that is not strictly greater than $\varphi(-2)=\varphi(2)=-2$.
Analogously, $g_0(\{-2,2\})\subseteq\{-2,2\}$. By possibly composing $f_0$ with the reflection $x\mapsto -x$, we can assume that $f_0(-2)=-2$. From the equality $g_0\circ f_0(-2)=-2$, it follows that $g(-2)=-2$.
We can now prove that $f_0([-2,2])\subseteq [-2,-1]$. Indeed, since $f_0(-2)=-2$, if $f_0([-2,2])$ contained a point $\bar x>-1$, it should also contain an infinite number of points $x$ where $\psi$ takes the value $0$. This contradicts the assumption that $f_0$ is a $0$-map, because $\varphi$ takes its maximum $0$ only at the point $0$.
Furthermore, $g_0([-2,-1])\subseteq [-2,0]$. Indeed, since $g_0(-2)=-2$, if $g_0([-2,-1])$ contained a point $\bar x>0$, there should exist a point $\bar x\in [-2,-1)$ such that $g(\bar x)=0$, so that $\varphi((\bar x))=0$.
This contadicts the assumption that $g_0$ is a $0$-map.
In conclusion, we should have that $g_0\circ f_0([-2,2])\subseteq [-2,0]$, thus contradicting  the fact that
$g_0\circ f_0(2)=2$.
\end{ex}

\subsection{The importance of choosing a subcategory ${\mathbf C}$ of ${\mathbf S}$}

The main motivation for considering a subcategory ${\mathbf C}$ of ${\mathbf S}$ instead of only the category ${\mathbf S}$ is to generalize the natural pseudo-distance, whose definition depends on the selection of a set of objects and a set of morphisms that may be respectively smaller than the set of all real-valued continuous functions and the set of all homeomorphisms (cf., e.g., Section 7.1 in \cite{BiFlFa08}, \cite{CaDFLa13} and \cite{FrJa16}).  The following examples show that this choice is fruitful.

The first two examples show that the use of an appropriate subcategory ${\mathbf C}$ of ${\mathbf S}$ allows us to represent the $L^\infty$ distance and the natural pseudo-distance $\dnp$ as particular cases of $\dht^{\mathbf C}$.

\begin{ex}\label{exLinfty}
For a fixed compact $X$,  consider the category ${\mathbf C}$ whose objects are given by the pairs $(X,\bfp)$ where  $\bfp:X\to \R^n$ is continuous, and such that between any two objects $(X,\bfp),(X,\bfp')\in ob({\mathbf C})$ there is at most one morphism, $\mathrm{id}_X:X\rightarrow X$, from $(X,\bfp)$ to $(X,\bfp')$, and this happens provided that $\bfp'\preceq\bfp$. If $\bfp'$ is not everywhere less than $\bfp$, then no morphism exists from $(X,\bfp)$ to $(X,\bfp')$.
By choosing this subcategory ${\mathbf C}$ of ${\mathbf S}$ we obtain that $\dht^{\mathbf C}((X,\bfp),(X,\bfp'))=\|\bfp-\bfp'\|_\infty$.
\end{ex}

\begin{ex}\label{exNP}
Let us set $n=2m$ in the definition of the category  $\mathbf S$. Take the category ${\mathbf C}$ whose objects are the objects $(X,\bfp_X)$ of $\mathbf S$ and the morphisms from an object $(X,\bfp_X)$ to another object $(Y,\bfp_Y)$ are the homeomorphisms $f:X\to Y$ such that $\bfp_Y\circ f\preceq \bfp_X$.
If $\bfs_X:X\to\R^m$ is a continuous function, then $$\dnp((X,\bfs_X),(Y,\bfs_Y))=\dht^{\mathbf C}\left(\left(X,(\bfs_X,-\bfs_X)\right),\left(Y,(\bfs_Y,-\bfs_Y)\right)\right).$$
Here the symbol $(\bfs_X,-\bfs_X)$ denotes the function $\bfp_X:X\to\R^n$ whose first $m$ components define the function $\bfs_X$, while the last $m$ components define the function $-\bfs_X$. Analogously for the symbol $(\bfs_Y,-\bfs_Y)$.

Let us prove the previous equality. In the case that $X$ and $Y$ are not homeomorphic, we have that
$\dnp\left((X,\bfs_X),(Y,\bfs_Y)\right)=\dht^{\mathbf C}\left(\left(X,(\bfs_X,-\bfs_X)\right),\left(Y,(\bfs_Y,-\bfs_Y)\right)\right)=\infty$. Therefore we can confine ourselves to assuming that $X$ and $Y$ are homeomorphic.

If there exist an $\alpha\geq 0$ and
a homeomorphism $f:X\to Y$ such that  $\|\bfs_X- \bfs_Y\circ f\|_\infty\le\alpha$, then  
$\bfs_Y\circ f\preceq \bfs_X + \alphadiag$ and $\bfs_Y\circ f\succeq \bfs_X - \alphadiag$ (i.e. $-\bfs_Y\circ f\preceq -\bfs_X + \alphadiag$), and hence $f$ is an $\alphadiag$-map in ${\mathbf C}$ from $\left(X,(\bfs_X,-\bfs_X)\right)$ to $\left(Y,(\bfs_Y,-\bfs_Y)\right)$, with $\alphadiag=(\alpha,\alpha,\ldots,\alpha)\in\R^m$. Analogously, since $\|\bfs_Y-\bfs_X\circ f^{-1}\|_\infty=\|\bfs_X- \bfs_Y\circ f\|_\infty\le\alpha$, $f^{-1}$ is an $\alphadiag$-map in ${\mathbf C}$ from  $\left(Y,(\bfs_Y,-\bfs_Y)\right)$ to $\left(X,(\bfs_X,-\bfs_X)\right)$.
Furthermore, $f^{-1}\circ f$ and $f\circ f^{-1}$ are $\veczero$-homotopic, and hence $2\alphadiag$-homotopic to $\id_X$ and $\id_Y$, respectively, by constant homotopies. Therefore, $(f,f^{-1})$ is a pair of  $\alphadiag$-homotopy equivalences for $((\bfs_X,-\bfs_X),(\bfs_Y,-\bfs_Y))$, implying that $(X,(\bfs_X,-\bfs_X))$ and $(Y,(\bfs_Y,-\bfs_Y))$ are $(\alphadiag,\alphadiag)$-homotopy equivalent. Thus, $\dnp\left((X,\bfs_X),(Y,\bfs_Y)\right)\ge\dht^{\mathbf C}\left(\left(X,(\bfs_X,-\bfs_X)\right),\left(Y,(\bfs_Y,-\bfs_Y)\right)\right)$.

If $(X,(\bfs_X,-\bfs_X))$ and $(Y,(\bfs_Y,-\bfs_Y))$ are $(\alphadiag,\alphadiag)$-homotopy equivalent with $\alpha\geq 0$, then there is a pair $(f,g)$ of $(\alphadiag,\alphadiag)$-homotopy equivalences for $((\bfs_X,-\bfs_X),(\bfs_Y,-\bfs_Y))$. The definition of $\mathbf C$ implies that $f$ and $g$ are two homeomorphisms. Moreover,
$\bfs_Y\circ f\preceq \bfs_X + \alphadiag$ and $-\bfs_Y\circ f\preceq -\bfs_X + \alphadiag$ (i.e. $\bfs_Y\circ f\succeq \bfs_X - \alphadiag$), and hence $\|\bfs_X- \bfs_Y\circ f\|_\infty\le\alpha$. 
 
Thus, $\dnp\left((X,\bfs_X),(Y,\bfs_Y)\right)\le\dht^{\mathbf C}\left(\left(X,(\bfs_X,-\bfs_X)\right),\left(Y,(\bfs_Y,-\bfs_Y)\right)\right)$.

Therefore, $\dnp\left((X,\bfs_X),(Y,\bfs_Y)\right)=\dht^{\mathbf C}\left(\left(X,(\bfs_X,-\bfs_X)\right),\left(Y,(\bfs_Y,-\bfs_Y)\right)\right)$.

In other words, by taking a suitable subcategory $\mathbf C$ of $\mathbf S$ we can read the natural pseudo-distance $\dnp$ into the persistent homotopy type distance $\dht^{\mathbf C}$.
\end{ex}

The following examples provide more insights into the different outcomes that can be obtained varying the category $\mathbf C$. In particular, Example~\ref{exstrip} shows the effect of a restriction of both $ob({\mathbf S})$
and $hom({\mathbf S})$, while Example~\ref{exS1} illustrates the effect of a restriction of $hom({\mathbf S})$.

\begin{ex}\label{exstrip}
Let us imagine to be interested in comparing gray-level colorings of narrow strips (possibly segments), represented by pairs $(R_\epsilon, \varphi)$ where $R_\epsilon$ is the rectangle $[-1,1]\times[0,\epsilon]$ for some $\epsilon\in [0,1]$ and  $\varphi:R_\epsilon\to\R$ takes values that depend only on the first coordinate (i.e. $\varphi(x,y_1)=\varphi(x,y_2)$ for every $(x,y_1),(x,y_2)\in R_\epsilon$). We will call each of these pairs a \emph{strip coloring}. Let us also assume that we wish $(i)$ to distinguish the generic strip coloring $(R_\epsilon,\varphi)$ from its horizontal reflection $(R_\eps,\hat \varphi)$ with $\hat\varphi$ defined by setting $\hat\varphi(x,y)=\varphi(-x,y)$, and $(ii)$ not distinguish  the strip colorings $(R_\epsilon, \varphi)$ and $(R_0,\varphi_{|R_0})$ on the ground that the height of the strips is not important. Then,  these requirements can be satisfied by considering the subcategory ${\mathbf C}$ of ${\mathbf S}$ whose objects are given by the
previously defined strip colorings for all $\epsilon\in [0,1]$, and whose morphisms between two strip colorings $(R_\epsilon,\varphi),(R_{\epsilon'},\varphi')\in ob({\mathbf C})$ are the continuous maps $f=(f_1,f_2):R_{\epsilon}\to R_{\epsilon'}$ such that
\begin{enumerate}
  \item $f_1$ depends only on the first coordinate;
  \item $f_1(\cdot,0)$ is a strictly increasing homeomorphism;
  \item The inequality $\varphi'\circ f(x,y)\le \varphi(x,y)$ holds for every $(x,y)\in R_\epsilon$.
\end{enumerate}
Now, let us set $\varphi(x,y)=x$. If $f=(f_1,f_2):R_\epsilon\to R_\epsilon$ is an $\alpha$-map with respect to  $(\varphi,\hat\varphi)$,
then $\hat\varphi(f(x,y))\le \varphi (x,y)+\alpha$ for every $(x,y)\in R_\epsilon$. The definition of $\varphi$ and $\hat\varphi$ implies that $-f_1(x,y)\le x+\alpha$ for every $(x,y)\in R_\epsilon$.
By setting $x=-1$, $y=0$ and observing that $f_1(-1,0)=-1$, we get $1=-f_1(-1,0)\le -1+\alpha$, i.e. $\alpha\ge 2$.
It is immediate to check that $(\id_{R_\epsilon},\id_{R_\epsilon})$  is a pair of $2$-homotopy equivalences with respect to  $(\varphi,\hat\varphi)$. Therefore $\dht^{\mathbf C}((R_{\epsilon},\varphi),(R_{\epsilon},\hat \varphi))=2$.
Moreover,
$\dht^{\mathbf C}((R_{\epsilon},\varphi),(R_0,\varphi_{|R_0}))=0$, so that $\dht^{\mathbf C}$ satisfies the two  required properties $(i)$ and $(ii)$.
It is also interesting to observe that $\dnp((R_{\epsilon},\varphi),(R_0,\varphi_{|R_0}))=\infty$
(because $R_{\epsilon}$ and $R_0$ are not homeomorphic) and $\dht^{\mathbf S}((R_{\epsilon},\varphi),(R_\epsilon,\hat \varphi))=0$ (because the homeomorphism $f(x,y)=(-x,y)$ is a $0$-homotopy equivalence with respect to the pair $(\varphi,\hat\varphi)$),
so that neither $\dnp$ nor $\dht^{\mathbf S}$ satisfy the two  required properties $(i)$ and $(ii)$.
\end{ex}

\begin{ex}\label{exS1}
Let ${\mathbf C}$ be the category whose objects are pairs $(X,\varphi)$ with  $X=[-1,1]$ and  $\varphi:X\to \R$ a continuous function, and whose morphisms between two objects $(X,\varphi),(X,\varphi')$ are the non-decreasing continuous maps $f:X\to X$.
Let us take the two functions $\varphi,\hat\varphi:X\to \R$ defined by setting $\varphi(x)=x$ and $\hat \varphi(x)=-x$. In the category ${\mathbf C}$,  the map $f(x)\equiv 1$ is a $0$-map with respect to the pair $(\varphi,\hat\varphi)$, and the map $g(x)\equiv -1$ is a $0$-map with respect to the pair $(\hat \varphi,\varphi)$. The function $H:[-1,1]\times [0,1]\to \R$, $H(x,t)=t(x+1)-1$ is a $0$-homotopy between $g\circ f$ and $\id_X$ with respect to the pair $(\varphi,\varphi)$, and the function $H':[-1,1]\times [0,1]\to\R$, $H'(x,t)=t(x-1)+1$ is a $0$-homotopy between $f\circ g$ and $\id_X$ with respect to the pair $(\hat\varphi,\hat\varphi)$.
As a consequence, $\dht^{\mathbf C}((X,\varphi),(X,\hat\varphi))=0$. In other words, $\dht^{\mathbf C}$ cannot distinguish $\varphi$ from $\hat\varphi$.
However, if we maintain the same objects and restrict the set of morphisms to the set of all increasing homeomorphisms from $X$ to $X$, we obtain another subcategory ${\mathbf C}'$ of ${\mathbf S}$ such that
$\dht^{{\mathbf C}'}((X,\varphi),(X,\hat\varphi))>0$. Therefore, different choices of the subcategory ${\mathbf C}$ of ${\mathbf S}$ can produce different pseudo-metrics in our model.
 \end{ex}

\subsection{$\dht$ is the same in the topological, PL, and smooth categories}

In this section we prove that $\dht$ is the same in the topological, PL, and smooth categories.

\begin{prop}\label{prop:pl}
Let $\mathbf C$ be the subcategory of $\mathbf S$ such that: the objects  of $\mathbf{C}$ are all the pairs $(X,\p_X)$ where $X$ is a compact polyhedron, and $\varphi_X:X\to\R$ is a piecewise linear function; the morphisms of $\mathbf{C}$ from an object $(X,\varphi_X)$ to another object $(Y,\varphi_Y)$ are all the piecewise linear  maps $f: X\to Y$ such that $\p_Y\circ f\le \p_X$. If  $(X,\varphi_X)$ and $(Y,\varphi_Y)$ are two objects in $\mathbf C$, and hence in $\mathbf S$, then  $\dht^{\mathbf C}((X,\p_X),(Y,\p_Y))=\dht^{\mathbf S}((X,\p_X),(Y,\p_Y))$.
\end{prop}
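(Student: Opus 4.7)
\emph{Plan.} The plan is to prove the two inequalities separately. The inclusion $\dht^{\mathbf S}((X,\p_X),(Y,\p_Y)) \le \dht^{\mathbf C}((X,\p_X),(Y,\p_Y))$ is immediate: every pair of PL $\alpha$-homotopy equivalences witnessing $\alpha \in \Lambda^{\mathbf C}$ is in particular a pair of continuous $\alpha$-homotopy equivalences, so $\Lambda^{\mathbf C} \subseteq \Lambda^{\mathbf S}$, and passing to infima yields the inequality.

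For the reverse inequality, I would use PL approximation. Fix $\alpha > \dht^{\mathbf S}((X,\p_X),(Y,\p_Y))$, witnessed by continuous maps $f:X\to Y$, $g:Y\to X$ and continuous homotopies $H_X:X\times[0,1]\to X$, $H_Y:Y\times[0,1]\to Y$ satisfying Definition~\ref{alpha-eqiv}. Since $\p_X$ and $\p_Y$ are PL on the compact polyhedra $X$ and $Y$, they are Lipschitz with some common constant $L$ for any compatible metric (e.g., the one inherited from a PL embedding into Euclidean space). For any $\eps > 0$, the classical PL approximation theorem for maps between compact polyhedra yields PL maps $\tilde f$ and $\tilde g$ within sup-norm distance $\eps/(4L)$ of $f$ and $g$ respectively; the Lipschitz bound then gives $\p_Y\circ \tilde f \le \p_Y\circ f + \eps/4 \le \p_X + \alpha + \eps/4$, so $\tilde f$ is a PL $(\alpha+\eps/4)$-map, and similarly $\tilde g$ is a PL $(\alpha+\eps/4)$-map.

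The delicate step is producing PL $(2\alpha+\eps)$-homotopies with the prescribed endpoints $(\tilde g\circ\tilde f,\id_X)$ and $(\tilde f\circ\tilde g,\id_Y)$. I would first PL-approximate $H_X$ in sup-norm by a PL map $\widehat H_X$, reparameterize time so that $\widehat H_X$ occupies the middle subinterval $[\eta,1-\eta]$, and then glue short PL homotopies on $X\times[0,\eta]$ and $X\times[1-\eta,1]$ connecting $\tilde g\circ\tilde f$ to $\widehat H_X(\cdot,\eta)$ and $\widehat H_X(\cdot,1-\eta)$ to $\id_X$. These connecting homotopies exist because on each side the two PL endpoints are uniformly close: both lie near $g\circ f$ on one end and near $\id_X$ on the other, and sufficiently close PL maps into a compact polyhedron $X$ are joined by a PL straight-line homotopy after composition with a PL retraction from a regular neighborhood of $X$ in an ambient Euclidean embedding. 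The analogous construction handles $H_Y$. The main obstacle is maintaining the shift bound uniformly in $t$ throughout this gluing, but the Lipschitz constant $L$ converts the sup-norm approximation error into a shift error, so choosing every approximation within $\eps/(4L)$ keeps $\p_X\circ \widetilde H_X(\cdot,t)\le \p_X + 2\alpha + \eps$ for all $t\in[0,1]$, and symmetrically for $Y$.

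Assembling $\tilde f, \tilde g$ with the corrected PL homotopies yields a pair of PL $(\alpha+\eps)$-homotopy equivalences, so $\alpha+\eps\in \Lambda^{\mathbf C}((X,\p_X),(Y,\p_Y))$ and hence $\dht^{\mathbf C}\le \alpha+\eps$. Letting $\eps\to 0^+$ and then $\alpha\searrow \dht^{\mathbf S}$ gives $\dht^{\mathbf C}\le \dht^{\mathbf S}$, completing the proof.
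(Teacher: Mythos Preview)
Your proposal is correct and follows essentially the same approach as the paper: the easy inequality $\dht^{\mathbf S}\le\dht^{\mathbf C}$ by inclusion of morphisms, and the reverse via PL approximation of $f$, $g$, and the homotopies, with the shift error controlled by the uniform continuity of $\p_X,\p_Y$. The only cosmetic differences are that the paper phrases the control in terms of the moduli of continuity $\omega_X,\omega_Y$ rather than a Lipschitz constant, and that the paper builds the approximating homotopy by first concatenating a short homotopy $g^\eps\!\circ f^\eps\Rightarrow g\circ f$ (obtained from straight-line homotopies $f^\eps\Rightarrow f$ and $g^\eps\Rightarrow g$ inside simplices, via the simplicial approximation theorem) with the given $2\alpha$-homotopy $g\circ f\Rightarrow \id_X$, and only then PL-approximating the whole thing; your gluing-at-the-ends construction achieves the same effect.
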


\begin{proof}
The inequality $\dht^{\mathbf C}((X,\varphi_X),(Y,\varphi_Y))\ge \dht^{\mathbf S}((X,\varphi_X),(Y,\varphi_Y))$ holds because each morphism in $\mathbf{C}$ is also a morphism in $\mathbf{S}$. To see that the converse inequality,
let $\omega_X$ and $\omega_Y$ be the moduli of continuity of $\p_X$ and $\p_Y$, respectively. Let $(f,g)$ be a pair of $\alpha$-homotopy equivalences in ${\mathbf S}$ with respect to $(\p_X,\p_Y)$. Let $K,L$ be simplicial complexes such that $X=|K|$ and $Y=|L|$. By the simplicial approximation theorem (cf., e.g., \cite{FrPi90}), there exists $\eps>0$ and  $K^\eps$ and $L^\eps$ subdivisions of $K$ and $L$ with $mesh(K^\eps),mesh(L^\eps)<\eps$, respectively, and a PL map $ f^\eps:X\to Y$  such that, for any $x\in X$, each simplex of $L^\eps$ containing $f(x)$ contains also $ f^\eps(x)$. Since $f^\eps$ is homotopic to $f$ via $F(x,t)=(1-t)f^\eps(x)+tf(x)$, and $mesh(L^\eps)<\eps$, it holds that $ f^\eps$ is $\omega_Y(\eps)$-homotopic to $f$. Analogously, there exists a simplicial approximation $g^\eps$ of $g$ that is $\omega_X(\eps)$-homotopic to $g$  via $G(x,t)=(1-t)g^\eps(x)+tg(x)$. Notice that $f^\eps$ is an $(\alpha+\omega_Y(\eps))$-map with respect to $(\p_X,\p_Y)$ and that $g^\eps$ is an $(\alpha+\omega_X(\eps))$-map with respect to $(\p_Y,\p_X)$.  Thus, $S:X\times I\to X$ defined by $S(x,t)=G(F(x,t),t))$ is a $(\omega_X(\eps)+\omega_Y(\eps))$-homotopy between $g^\eps\circ f^\eps$ and $g\circ f$. Because $(f,g)$ is a pair of $\alpha$-homotopy equivalences with respect to $(\p_X,\p_Y)$, $g\circ f$ is $2\alpha$-homotopic to $\id_X$ with respect to $(\p_X,\p_X)$. Thus, there is a  $(\omega_X(\eps)+\omega_Y(\eps)+2\alpha)$-homotopy between $g^\eps\circ f^\eps$ and $\id_X$ with respect to $(\p_X,\p_X)$. Because any homotopy between continuous mappings can likewise be approximated by a combinatorial version, possibly further subdividing $K$ and $L$, we can approximate the $(\omega_X(\eps)+\omega_Y(\eps)+2\alpha)$-homotopy between $g^\eps\circ f^\eps$ and $\id_X$ by a $(2\omega_X(\eps)+\omega_Y(\eps)+2\alpha)$-homotopy between $g^\eps\circ f^\eps$ and $\id_X$ that is PL at each instant. Analogously, there is a $(\omega_X(\eps)+2\omega_Y(\eps)+2\alpha)$-homotopy between $f^\eps\circ g^\eps$ and $\id_Y$ with respect to $(\p_Y,\p_Y)$ that is PL at each instant. Hence, $(f^\eps,g^\eps)$ is a pair of  $(\omega_X(\eps)+\omega_Y(\eps)+\alpha))$-homotopy equivalences with respect to $(\p_X,\p_Y)$. As $\eps$ tends to 0, $\omega_X(\eps)$ and $\omega_Y(\eps)$ tend to 0. Hence, the claim.
\end{proof}

\begin{prop}
Let $\mathbf D$ be the subcategory of $\mathbf S$ such that: the objects  of $\mathbf{D}$ are all the pairs $(M,\p_M)$ where $M$ is a smooth connected compact manifold, and $\varphi_M:M\to\R$ is a smooth function; the morphisms of $\mathbf{D}$ from an object $(M,\varphi_M)$ to another object $(N,\varphi_N)$ are all the smooth  maps $f: M\to N$ such that $\p_N\circ f\le \p_M$. If  $(M,\varphi_M)$ and $(N,\varphi_N)$ are two objects in $\mathbf D$, and hence in $\mathbf S$, then  $\dht^{\mathbf D}((M,\p_M),(N,\p_N))=\dht^{\mathbf S}((M,\p_M),(N,\p_N))$.
\end{prop}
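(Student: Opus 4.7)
The plan is to adapt the proof of Proposition~\ref{prop:pl}, replacing the simplicial approximation theorem by Whitney's smooth approximation theorem. The inequality $\dht^{\mathbf D}((M,\p_M),(N,\p_N))\ge \dht^{\mathbf S}((M,\p_M),(N,\p_N))$ is immediate, since every smooth map is continuous, so any morphism (resp.\ $\alpha$-homotopy equivalence) in $\mathbf D$ is also a morphism (resp.\ $\alpha$-homotopy equivalence) in $\mathbf S$. For the reverse inequality, let $(f,g)$ be a pair of $\alpha$-homotopy equivalences in $\mathbf S$ with respect to $(\p_M,\p_N)$, with witnessing $2\alpha$-homotopies $H\colon M\times[0,1]\to M$ and $G\colon N\times[0,1]\to N$. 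Fix auxiliary Riemannian metrics on $M$ and $N$, let $\omega_M,\omega_N$ be the moduli of continuity of $\p_M,\p_N$ (which exist by compactness), and fix $\eps>0$.

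By Whitney's smooth approximation theorem I would first produce smooth maps $f^\eps\colon M\to N$ and $g^\eps\colon N\to M$, together with continuous homotopies $F\colon M\times[0,1]\to N$ from $f$ to $f^\eps$ and $G'\colon N\times[0,1]\to M$ from $g$ to $g^\eps$, whose pointwise tracks have length less than $\eps$. A direct estimate using the moduli of continuity then shows that $f^\eps$ is an $(\alpha+\omega_N(\eps))$-map with respect to $(\p_M,\p_N)$ and that $g^\eps$ is an $(\alpha+\omega_M(\eps))$-map with respect to $(\p_N,\p_M)$ in $\mathbf D$. Next, I would concatenate, after suitable reparametrization, the three homotopies $s\mapsto g^\eps(F(x,1-s))$, $s\mapsto G'(f(x),1-s)$ and $H(x,s)$ to obtain a continuous homotopy $\bar H\colon M\times[0,1]\to M$ from $g^\eps\circ f^\eps$ to $\id_M$ all of whose time-slices are $(2\alpha+\omega_M(\eps)+\omega_N(\eps))$-maps with respect to $(\p_M,\p_M)$. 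A parallel construction produces $\bar G\colon N\times[0,1]\to N$ on the $N$-side.

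It remains to smooth $\bar H$ and $\bar G$ at each instant while preserving exactly the already-smooth endpoints. Since $\bar H$ is smooth on the closed subset $M\times\{0,1\}$, the relative form of Whitney's approximation theorem, applied to maps out of the smooth manifold with boundary $M\times[0,1]$, yields a smooth map $\tilde H\colon M\times[0,1]\to M$ equal to $\bar H$ on $M\times\{0,1\}$ and satisfying $\sup d_M(\tilde H,\bar H)<\eps$. Each slice $\tilde H(\cdot,t)$ is then smooth, hence a morphism in $\mathbf D$, and a $(2\alpha+2\omega_M(\eps)+\omega_N(\eps))$-map. Doing the same for $\bar G$ produces $\tilde G$ with analogous properties, so $(f^\eps,g^\eps)$ is a pair of $(\alpha+\omega_M(\eps)+\omega_N(\eps))$-homotopy equivalences in $\mathbf D$; letting $\eps\to 0$ yields $\dht^{\mathbf D}\le \dht^{\mathbf S}$. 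The main obstacle is precisely this last smoothing step: one must invoke Whitney approximation in its relative form so that each time-slice becomes smooth without altering the boundary values, while degrading the shift condition only by quantities controlled by the moduli of continuity of $\p_M$ and $\p_N$.
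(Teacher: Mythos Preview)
Your proposal is correct and follows essentially the same route as the paper, which simply states that the proof proceeds ``in much the same way as Proposition~\ref{prop:pl}, using the fact that any continuous map $f:M\to N$ with $M$ and $N$ manifolds can be approximated by $C^\infty$-maps homotopic to $f$.'' You have in fact supplied more detail than the paper does---in particular, your explicit three-homotopy concatenation and your careful use of relative Whitney approximation to smooth the homotopy while fixing the already-smooth endpoints---but the underlying strategy is identical.
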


\begin{proof}
It may be proved in much the same way as Proposition~\ref{prop:pl}, using the fact that any continuous map $f:M\to N$ with $M$ and $N$ manifolds can be approximated by $C^\infty$-maps homotopic to $f$ (cf. \cite[Ch. 5, Lemma 1.5]{Hirsch}).
\end{proof}

\section{Stability of persistent homology with respect to $\dht$}\label{sec:stability-main}
In this section we establish some connections between the distance
$\dht$ and persistent homology, in particular we lift the Stability
Theorem of Persistence~\ref{thm:stab} via $\dht.$

\subsection{Preliminaries}\label{sec:stability}

\subsubsection{Overview of persistence diagrams and the bottleneck distance}\label{sec:overview}

In persistent homology, for each $k=0,1,2,\ldots$ one seeks to summarize the topological information contained in the sequence of sublevel sets $\p_X^{-1}((-\infty, t])$ into a multiset $D_k(\varphi_X)$ of points of the extended plane called a \emph{persistence diagram}: a birth–death pair $(b,d)$ corresponding to a homological feature in degree $k$ gives rise to a point $p=(b,d)$ in $D_k(\varphi_X)$; points are  taken with multiplicity in order to take into account the presence of multiple features appearing and disappearing at the same sublevels.

There is a natural notion of distance, called the {\em bottleneck distance} $\db$, that makes the set of all persistence diagrams into a metric space.
The bottleneck distance between two persistence diagrams $D_1,D_2$ is
$$\db(D_1,D_2 ) = \inf_M\max\left \{\sup_{(p,q)\in M }\|p-q\|_\infty,\sup_{\begin{array}{c}{\scriptstyle  s\in D_1\coprod D_2}\\
{\scriptstyle s\notin  M(D_1)\coprod M^{-1}(D_2)}\end{array}}\left|\frac{s_x-s_y}{2}\right|\right\},$$
where $M$ varies among all the binary relations between $D_1$ and $D_2$ that are both right- and left-unique, i.e. partial matchings between $D_1$ and $D_2$.

The following result, which for fixed $X$ expresses the continuity of the assignment $\varphi\mapsto D_k(\varphi)$, is standard:

\thmstab*

In the above statement, tameness refers to a certain regularity condition singling out functions $\varphi$ for which the homology groups $H_k(\varphi^{-1}((-\infty,a]))$ are finite dimensional for all $a\in\R$, and in addition, the maps induced at homology level by the inclusions $\varphi^{-1}((-\infty,a-\varepsilon])\hookrightarrow \varphi^{-1}((-\infty,a+\varepsilon])$ fail to be isomorphisms for  $\varepsilon>0$ small only at finitely many points.

One of the salient features of the above result is that it assumes the underlying space $X$ to be fixed. Using our construction of the homotopy type distance we lift this result into a statement that applies to any pair $(X,\varphi_X)$ and $(Y,\varphi_Y)$ in the category $\mathbf{C}$ satisfying minimal tameness conditions.

\subsubsection{Persistence modules and interleavings}

More recently,  persistent homology has been revisited in terms of persistence modules and interleavings. The main references here are  \cite{interleaving} and \cite{interleaving-brief}.

A persistent module (over $\R$) is by definition a directed sequence of vector spaces connected by linear maps $\{V_\delta\stackrel{v_{\delta,\delta'}}{\longrightarrow}V_{\delta'}\}_{\delta\leq \delta'}$  such that $v_{\delta,\delta}=\mathrm{id}$ for all $\delta\in\R$ and $v_{\delta', \delta''}\circ v_{\delta, \delta'} = v_{\delta, \delta''}$ for all $\delta\leq \delta'\leq \delta''$. It is said to be $q$-tame if the linear maps $v_{\delta,\delta'}$ with $\delta<\delta'$  have finite rank.

We now recall the notion of interleaving of persistence modules \cite[\S 3.2]{interleaving}. Given two persistent modules $\{V_\delta\stackrel{v_{\delta,\delta'}}{\longrightarrow}V_{\delta'}\}_{\delta\leq \delta' }$ and $\{W_\delta\stackrel{w_{\delta,\delta'}}{\longrightarrow}W_{\delta'}\}_{\delta\leq \delta'}$, one says that they are $\alpha\geq 0$ interleaved if for each $\delta\geq 0$ there exist maps $\phi_\delta:V_{\delta}\rightarrow W_{\delta+\alpha}$ and $\gamma_\delta:W_{\delta}\rightarrow V_{\delta+\alpha}$ such that the following four diagrams (\ref{eq:triangle-X}), (\ref{eq:triangle-Y}), (\ref{eq:quad-1}), and (\ref{eq:quad-2}) commute for all $\delta,\delta'\in\R$ with $\delta\leq \delta'$:

\begin{equation}\label{eq:triangle-X}\xymatrix{
V_{\delta}\ar[dr]_{\phi_\delta} \ar[rr]^{v_{\delta,\delta+2\alpha}} && {V}_{\delta+2\alpha}\\
 & {W}_{\delta+\alpha}\ar[ur]_{\gamma_{\delta+\alpha}}\\
}\end{equation}

\begin{equation}\label{eq:triangle-Y}\xymatrix{
 & {V}_{\delta+\alpha}\ar[dr]^{\phi_{\delta+\alpha}}\\
W_{\delta}\ar[ur]^{\gamma_\delta} \ar[rr]^{w_{\delta,\delta+2\alpha}} && {W}_{\delta+2\alpha}
}\end{equation}

\begin{equation}\label{eq:quad-1}\xymatrix{
V_{\delta} \ar[drr]_{\phi_\delta} \ar[rr]^{v_{\delta,\delta'}} && {V}_{\delta'} \ar[drr]^{\phi_{\delta'}} \\
&& {W}_{\delta+\alpha} \ar[rr]^{w_{\delta+\alpha,\delta'+\alpha}}&& {W}_{\delta'+\alpha}}
\end{equation}

\begin{equation}\label{eq:quad-2}\xymatrix{
&&V_{\delta+\alpha} \ar[rr]^{v_{\delta+\alpha,\delta'+\alpha}} && {V}_{\delta'+\alpha} \\
 {W}_{\delta} \ar[rru]^{\gamma_{\delta}}\ar[rr]^{w_{\delta,\delta'}}&& {W}_{\delta'}\ar[urr]_{\gamma_{\delta'}}}
\end{equation}


In what follows, for each non-negative integer $k$, $H_k(\cdot)$ will denote the homology functor (with field coefficients). Given a pair $(X,\varphi_X)$ and a non-negative integer $k$,  we define the associated sublevelset persistence module
$$\mathfrak{P}_k^X :=\{V_\delta\stackrel{v_{\delta,\delta'}}{\longrightarrow}V_{\delta'}\}_{\delta\leq \delta' }$$
where:

\begin{itemize}
\item for each $\delta\in\R$, $V_\delta := H_k(X^\delta)$, where $X^\delta := \varphi_X^{-1}((-\infty,\delta])$.
\item for each $\delta,\delta'\in \R$ with $\delta\leq \delta'$ the linear map $v_{\delta,\delta'}:=H_k(\iota^X_{\delta,\delta'}:X^\delta\hookrightarrow X^{\delta'})$ induced by the natural inclusion  $\iota^X_{\delta,\delta'}$ of $X^\delta$ into $X^{\delta'}$.
\end{itemize}

\subsection{Lifting stability results via $\dht$}\label{Stability of persistence diagrams}

In this section we obtain a lower bound for the homotopy type distance based on comparing persistence diagrams.

The context of the following theorem is that of  the full subcategory $\bf C$ of $\bf S$ whose objects are compact polyhedra endowed with continuous real valued functions.  Thus, the associated sublevelset persistence modules  turn out to be $q$-tame \cite{interleaving-brief}, and one can still obtain persistence diagrams for sublevelset persistence without adding any extra tameness condition on the functions \cite{CeDFal13}. Moreover, because $\bf C$ is a full subcategory of $\bf S$, $\dht^{\bf C}$ coincides with $\dht^{\bf S}$ restricted to objects of $\bf C$. Thus we can simply write $\dht$ for $\dht^{\bf C}$.

We start proving that the persistence modules of $\alpha$-homotopic pairs are $\alpha$-interleaved.

\begin{lem} \label{prop_interleaving}
Let $(X,\varphi_X)$ and $(Y,\varphi_Y)$ be two $\alpha$-homotopy equivalent pairs in $\bf C$.
Then, for every non-negative integer $k$, the persistence modules $\mathfrak{P}_k^X$ and $\mathfrak{P}_k^Y$ are $\alpha$-interleaved.
\end{lem}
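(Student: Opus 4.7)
The strategy is to unpack the $\alpha$-homotopy equivalence hypothesis into concrete maps between sublevel sets, apply the homology functor, and verify the four defining diagrams of an $\alpha$-interleaving.

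First, I would observe that the $\alpha$-map conditions $\varphi_Y \circ f \le \varphi_X + \alpha$ and $\varphi_X \circ g \le \varphi_Y + \alpha$ translate directly into the containments $f(X^\delta) \subseteq Y^{\delta+\alpha}$ and $g(Y^\delta) \subseteq X^{\delta+\alpha}$ for every $\delta \in \R$. Hence $f$ and $g$ restrict to continuous maps $f_\delta : X^\delta \to Y^{\delta+\alpha}$ and $g_\delta : Y^\delta \to X^{\delta+\alpha}$, and I would define the interleaving maps by
\[
\phi_\delta := H_k(f_\delta) : V_\delta \to W_{\delta+\alpha}, \qquad \gamma_\delta := H_k(g_\delta) : W_\delta \to V_{\delta+\alpha}.
\]

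Next I would verify the two quadrilaterals (\ref{eq:quad-1}) and (\ref{eq:quad-2}). These are immediate from functoriality: for $\delta \le \delta'$, the diagram of spaces
\[
\xymatrix{
X^\delta \ar@{^{(}->}[r]^{\iota^X_{\delta,\delta'}} \ar[d]_{f_\delta} & X^{\delta'} \ar[d]^{f_{\delta'}} \\
Y^{\delta+\alpha} \ar@{^{(}->}[r]^{\iota^Y_{\delta+\alpha,\delta'+\alpha}} & Y^{\delta'+\alpha}
}
\]
strictly commutes (both compositions are the same restriction of $f$), and similarly for $g$. Applying $H_k$ yields (\ref{eq:quad-1}) and (\ref{eq:quad-2}).

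The main substance lies in the two triangles (\ref{eq:triangle-X}) and (\ref{eq:triangle-Y}); this is where the condition in Definition \ref{alpha-map}(3) that the homotopy be an $\vecalpha$-map \emph{at every instant} becomes essential. Consider triangle (\ref{eq:triangle-X}). By hypothesis there is a $2\alpha$-homotopy $H : X \times [0,1] \to X$ from $g \circ f$ to $\id_X$ with respect to $(\varphi_X, \varphi_X)$. Because each $H(\cdot, t)$ is a $2\alpha$-map, $H(X^\delta \times [0,1]) \subseteq X^{\delta+2\alpha}$, so $H$ restricts to a homotopy $X^\delta \times [0,1] \to X^{\delta+2\alpha}$ between $(g \circ f)|_{X^\delta} = g_{\delta+\alpha} \circ f_\delta$ and the inclusion $\iota^X_{\delta, \delta+2\alpha}$. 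Homotopic maps induce the same map on homology, so $\gamma_{\delta+\alpha} \circ \phi_\delta = H_k(g_{\delta+\alpha} \circ f_\delta) = H_k(\iota^X_{\delta, \delta+2\alpha}) = v_{\delta, \delta+2\alpha}$. Triangle (\ref{eq:triangle-Y}) is symmetric, using the $2\alpha$-homotopy from $f \circ g$ to $\id_Y$.

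The only subtle step is verifying that the restriction $H : X^\delta \times [0,1] \to X^{\delta+2\alpha}$ is genuine; everything else is bookkeeping. Once all four diagrams are in place, $(\phi_\delta, \gamma_\delta)_{\delta \in \R}$ is by definition an $\alpha$-interleaving between $\mathfrak{P}_k^X$ and $\mathfrak{P}_k^Y$, proving the lemma.
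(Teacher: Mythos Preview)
Your proposal is correct and follows essentially the same approach as the paper: restrict $f$ and $g$ to sublevel sets, apply $H_k$ to obtain the interleaving maps, verify the quadrilaterals by strict commutativity of the underlying space-level diagrams, and verify the triangles by restricting the $2\alpha$-homotopies to $X^\delta\times[0,1]$ (resp.\ $Y^\delta\times[0,1]$). You even highlight the same key point the paper relies on, namely that condition (3) in Definition~\ref{alpha-map} is what guarantees the restricted homotopy lands in $X^{\delta+2\alpha}$.
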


\begin{proof}
Fix a non-negative integer $k$, and  write $\mathfrak{P}_k^X=\{V_\delta\stackrel{v_{\delta,\delta'}}{\longrightarrow}V_{\delta'}\}_{\delta\leq \delta' }$ and $\mathfrak{P}_k^Y=\{W_\delta\stackrel{w_{\delta,\delta'}}{\longrightarrow}W_{\delta'}\}_{\delta\leq \delta'}$ where:
\begin{itemize}
\item for each $\delta\in\R$ $V_\delta := H_k(X^\delta)$ and $W_\delta:=H_k(Y^\delta)$;
\item for each $\delta,\delta'\in \R$ with $\delta\leq \delta'$ the maps $v_{\delta,\delta'}:=H_k(\iota^X_{\delta,\delta'})$ and $w_{\delta,\delta'}:=H_k(\iota^Y_{\delta,\delta'})$.
\end{itemize}

 Let $(f,g)$ be a pair of $\alpha$-homotopy equivalences with respect to $(\varphi_X,\varphi_Y)$. For each $\delta\in\R$, let $f_\delta:=f|_{X^\delta}$ and $g_\delta:=g|_{Y^\delta}$ denote the restrictions of $f$ and $g$ to $X^\delta$ and $Y^\delta$, respectively. Notice that from the fact that $f$ and $g$ are $\alpha$-maps it follows that $\mathrm{im}(f_\delta)\subset Y^{\delta+\alpha}$ and $\mathrm{im}(g_\delta)\subset X^{\delta+\alpha}$ for each $\delta\in\R$.

In order to prove that $\mathfrak{P}_k^X$ and $\mathfrak{P}_k^Y$ are $\alpha$-interleaved, for each $\delta\in\R$ we need to provide maps $\phi_\delta:V_{\delta}\rightarrow W_{\delta+\alpha}$ and $\gamma_\delta:W_{\delta}\rightarrow V_{\delta+\alpha}$ such that the four diagrams  (\ref{eq:triangle-X}),  (\ref{eq:triangle-Y}),  (\ref{eq:quad-1}), and  (\ref{eq:quad-2}) commute for all $\delta,\delta'\in\R$ with $\delta\leq \delta'$.

In order to establish the commutativity of (\ref{eq:triangle-X}) consider the following diagram of topological spaces:
\begin{equation}\label{eq:triangle-X-topo}\xymatrix{
X^{\delta}\ar[dr]_{f_\delta} \ar[rr]^{\iota^X_{\delta,\delta+2\alpha}} && {X}^{\delta+2\alpha}\\
 & {Y}^{\delta+\alpha}\ar[ur]_{g_{\delta+\alpha}}\\
}\end{equation}

Notice that $g_{\delta+\alpha}\circ f_\delta:X^\delta\rightarrow X^{\delta+2\alpha}$ and the inclusion $\iota^X_{\delta,\delta+2\alpha}:X^\delta\hookrightarrow X^{\delta+2\alpha}$ are homotopic by hypothesis. Indeed, since a $2\alpha$-homotopy $H:X\times[0,1]\rightarrow X$ between $g\circ f$ and $\mathrm{id}_X$ exists, the restriction of $H$ to $X^\delta\times[0,1]$ has its image contained in $X^{\delta+2\alpha}$, and is therefore a proper homotopy between the maps $g_{\delta+\alpha}\circ f_\delta$ and $\iota_{\delta,\delta+2\alpha}^X$.  Thus $H_k(g_{\delta+\alpha}\circ f_\delta)=H_k(\iota^X_{\delta,\delta+2\alpha})$. Applying the homology functor to diagram (\ref{eq:triangle-X-topo}) therefore yields the commutative diagram (\ref{eq:triangle-X}). The commutativity of diagram (\ref{eq:triangle-Y}) can be established in a similar way.

In order to establish the commutativity of (\ref{eq:quad-1}) consider the following diagram of topological spaces:

\begin{equation}\label{eq:quad-1-topo}\xymatrix{
X^{\delta} \ar[drr]_{f_\delta} \ar[rr]^{\iota^X_{\delta,\delta'}} && {X}^{\delta'} \ar[drr]^{f_{\delta'}} \\
&& {Y}^{\delta+\alpha} \ar[rr]^{\iota^Y_{\delta+\alpha,\delta'+\alpha}}&& {Y}^{\delta'+\alpha}}
\end{equation}


We now verify that this diagram commutes so that the commutativity of (\ref{eq:quad-1}) follows by applying the homology functor to (\ref{eq:quad-1-topo}). Indeed, pick any $x\in X^\delta$. Then
\begin{align*}
f_{\delta'}\big(\iota^X_{\delta,\delta'}(x)\big) &=f_{\delta'}(x)\tag{Since $\iota^X_{\delta,\delta'}$ is the inclusion map}\\
&= f|_{X^{\delta'}}(x)\tag{By definition of $f_{\delta'}$}\\
&= f|_{X^{\delta}}(x)\tag{Since $x\in X^{\delta}\subseteq X^{\delta'} $}\\
&=f_{\delta}(x)\tag{Definition of $f_{\delta}$}\\
&=\iota_{\delta+\alpha,\delta'+\alpha}^Y\big(f_{\delta}(x)\big)\tag{Since $f_\delta(x)\in Y^{\delta+\alpha}\subseteq Y^{\delta'+\alpha}$.}
\end{align*}
Since $x\in X^\delta$ was arbitrary it follows that $f_{\delta'+\alpha}\circ\iota^X_{\delta,\delta'} = \iota_{\delta+\alpha,\delta'+\alpha}\circ f_{\delta+\alpha}$.

One can verify that (\ref{eq:quad-2}) commutes using a similar argument.
\end{proof}


Using  Lemma~\ref{prop_interleaving}, we now obtain the stability of persistence diagrams with respect to the persistent homotopy type distance.

\thmstabht*

\begin{proof}
Under the assumption that $X$ and $Y$ are compact polyhedra, and  $\varphi_X:X\rightarrow \R$ and $\varphi_Y:Y\rightarrow \R$ are continuous functions, the persistence modules $\mathfrak{P}_k^{X}$ and $\mathfrak{P}_k^{Y}$ are $q$-tame by \cite[Thm. 2.3]{CeDFal13}. By Lemma~\ref{prop_interleaving}, if  $(X,\varphi_X)$ and $(Y,\varphi_Y)$ are  $\alpha$-homotopy equivalent with respect to $(\varphi_X,\varphi_Y)$, then, the persistence modules $\mathfrak{P}_k^X$ and $\mathfrak{P}_k^Y$ are $\alpha$-interleaved. Thus, the claim follows from the stability theorem for $q$-tame persistence modules over $\R$ \cite{interleaving-brief}.
\end{proof}

\section{The homotopy type distance for comparing merge trees}\label{sec:mergetrees}
A merge tree is a structural descriptor used in shape analysis. For a continuous function $\p: X\to \R$ defined on a connected domain, the merge tree of $\p$ encodes how the sublevel sets $\p^{-1}((-\infty,t])$  are connected for increasing values of $t\in \R$.

Following \cite{Morozov}, it can be defined as follows. Consider the epigraph $\epi(\p)$ of $\p$, that is the space
$$\epi(\p):=\{(x,t)\in X\times \R: \p(x)\le t\},$$
and the function $\bar \p: \epi(\p)\to \R$ defined by $\bar\p(x,t)=t$. Consider the equivalence relation $\sim$ on $\epi(\p)$ defined by setting $(x,t)\sim (x',t')$ if and only if  $t=t'$ and $(x,t)$ and $(x',t')$ belong to the same connected component of $\bar\p^{-1}(t)$. The merge tree of $\p$, denoted $M_\p$, is the quotient space $\epi(\p)/\sim$. In other words, $M_\p$ is the Reeb graph \cite{Reeb} of  $\epi(\p)$ with respect to $\bar \p$.
$M_\p$ is naturally endowed with the continuous function $\hat\p:M_\p \to\R$ defined by setting $\hat\p(p):=\bar\p(x)$ for any point $x$ belonging to the equivalence class $p$.

Because $M_\p$ is the Reeb graph of $\epi(\p)$ with respect to $\bar \p$, the assumption that $X$ is a compact polyhedron and $\p$ is piecewise linear ensures that  $\epi(\p)/\sim$ is a (non-compact) polyhedron of dimension 1 \cite{DiFabioLandi}. In particular, $M_\p$ turns out to be a  tree with finitely many leaves. As such, merge trees can be compared using the persistent homotopy type distance. It is then interesting to study the persistent homotopy type distance on merge trees in relation to other distances that have been proposed for the same goal.

The manuscript \cite{Morozov} presents an interleaving distance between merge trees that is interesting because it satisfies a stability property with respect to perturbation of the function that defines the merge tree. The interleaving distance between merge trees is defined as follows. For $\eps\ge 0$, define the $\eps$-shift map $i^\eps_\p: M_\p\to M_\p$ that sends, for every $t\in\R$, a connected component of $\bar\p^{-1}(t)$ to the connected component of $\bar\p^{-1}(t+\eps)$ that contains it. In other words, $i^\eps_\p$ is the map induced by the inclusion of the sublevel sets of $\p$. Given two continuous functions $\p,\psi:X\to \R$,   and $\eps\ge 0$, consider the two $\eps$-shift maps $i^\eps_\p: M_\p\to M_\p$ and $i^\eps_\psi: M_\psi\to M_\psi$. Two continuous maps $f^\eps:M_\p\to M_\psi$ and $g^\eps:M_\psi\to M_\p$ are said to be $\eps$-compatible if the following diagrams commute
\begin{equation}
\xymatrix{
M_{\p}\ar[dr]_{f^\eps} \ar[rr]^{i_\p^{2\eps}} && {M}_{\p}\\
 & {M}_{\psi}\ar[ur]_{g^{\eps}}\\
}
\xymatrix{
& {M}_{\p}\ar[dr]^{f^{\eps}}\\
M_{\psi}\ar[ur]^{g^\eps} \ar[rr]_{i_\psi^{2\eps}} && {M}_{\psi}\\
}\end{equation}
and, moreover, $\hat \psi\circ f^{\eps}=\hat \p+\eps$ and $\hat \p\circ g^{\eps}=\hat \psi+\eps$.
In this setting, the interleaving distance, $d_I(M_\p , M_\psi)$ between two merge trees
is the greatest lower bound on $\eps$ for which there are $\eps$-compatible
maps \cite{Morozov}:
$$\di(M_\p , M_\psi) = \inf\{\eps\ge 0 : \mbox{$\exists$ $\eps$-compatible maps $f^\eps: M_\p \to M_\psi$, $g^\eps: M_\psi \to M_\p$ }\}.$$

We prove that the persistent homotopy distance on merge trees coincides with such interleaving distance.

\begin{prop}\label{prop:dht-di-mt}
For every pair of piecewise linear functions $\p,\psi : X \to 	\R$ defined on a compact connected polyhedron $X$, it holds that
$$\di(M_\p,M_\psi ) = \dht((M_\p,\hat\p), (M_\psi,\hat\psi )).$$
\end{prop}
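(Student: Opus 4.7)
The plan is to prove both inequalities $\di\le\dht$ and $\dht\le\di$ separately. The shared tool is the vertical ``ray'' structure of a merge tree: above any $p\in M_\p$ at level $t$ there is a unique continuous curve $\{i^\delta_\p(p):\delta\ge 0\}$ ascending to infinity, and the assignment $(p,\delta)\mapsto i^\delta_\p(p)$ is jointly continuous. This follows from the PL tree structure of $M_\p$ recalled at the start of the section.

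For $\dht\le\di$, suppose $f^\eps,g^\eps$ are $\eps$-compatible. The equality $\hat\psi\circ f^\eps=\hat\p+\eps$ in particular gives the inequality required by Definition~\ref{alpha-map}, so $f^\eps$ is an $\eps$-map, and symmetrically for $g^\eps$. The composition equality $g^\eps\circ f^\eps=i^{2\eps}_\p$ can be converted into a $2\eps$-homotopy to $\id_{M_\p}$ via the sliding $H(p,s):=i^{2\eps(1-s)}_\p(p)$; each slice $H(\cdot,s)$ raises levels by $2\eps(1-s)\le 2\eps$, so is a $2\eps$-map with respect to $(\hat\p,\hat\p)$. A symmetric construction handles $f^\eps\circ g^\eps$, so $(f^\eps,g^\eps)$ constitutes an $\eps$-homotopy equivalence and $\dht\le\di$.

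For $\di\le\dht$, given an $\alpha$-homotopy equivalence $(f,g)$ with respect to $(\hat\p,\hat\psi)$, the plan is to \emph{rigidify} $f,g$ so that they satisfy the strict level conditions demanded by $\di$. Define
\[
\tilde f(p):=i^{\hat\p(p)+\alpha-\hat\psi(f(p))}_\psi(f(p)), \qquad \tilde g(q):=i^{\hat\psi(q)+\alpha-\hat\p(g(q))}_\p(g(q)),
\]
where the shift amounts are nonnegative because $f$ and $g$ are $\alpha$-maps. By construction $\hat\psi\circ\tilde f=\hat\p+\alpha$ and $\hat\p\circ\tilde g=\hat\psi+\alpha$, so the strict level identities required by $\di$ hold automatically. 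Moreover, the partial slide $K(p,t):=i^{t(\hat\p(p)+\alpha-\hat\psi(f(p)))}_\psi(f(p))$ defines an $\alpha$-homotopy from $f$ to $\tilde f$, and there is an analogous $\alpha$-homotopy from $g$ to $\tilde g$.

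The crux is to upgrade this to the exact equality $\tilde g\circ\tilde f=i^{2\alpha}_\p$, not just a homotopy. Concatenating (a) the given $2\alpha$-homotopy from $g\circ f$ to $\id_{M_\p}$, (b) $g\circ K$, a $2\alpha$-homotopy from $g\circ f$ to $g\circ\tilde f$, and (c) the analogous slide from $g$ to $\tilde g$ post-composed with $\tilde f$, produces a $2\alpha$-homotopy from $\tilde g\circ\tilde f$ to $\id_{M_\p}$ inside $M_\p$. For each $p\in M_\p$ this gives a path from $\tilde g(\tilde f(p))$, sitting exactly at level $\hat\p(p)+2\alpha$, back to $p$, staying below level $\hat\p(p)+2\alpha$ throughout. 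Since $M_\p$ is a tree, the compact connected image of this path contains the join of its endpoints; the join has level $\ge\hat\p(p)+2\alpha$ (it dominates both endpoints) and $\le\hat\p(p)+2\alpha$ (bounded by the homotopy), forcing the join to coincide with $\tilde g(\tilde f(p))$. Hence $\tilde g(\tilde f(p))$ is an ancestor of $p$ at the correct level, i.e., $\tilde g(\tilde f(p))=i^{2\alpha}_\p(p)$. A symmetric argument yields $\tilde f\circ\tilde g=i^{2\alpha}_\psi$, so $(\tilde f,\tilde g)$ are $\alpha$-compatible and $\di\le\alpha$. The main obstacle is precisely this rigidification step: promoting a sub-level-bounded homotopy to an exact identity with the shift map crucially uses the acyclicity of merge trees, without which a bounded path need not be pinned down to pass through a unique join.
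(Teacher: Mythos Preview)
Your proof is correct and follows the same two-step strategy as the paper: for $\dht\le\di$ you use the vertical slide $(p,s)\mapsto i^{2\eps(1-s)}_\p(p)$ to turn the compatibility equalities into a $2\eps$-homotopy (the paper parametrizes this the other way via upward paths $\gamma_i^p$, but the content is identical), and for $\di\le\dht$ you rigidify $f,g$ by shifting their images vertically to the exact level $\hat\p+\alpha$, $\hat\psi+\alpha$, which is verbatim the paper's construction of $f^\eps,g^\eps$. Your join/LCA argument on the tree, used to pin down the equality $\tilde g\circ\tilde f=i^{2\alpha}_\p$, is in fact more complete than the paper's treatment, which simply asserts this equality ``by construction'' without further comment.
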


\begin{figure}[htbp]
\begin{center}
\includegraphics[width=15cm]{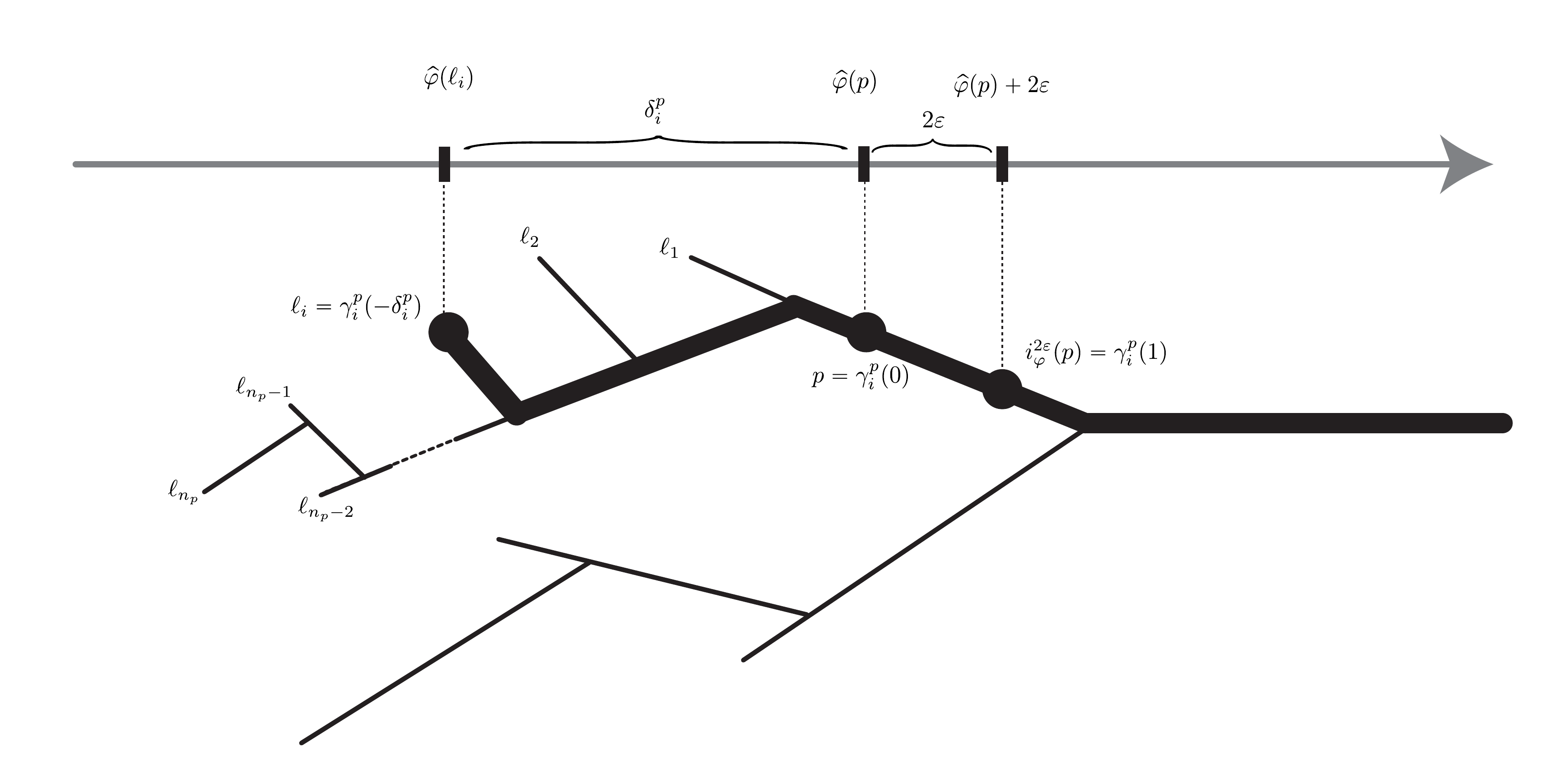}
\end{center}
\caption{Elements of the proof of Proposition~\ref{prop:dht-di-mt}.}
\label{fig:merge-tree}
\end{figure}

\begin{proof}
In order to see that $\dht\le \di$, we will prove that, given $\eps\ge 0$, every pair of $\eps$-compatible maps between $M_\p$ and $M_\psi$  constitutes a pair of $\eps$-homotopy equivalences. Pick $\eps\ge 0$ and let $i^\eps_\p: M_\p\to M_\p$ and $i^\eps_\psi: M_\psi\to M_\psi$ be $\eps$-shifts on $M_\p$ and $M_\psi$, respectively. Assume that $f^\eps:M_\p\to M_\psi$ and $g^\eps:M_\psi\to M_\p$ are  $\eps$-compatible. Because $\hat\psi\circ f^{\eps}=\hat\p+\eps$ and $\hat\p\circ g^{\eps}=\hat\psi+\eps$, it follows that $f^\eps$ and $g^\eps$ are $\eps$-maps. Let us see that $(f^\eps,g^\eps)$ is a pair of $\eps$-homotopy equivalences  with respect to $(\hat\p,\hat\psi)$. To this end, we need to construct a $2\eps$-homotopy with respect to $(\hat\p,\hat\p)$ (resp. $(\hat\psi,\hat\psi)$) between the identity of $M_\p$ (resp. $M_\psi$) and $g^\eps\circ f^\eps$  (resp. $f^\eps\circ g^\eps$). However, from the assumption that $f^\eps$ and $g^\eps$ are $\eps$-compatible, we get $i^{2\eps}_\p=g^\eps\circ f^\eps$ and $i^{2\eps}_\psi= f^\eps\circ g^\eps$. Thus, equivalently,  we need to construct a $2\eps$-homotopy $H$ (resp. $K$) with respect to $(\hat \p,\hat\p)$ (resp. $(\hat\psi,\hat\psi)$) between the identity on $M_\p$ (resp. $M_\psi$) and the $2\eps$-shift $i^{2\eps}_\p$ (resp. $i^{2\eps}_\psi$). We show how to construct $H$. Recall that $M_\p$ is a tree, and that for every $p\in M_\p$ there are finitely many leaves $\ell_1,\ldots \ell_{n_p}\in M_\p$ which can be connected to $p$ by a simple path increasing under $\hat\p$.  For $i=1,\ldots, n_p$, set $\delta_i^p:=(\hat\p(p)-\hat\p(\ell_i))/2\eps\in\R$, and let  $\gamma_i^p:[-\delta_i^p,+\infty)\to M_\p$ be the path increasing under $\hat\p$ with starting point $\ell_i$ (i.e. $\hat\p(-\delta_i^p)=\ell_i)$), passing through $p$, and parametrized by   $\hat \p\circ \gamma_i^p(t)=\hat \p(p)+2\eps t$ with $t\in [-\delta_i^p,+\infty)$. Note that, as parameter $t$ increases,  the paths $\gamma_i^p$, with $i=1,\ldots, n_p$, can only merge and not branch, and that $\gamma_i^p(0)=p$ for every $i$. See Figure~\ref{fig:merge-tree} for notation and an illustration of intervening elements.

Setting $H:M_\p\times [0,1]\to M_\p$, $(p,s)\mapsto \gamma_i^p(s)$ gives a  well-defined set map because it does not depend on the choice of the leaf $\ell_i$. Notice that $H(p,0)=p$,  $H(p,1)=i^{2\eps}_\p(p)$, and $\hat\p(H(p,s))=\hat\p(p)+2\eps s$ for all $p\in M_\p$ and $s\in [0,1]$. 

  It remains to show that $H$ is continuous. In the case when $\eps=0$,  $H(p,s)=p$ for every $(p,s)\in M_\p\times [0,1]$, and so $H$ is the identity map of $M_\p$ for every $s\in [0,1]$, implying the continuity of $H$ in that case. Now we pick $\eps>0$. Let  $q_0$ belong to the image of $H$, that is $q_0=H(p_0,s_0)$ for some $(p_0,s_0)\in M_\p\times [0,1]$, and let  $V$ be an open set in $M_\p$ containing $q_0$.
 For every $\eta>0$ and  $q\in M_\p$, let $U(q,\eta)$ be the connected component of $\hat \p^{-1}((\hat\p(q)-\eta,\hat\p(q)+\eta))$ containing $q$.
 Let $\eta>0$ be  small enough such that $U(q_0,\eta)$ is contained in $V$.  Set  $J(s_0)=(s_0-\frac{\eta}{4\eps},s_0+\frac{\eta}{4\eps})\cap [0,1]$. Note that $U(p_0,\eta/2)\times J(s_0)$ is an open neighborhood of $(p_0,s_0)$ in $M_\p\times [0,1]$.  By construction, for every $(p,s)\in U(p_0,\eta/2)\times J(s_0)$, it holds that $\hat \p(H(p,s))=\hat\p(\gamma_i^{p}(s))=\hat \p(p)+2\eps s$. Moreover, $(p,s)\in U(p_0,\eta/2)\times J(s_0)$ implies that $\hat\p(p_0)-\eta/2< \hat \p(p)< \hat\p(p_0)+\eta/2$ and $s_0-\frac{\eta}{4\eps}< s< s_0+\frac{\eta}{4\eps}$, so that
 $$\hat\p(p_0)-\frac{\eta}{2} +2\eps \left(s_0-\frac{ \eta}{4\eps}\right)< \hat \p(H(p,s))< \hat\p(p_0)+\frac{\eta}{2} +2\eps \left(s_0+\frac{ \eta}{4\eps}\right).$$
 Because $q_0=H(p_0,s_0)$ implies that $\hat \p(q_0)=\hat \p(p_0)+2\eps s_0$, we deduce that
 $$\hat\p(q_0)-\eta< \hat \p(H(p,s))< \hat\p(q_0)+\eta,$$ yielding $H(U(p_0,\eta/2)\times J(s_0))\subseteq
 U(q_0,\eta)
 \subseteq V$. 
 Indeed, we observe that $H(p,s)$ belongs to the connected component of $\hat \p^{-1}((\hat\p(q_0)-\eta,\hat\p(q_0)+\eta))$ containing $q_0$.
 In other words,  $H$ is continuous.

So, $H$ is a $2\eps$-homotopy between the identity on $M_\p$ and $i^{2\eps}_\p$.  In a similar way, we can build a $2\eps$-homotopy $K$ between the  identity on $M_\psi$ and $i^{2\eps}_\psi$. So, we have finally proved that $(f^\eps,g^\eps)$ is a pair of $\eps$-homotopy equivalences with respect to $(\hat\varphi,\hat\psi)$.

Conversely, in order to see that $\di\le \dht$, let $f:M_\p\to M_\psi$ and $g:M_\psi\to M_\p$  be such that $(f,g)$ is a pair of $\eps$-homotopy equivalences  with respect to $(\hat\varphi,\hat\psi)$ for some $\eps\ge 0$. For every $p\in M_\p$, let $$f^\eps(p):=i_\psi^{\hat \p(p)+\eps-\hat \psi(f(p))}\circ f(p).$$ Analogously, for every $q\in M_\psi$, let $$g^\eps(q):=i_\p^{\hat \psi(q)+\eps-\hat \p(g(q))}\circ g(q).$$
 In plain words, $f^\eps$ shifts the point $f(p)$ of $M_\psi$  in the direction of increasing function values until it reaches the level $\hat \p(p)+\eps$, and similarly $g^\eps$ shifts the point $g(q)$  of $M_\p$ in the direction of increasing function values until it reaches the level $\hat \psi(q)+\eps$.  By construction, $\hat \psi\circ f^\eps=\hat \p+\eps$, $\hat \p\circ g^\eps=\hat \psi+\eps$, $f^\eps\circ g^\eps=i_\psi^{2\eps}$ and $g^\eps\circ f^\eps=i_\p^{2\eps}$. To prove that $f^\eps$ and $g^\eps$ are $\eps$-compatible maps, it  remains to check that they are continuous, which is implied by the structure of the tree $M_\varphi$.

\end{proof}

\section{$\dht$ can be seen as an interleaving distance using categories}\label{sec:interleaving}
The goal of this section is to re-interpret some of the material contained in the previous sections in terms of interleavings. The advantage is that we obtain a unifying look at the distances we have encountered so far.

The theory of interleavings was initiated by Chazal et al. in \cite{interleaving}, further developed by Lesnick in \cite{Lesnick} to comprise also functional categories, and by Bubenik and Scott in \cite{Bubenik} for functors from the category of ordered reals,  and extended by Bubenik-deSilva-Scott \cite{BuSilSco} to the case of functors  from any preordered set.

\subsection{Functorial definition of interleaving distance}

For a  given integer $n\geq 1$ we denote by $\bf{R}^n$  the poset category with object set $\R^n$ with a morphism between $\bsy{u}$ and $\bsy{v}$ in $\R^n$ iff $\bsy{u}\preceq \bsy{v}$. 

Let $\bf O$ be an arbitrary category.  We start with the general definition of interleaving distance between functors from $\bf{R}^n$ to $\bf{O}$ following \cite{BuSilSco}. 


\begin{defn}\label{shiftfunctor}
Let $T: \bf{R}^n\to \bf{O}$ be a functor between $\bf{R}^n$ and $\bf O$ and  $\veceps \succeq \bsy{0}$. The {\em $\veceps$-shift	} of $T$ is the functor $T_\veceps: \bf{R}^n\to \bf{O}$ such that:
\begin{enumerate}
\item $T_\veceps(\bsy{u})=T(\bsy{u}+\veceps)$;
\item $T_\veceps(\bsy{u}\preceq \bsy{v})=T(\bsy{u}+\veceps\preceq v+\veceps)$.
\end{enumerate}
\end{defn}

Given two functors $T,T': \bf{R}^n\to \bf{O}$,  a natural transformation $\xi:T\Rightarrow T'$ between these functors consists of a morphism  $\xi_\bsy{u}:T(\bsy{u})\to T'(\bsy{u})$ in $\bf O$ for every $\bsy{u}\in \R^n$, such that, for every $\bsy{u}\preceq \bsy{v}\in\R^n$, the following diagram commutes:

\begin{eqnarray}\label{natural}
\xymatrix{
T(\bsy{u}) \ar[r]^{\xi_\bsy{u}}\ar[d]_{T(\bsy{u}\preceq \bsy{v})} &T'(\bsy{u})\ar[d]^{T'(\bsy{u}\preceq \bsy{v})} \\
T(\bsy{v})\ar[r]_{\xi_\bsy{v}}& T'(\bsy{v})}
\end{eqnarray}

Given $\varepsilon\in\R$ by $\epsdiag \in\R^n$ we denote the vector with all components equal to $\varepsilon$.

\begin{defn}\label{defn:interleaving}
Given two functors $T,T': \bf{R}^n\to \bf{O}$, and a real number $\eps\ge 0$, $T$ and $T'$ are said to be {\em $\eps$-interleaved} if there exist  natural transformations $\xi:T\Rightarrow T'_\epsdiag$ and $\eta: T'\Rightarrow T_\epsdiag$ such that, for every $\bsy{u}\in \R^n$,
$\eta_{\bsy{u}+\epsdiag}\circ \xi_\bsy{u}=T(\bsy{u}\preceq \bsy{u}+2\epsdiag)$ and  $\xi_{\bsy{u}+\epsdiag}\circ \eta_\bsy{u}=T'(\bsy{u}\preceq \bsy{u}+2\epsdiag)$.
Moreover, in that case, the pair $(\xi,\eta)$ is called an {\em $\eps$-interleaving} between $T$ and $T'$.
\end{defn}

\begin{defn}\label{di_functorial}
Given two functors $T,T': \bf{R}^n\to \bf{O}$, the {\em interleaving distance} between $T$ and $T'$ is defined as
$$\di^{\bf O}(T,T'):=\inf \{\eps\ge 0: \mbox{$T$ and $T'$ are $\eps$-interleaved}\}$$
 whenever $T$ and $T'$ are  $\eps$-interleaved for some real number  $\eps \ge 0$, whereas $\di^{\bf O}(T,T')=\infty$, otherwise.
\end{defn}

\begin{prop}[\cite{BuSilSco}]
Let $\bf {O}$ be an arbitrary category, and let $\bf {O}^{\bf{R}}$ be the category of functors from $\bf R$ to $\bf O$. It holds that $\di^{\bf O}$ is an extended pseudo-metric on the objects of $\bf {O}^{\bf{R}}$.
\end{prop}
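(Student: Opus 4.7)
The plan is to verify directly the four axioms of an extended pseudo-metric from Definition~\ref{defn:interleaving}, using only composition, naturality, and the interleaving identities.

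Non-negativity is immediate from the definition since the infimum is taken over a subset of $[0,\infty)$ with the convention $\inf\emptyset=\infty$. Reflexivity holds because $T$ is $0$-interleaved with itself: take $\xi=\eta=\mathrm{id}_T$ (where $\mathrm{id}_T$ denotes the identity natural transformation), so that $\eta_{\bsy{u}}\circ\xi_{\bsy{u}}=\mathrm{id}_{T(\bsy{u})}=T(\bsy{u}\preceq\bsy{u})$. Symmetry is also immediate: an $\eps$-interleaving $(\xi,\eta)$ between $T$ and $T'$ is an $\eps$-interleaving $(\eta,\xi)$ between $T'$ and $T$, so the two infima coincide.

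The substantive step is the triangle inequality $\di^{\bf O}(T,T'')\le \di^{\bf O}(T,T')+\di^{\bf O}(T',T'')$. The strategy is to show that if $T,T'$ are $\eps_1$-interleaved via $(\xi^1,\eta^1)$ and $T',T''$ are $\eps_2$-interleaved via $(\xi^2,\eta^2)$, then $T,T''$ are $(\eps_1+\eps_2)$-interleaved. For each $\bsy{u}\in\R^n$, I would set
\[
\xi_{\bsy{u}}:=\xi^2_{\bsy{u}+\vec{\eps}_1}\circ \xi^1_{\bsy{u}}:\ T(\bsy{u})\longrightarrow T''(\bsy{u}+\vec{\eps}_1+\vec{\eps}_2),
\]
\[
\eta_{\bsy{u}}:=\eta^1_{\bsy{u}+\vec{\eps}_2}\circ \eta^2_{\bsy{u}}:\ T''(\bsy{u})\longrightarrow T(\bsy{u}+\vec{\eps}_1+\vec{\eps}_2),
\]
and check that $\xi$ and $\eta$ are natural transformations (this is automatic: each is a vertical composition of a natural transformation with a shifted one, and the shift functor preserves naturality by Definition~\ref{shiftfunctor}). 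The key calculation is to verify the interleaving identity $\eta_{\bsy{u}+\vec{\eps}_1+\vec{\eps}_2}\circ\xi_{\bsy{u}}=T(\bsy{u}\preceq \bsy{u}+2(\vec{\eps}_1+\vec{\eps}_2))$. Writing out the composite
\[
\eta^1_{\bsy{u}+\vec{\eps}_1+2\vec{\eps}_2}\circ \eta^2_{\bsy{u}+\vec{\eps}_1+\vec{\eps}_2}\circ \xi^2_{\bsy{u}+\vec{\eps}_1}\circ \xi^1_{\bsy{u}},
\]
the middle two arrows collapse by the interleaving identity for $(\xi^2,\eta^2)$ to $T'(\bsy{u}+\vec{\eps}_1\preceq \bsy{u}+\vec{\eps}_1+2\vec{\eps}_2)$; naturality of $\eta^1$ then slides this through to give $T(\bsy{u}+2\vec{\eps}_1\preceq \bsy{u}+2\vec{\eps}_1+2\vec{\eps}_2)\circ \eta^1_{\bsy{u}+\vec{\eps}_1}\circ \xi^1_{\bsy{u}}$; a final application of the interleaving identity for $(\xi^1,\eta^1)$ and the functoriality of $T$ yield $T(\bsy{u}\preceq \bsy{u}+2(\vec{\eps}_1+\vec{\eps}_2))$. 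The symmetric identity for $T''$ is obtained by the analogous computation. Taking $\eps_1$ and $\eps_2$ arbitrarily close to $\di^{\bf O}(T,T')$ and $\di^{\bf O}(T',T'')$ respectively, and applying the convention $\alpha+\infty=\infty$ when either distance is infinite, completes the proof.

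The main obstacle is none of the individual steps, which are mechanical, but rather the careful bookkeeping of the shift indices: one must apply naturality of $\eta^1$ at the morphism $\bsy{u}+\vec{\eps}_1\preceq \bsy{u}+\vec{\eps}_1+2\vec{\eps}_2$ and not at some other indexing, and one must not confuse the shifts of the natural transformations with the shifts of the target functors. Once the diagram is drawn carefully, the proof reduces to a single commutative rectangle sandwiched between two interleaving triangles.
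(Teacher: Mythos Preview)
Your proof is correct and follows the standard direct verification of the pseudo-metric axioms, with the composition of interleavings giving the triangle inequality exactly as one would expect. Note, however, that the paper does not supply its own proof of this proposition: it is stated with a citation to \cite{BuSilSco} and left unproved, so there is no in-paper argument to compare against. Your argument is essentially the one found in that reference (and is parallel in spirit to the paper's proof of Proposition~\ref{prop_pseudometric} for $\dht^{\mathbf C}$, which uses Lemma~\ref{lemma1} to compose $\vecalpha_1$- and $\vecalpha_2$-homotopy equivalences).
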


Next, we consider specific choices of the target category $\bf O$. When there is no risk of confusion, in order to avoid an overload of notation,  we will avoid specifying the $\mathbf{O}$ symbol in the notation $\di^\mathbf{O}$.

\subsubsection{The interleaving distance between $\R^n$-valued functions}\label{sec:filtrations-n}

Let  $\bf{O}=\bf{Top}/\R_\preceq^n$  be the  category of  continuous functions $\p_X:X\to \R^n$ as objects, and $\veczero$-maps as morphisms: a morphism from $\bfp_X$ to $\bfp_Y$ is a continuous map $f:X\to Y$ such that $\bfp_Y\circ f\preceq \bfp_X$.

\begin{prop}\label{defT-n}
Every continuous function $\bfp_X:X\to \R^n$ defines a functor $T^{\bfp_X}: \bf{R}^n\to \bf{Top}/\R_\preceq^n$   by setting
\begin{itemize}
\item for every  $\bsy{u}\in\R^n$, $T^{\bfp_X}(\bsy{u}):=\bfp_X-\bsy{u}$;
\item for every $\bsy{u}, \bsy{v}\in \R^n$ with $\bsy{u}\preceq \bsy{v}$, $T^{\bfp_X}(\bsy{u}\preceq \bsy{v}):=\id_X$.
\end{itemize}
\end{prop}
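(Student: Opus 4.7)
The plan is to verify the three conditions that make $T^{\bfp_X}$ into a functor from $\mathbf{R}^n$ to $\mathbf{Top}/\R^n_\preceq$: (i) each $T^{\bfp_X}(\bsy{u})$ is an object of $\mathbf{Top}/\R^n_\preceq$, (ii) each $T^{\bfp_X}(\bsy{u}\preceq\bsy{v})$ is a morphism between the prescribed objects, and (iii) identities and composition are preserved. None of these steps is deep; the proposition is essentially a bookkeeping check that the shift is compatible with the product order on $\R^n$ and with the ``upper bound'' order used to define morphisms in $\mathbf{Top}/\R^n_\preceq$.

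For (i), the translated function $\bfp_X-\bsy{u}:X\to\R^n$ is continuous for every $\bsy{u}\in\R^n$, hence a bona fide object of $\mathbf{Top}/\R^n_\preceq$. For (ii), I would fix $\bsy{u}\preceq\bsy{v}$ in $\R^n$ and check that $\id_X$ really is a morphism in $\mathbf{Top}/\R^n_\preceq$ from $\bfp_X-\bsy{u}$ to $\bfp_X-\bsy{v}$: by the definition of morphisms recalled in Section~\ref{sec:math-setting}, this amounts to verifying that $(\bfp_X-\bsy{v})\circ\id_X\preceq \bfp_X-\bsy{u}$, i.e.\ that $\bfp_X(x)-\bsy{v}\preceq \bfp_X(x)-\bsy{u}$ for every $x\in X$. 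Since the order $\preceq$ is translation invariant, this is equivalent to $\bsy{u}\preceq\bsy{v}$, which is exactly the hypothesis. Thus the morphism assignment is well defined.

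Finally, for (iii), functoriality is automatic from the fact that the morphism assignment is constantly the identity of $X$: we have $T^{\bfp_X}(\bsy{u}\preceq\bsy{u})=\id_X$, which is the identity morphism of $\bfp_X-\bsy{u}=T^{\bfp_X}(\bsy{u})$, and for any chain $\bsy{u}\preceq\bsy{v}\preceq\bsy{w}$ one has
\[
T^{\bfp_X}(\bsy{v}\preceq\bsy{w})\circ T^{\bfp_X}(\bsy{u}\preceq\bsy{v})=\id_X\circ\id_X=\id_X=T^{\bfp_X}(\bsy{u}\preceq\bsy{w}).
\]
The only substantive point is the translation-invariance remark used in step (ii); there is no real obstacle.
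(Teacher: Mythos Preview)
Your proof is correct and follows the same approach as the paper's own proof, just with more detail: the paper simply observes that $\id_X$ is a morphism from $\bfp_X-\bsy{u}$ to $\bfp_X-\bsy{v}$ because $(\bfp_X-\bsy{v})\circ\id_X\preceq\bfp_X-\bsy{u}$, and that identity and composition are preserved. Your explicit unpacking of the translation-invariance step and the composition check adds nothing new but is a perfectly sound elaboration.
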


\begin{proof}
For $\bsy{u}\preceq \bsy{v}$, $\id_X:X\to X$ is a morphism between $\bfp_X-\bsy{u}$ and $\bfp_X-\bsy{v}$ because  $(\bfp_X-\bsy{v})\circ \id_X\preceq \bfp_X-\bsy{u}$. Moreover, $T^{\bfp_X}$ preserves identity and composition.
\end{proof}

Let us now see what $\veceps$-shifts look like in this setting.

\begin{prop}\label{shift-n}
For every $\veceps\succeq \bsy{0}$, and for every $\bfp_X$, the $\veceps$-shift of $T^{\bfp_X}$, $T^{\bfp_X}_\veceps:\bf{R}^n\to \bf{Top}/\R_\preceq^n$, is equal to the functor $T^{\bfp_X-\veceps}:\bf{R}^n\to \bf{Top}/\R^n_\preceq$.
\end{prop}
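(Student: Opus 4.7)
The plan is to verify that $T^{\bfp_X}_\veceps$ and $T^{\bfp_X-\veceps}$ agree both on objects and on morphisms of $\mathbf{R}^n$. Since functors are fully specified by their action on objects and on morphisms, this will suffice.

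First I would unwind the action on objects. For any $\bsy{u} \in \R^n$, Definition~\ref{shiftfunctor}(1) gives
$$T^{\bfp_X}_\veceps(\bsy{u}) = T^{\bfp_X}(\bsy{u} + \veceps) = \bfp_X - (\bsy{u}+\veceps) = (\bfp_X-\veceps) - \bsy{u},$$
where the second equality uses Proposition~\ref{defT-n}. On the other hand, again by Proposition~\ref{defT-n} applied to the function $\bfp_X - \veceps$,
$$T^{\bfp_X - \veceps}(\bsy{u}) = (\bfp_X-\veceps) - \bsy{u}.$$
So the two functors agree on every object of $\mathbf{R}^n$.

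Next I would check agreement on morphisms. For any pair $\bsy{u} \preceq \bsy{v}$ in $\R^n$, the translated pair $\bsy{u}+\veceps \preceq \bsy{v}+\veceps$ is still ordered, so Definition~\ref{shiftfunctor}(2) combined with Proposition~\ref{defT-n} yields
$$T^{\bfp_X}_\veceps(\bsy{u}\preceq \bsy{v}) = T^{\bfp_X}(\bsy{u}+\veceps \preceq \bsy{v}+\veceps) = \id_X,$$
while Proposition~\ref{defT-n} directly gives $T^{\bfp_X-\veceps}(\bsy{u}\preceq \bsy{v}) = \id_X$. Thus both functors send every morphism of $\mathbf{R}^n$ to $\id_X$, completing the identification $T^{\bfp_X}_\veceps = T^{\bfp_X-\veceps}$.

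There is no genuine obstacle here: the statement is essentially a bookkeeping check that the shift in the parameter and the shift in the function translate into one another, as one would hope from the notation. The only thing to be mindful of is to confirm that $\id_X$ is indeed a morphism of $\mathbf{Top}/\R^n_\preceq$ in the shifted setting, i.e.\ that $((\bfp_X-\veceps)-\bsy{v}) \circ \id_X \preceq (\bfp_X-\veceps)-\bsy{u}$, which is immediate from $\bsy{u} \preceq \bsy{v}$.
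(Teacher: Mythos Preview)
Your proof is correct and follows essentially the same approach as the paper's own proof: both simply unfold the definitions and verify agreement on objects and on morphisms, using Definition~\ref{shiftfunctor} and Proposition~\ref{defT-n}. Your added remark confirming that $\id_X$ is a legitimate morphism in the shifted setting is a minor extra check not present in the paper, but it is harmless and correct.
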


\begin{proof}
By Definition \ref{shiftfunctor} and the definition of $T^{\bfp_X}$ in Proposition \ref{defT-n}, the $\veceps$-shift of $T^{\bfp_X}$ is the functor $T^{\bfp_X}_\veceps:\bf{R}^n\to \bf{Top}/\R_\preceq^n$ such that
\begin{enumerate}
\item for every  $\bsy{u}\in\R^n$, $T^{\bfp_X}_\veceps(\bsy{u})=T^{\bfp_X}(\bsy{u}+\veceps)=\bfp_X-\bsy{u}-\veceps=T^{\bfp_X-\veceps}(\bsy{u})$;
\item for every $\bsy{u}, \bsy{v}\in \R^n$ with $\bsy{u}\preceq \bsy{v}$, $T^{\bfp_X}_\veceps(\bsy{u}\preceq \bsy{v})=T^{\bfp_X}(\bsy{u}+\veceps\preceq \bsy{v}+\veceps)=\id_X=T^{\bfp_X-\veceps}(\bsy{u}\preceq \bsy{v})$.
\end{enumerate}
\end{proof}

The next lemma characterizes all the natural transformations between pairs of  functors $T^{\bfp_X}$ and $T^{\bfp_Y}$.

\begin{lem}\label{globalmap-n}
Let $X$, $Y$ be topological spaces, and let $\bfp_X:X\to \R^n$, $\bfp_Y:Y\to \R^n$ be any two continuous functions. Then, every continuous map $f : X 	\to Y$ such that $\bfp_Y \circ f \preceq  \bfp_X$ induces a natural transformation $\xi^f : T^{\bfp_X}\Rightarrow T^{\bfp_Y}$ such that, for every $\bsy{u}\in\R^n$, $\xi^f(\bsy{u})$ is equal to $f$.
 Reciprocally, to every natural transformation $\xi:T^{\bfp_X}\Rightarrow T^{\bfp_Y}$
corresponds a continuous map $f : X \to Y$ such that $\bfp_Y \circ f \preceq \bfp_X$ defined by $f=\xi(\bsy{u})$ for any $\bsy{u}\in\R^n$.
\end{lem}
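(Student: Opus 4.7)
The plan is to unpack both directions of the claim by directly translating the defining data of $T^{\bfp_X}$, $T^{\bfp_Y}$, and of natural transformations in the target category $\bf{Top}/\R^n_\preceq$. The key observation to keep in mind throughout is Proposition~\ref{defT-n}: the structural morphisms $T^{\bfp_X}(\bsy{u}\preceq\bsy{v})$ and $T^{\bfp_Y}(\bsy{u}\preceq\bsy{v})$ are all identity maps of the underlying spaces. Because of this, the only ``interesting'' content of a natural transformation between these two functors lies in its components, and these components are forced to agree.

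For the forward direction, given $f:X\to Y$ continuous with $\bfp_Y\circ f\preceq \bfp_X$, I will set $\xi^f(\bsy{u}):=f$ for every $\bsy{u}\in\R^n$ and verify two things. First, that $\xi^f(\bsy{u})$ is indeed a morphism in $\bf{Top}/\R^n_\preceq$ from $T^{\bfp_X}(\bsy{u})=\bfp_X-\bsy{u}$ to $T^{\bfp_Y}(\bsy{u})=\bfp_Y-\bsy{u}$: this requires $(\bfp_Y-\bsy{u})\circ f\preceq \bfp_X-\bsy{u}$, which is just the hypothesis $\bfp_Y\circ f\preceq \bfp_X$ shifted by $-\bsy{u}$. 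Second, that the naturality square \eqref{natural} commutes for every pair $\bsy{u}\preceq\bsy{v}$; since both vertical arrows are identity maps by Proposition~\ref{defT-n}, the square collapses to the trivial identity $f\circ\id_X=\id_Y\circ f$.

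For the converse, suppose $\xi:T^{\bfp_X}\Rightarrow T^{\bfp_Y}$ is an arbitrary natural transformation. The first step is to show that the component $\xi(\bsy{u})$ does not depend on $\bsy{u}$. For comparable $\bsy{u}\preceq\bsy{v}$, instantiating \eqref{natural} with both vertical arrows being identities immediately gives $\xi(\bsy{v})=\xi(\bsy{u})$. For arbitrary $\bsy{u},\bsy{v}\in\R^n$, I will use the coordinate-wise maximum $\bsy{w}$, which is a common upper bound in the poset, and conclude $\xi(\bsy{u})=\xi(\bsy{w})=\xi(\bsy{v})$. Setting $f:=\xi(\bsy{u})$ for any (equivalently, every) $\bsy{u}$, the fact that $\xi(\bsy{u})$ is a morphism in $\bf{Top}/\R^n_\preceq$ from $\bfp_X-\bsy{u}$ to $\bfp_Y-\bsy{u}$ translates, as in the forward direction, to the inequality $\bfp_Y\circ f\preceq \bfp_X$.

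The main obstacle is really a minor subtlety in the converse direction: naturality only relates $\xi(\bsy{u})$ and $\xi(\bsy{v})$ when $\bsy{u}$ and $\bsy{v}$ are comparable under $\preceq$, so to conclude that $\xi$ is constant on all of $\R^n$ one needs the directedness (in fact lattice structure) of $(\R^n,\preceq)$ via the coordinate-wise-maximum trick. Beyond that, the proof is bookkeeping: it simply leverages the fact that both $T^{\bfp_X}$ and $T^{\bfp_Y}$ have all transition maps equal to identities, so that a natural transformation between them is tantamount to a single continuous map satisfying the domination condition $\bfp_Y\circ f\preceq \bfp_X$.
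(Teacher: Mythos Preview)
Your proposal is correct and follows essentially the same approach as the paper: define $\xi^f(\bsy{u})=f$ and check it is a morphism plus naturality via the trivial square $f\circ\id_X=\id_Y\circ f$; conversely, use that the transition maps are identities to force all components of $\xi$ to coincide. In fact you are slightly more careful than the paper, which passes directly from ``$\xi(\bsy{u})=\xi(\bsy{v})$ for $\bsy{u}\preceq\bsy{v}$'' to ``for one and hence all $\bsy{u}$'' without spelling out the directedness argument via a common upper bound that you include.
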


\begin{proof}
Any continuous map $f:X\to Y$ such that ${\bfp_Y}\circ f\preceq {\bfp_X}$ induces a natural transformation $\xi^f:T^{\bfp_X}\Rightarrow T^{\bfp_Y}$ defined as follows: for every $\bsy{u}\in \R^n$, $\xi^f(\bsy{u}):=f$. Indeed, for every $\bsy{u}\in \R^n$, $f$ is a morphism between ${\bfp_X}-\bsy{u}$ and ${\bfp_Y}-\bsy{u}$ because  ${\bfp_Y}\circ f\preceq {\bfp_X}$ implies $({\bfp_Y}-\bsy{u})\circ f\preceq {\bfp_X}-\bsy{u}$,  proving that $\xi^f(\bsy{u}): T^{\bfp_X}(\bsy{u})\to T^{\bfp_Y}(\bsy{u})$. Moreover, $f\circ \id_X=\id_Y\circ f$, proving that $\xi^f(\bsy{u})\circ T^{\bfp_X}(\bsy{u}\preceq \bsy{v})=T^{\bfp_Y}(\bsy{u}\preceq \bsy{v})\circ \xi^f(\bsy{v})$.

Reciprocally,  assume that there is a family  $\xi=\{\xi(\bsy{u}): T^{\bfp_X}(\bsy{u})\to T^{\bfp_Y}(\bsy{u}) \}_{\bsy{u}\in \R^n}$ of continuous maps such that diagram (\ref{natural}) commutes for every $\bsy{u}\preceq \bsy{v}\in \R^n$: $$\xi(\bsy{u})\circ T^{\bfp_X}(\bsy{u}\preceq \bsy{v})=T^{\bfp_Y}(\bsy{u}\preceq \bsy{v})\circ \xi(\bsy{v}).$$ 
 Equivalently, for $\bsy{u}\preceq \bsy{v}$, $\xi(\bsy{v})\circ \id_X=\id_Y\circ \xi(\bsy{u})$. Hence, $\xi(\bsy{u})=\xi(\bsy{v})$ for every $\bsy{u}\preceq \bsy{v}$. Hence, it is sufficient to define $f:=\xi(\bsy{u})$ for one and hence all $\bsy{u}\in\R^n$.
\end{proof}

We can now show that the interleaving distance between functors $T^{\bfp_X}$ and $T^{\bfp_Y}$ in $\bf{Top}/\R_\preceq^n$ coincides with the natural pseudo-distance.


\begin{prop}\label{NP_interleaving}
Let $X$ and $Y$ be topological spaces. Denoting by $\dnp$ the natural pseudo-distance, for every pair of continuous functions ${\bfp_X}:X\to \R^n$ and ${\bfp_Y}:Y\to \R^n$, it holds that
$$\di(T^{\bfp_X},T^{\bfp_Y})=\dnp((X,{\bfp_X}),(Y,{\bfp_Y})).$$
\end{prop}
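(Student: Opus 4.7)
The plan is to leverage Lemma~\ref{globalmap-n} together with Proposition~\ref{shift-n} to translate the abstract data of an $\eps$-interleaving into concrete pairs of continuous maps $f:X\to Y$, $g:Y\to X$, and then show that the interleaving identities force $f$ and $g$ to be mutually inverse homeomorphisms with $\|\bfp_X-\bfp_Y\circ f\|_\infty\le \eps$.

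First I would unpack the definition. By Proposition~\ref{shift-n}, the shifted functor $T^{\bfp_Y}_\epsdiag$ is just $T^{\bfp_Y-\epsdiag}$. Hence an $\eps$-interleaving is a pair of natural transformations $\xi:T^{\bfp_X}\Rightarrow T^{\bfp_Y-\epsdiag}$ and $\eta:T^{\bfp_Y}\Rightarrow T^{\bfp_X-\epsdiag}$. Lemma~\ref{globalmap-n} (applied to the pairs $(\bfp_X,\bfp_Y-\epsdiag)$ and $(\bfp_Y,\bfp_X-\epsdiag)$) tells me these transformations are uniquely determined by continuous maps $f:X\to Y$ and $g:Y\to X$ satisfying
$$\bfp_Y\circ f\preceq \bfp_X+\epsdiag,\qquad \bfp_X\circ g\preceq \bfp_Y+\epsdiag.$$
Since every component of $\xi$ and $\eta$ equals $f$ and $g$ respectively, the interleaving compatibility conditions $\eta_{\bsy{u}+\epsdiag}\circ\xi_\bsy{u}=T^{\bfp_X}(\bsy{u}\preceq \bsy{u}+2\epsdiag)=\id_X$ and $\xi_{\bsy{u}+\epsdiag}\circ\eta_\bsy{u}=\id_Y$ collapse to the single pair of identities $g\circ f=\id_X$ and $f\circ g=\id_Y$. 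Therefore $g=f^{-1}$ and $f$ is a homeomorphism.

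For the inequality $\di\ge \dnp$, I would now exploit $g=f^{-1}$ to rewrite $\bfp_X\circ g\preceq \bfp_Y+\epsdiag$ as $\bfp_X\preceq \bfp_Y\circ f+\epsdiag$; combined with $\bfp_Y\circ f\preceq \bfp_X+\epsdiag$ this yields componentwise $|\bfp_X-\bfp_Y\circ f|\preceq \epsdiag$, i.e.\ $\|\bfp_X-\bfp_Y\circ f\|_\infty\le \eps$. Hence the existence of an $\eps$-interleaving supplies a homeomorphism witnessing $\dnp\le \eps$. For the reverse inequality $\di\le \dnp$, given any homeomorphism $h:X\to Y$ with $\|\bfp_X-\bfp_Y\circ h\|_\infty\le \eps$, I would set $f=h$, $g=h^{-1}$, observe that the two inequalities above hold, and invoke Lemma~\ref{globalmap-n} to lift $f$ and $g$ to natural transformations $\xi^f$ and $\xi^g$ which, by construction, satisfy $\xi^g\circ \xi^f=\id$ and $\xi^f\circ \xi^g=\id$ at every object; this is an $\eps$-interleaving. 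The case where no homeomorphism from $X$ to $Y$ exists is handled separately: the above argument shows that an $\eps$-interleaving would produce one, so both distances equal $+\infty$ and the equality still holds.

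Taking infima over all admissible $\eps$ on both sides then yields the claimed equality. The only subtle step is the reduction of the interleaving identities to $g\circ f=\id_X$ and $f\circ g=\id_Y$, and this turns out to be automatic because in $\bf{Top}/\R^n_\preceq$ the structure maps $T^{\bfp_X}(\bsy{u}\preceq \bsy{v})$ are all identities (Proposition~\ref{defT-n}); no further work is needed.
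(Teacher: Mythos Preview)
Your proof is correct and follows essentially the same route as the paper: both directions rely on Lemma~\ref{globalmap-n} and Proposition~\ref{shift-n} to translate natural transformations into single continuous maps $f,g$, then use that the structure morphisms $T^{\bfp_X}(\bsy{u}\preceq \bsy{v})=\id_X$ force the interleaving identities to read $g\circ f=\id_X$, $f\circ g=\id_Y$. Your explicit derivation of $\|\bfp_X-\bfp_Y\circ f\|_\infty\le\eps$ from the two one-sided inequalities is a slight elaboration of what the paper leaves implicit, but the argument is identical in substance.
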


\begin{proof}
If $X$ and $Y$ are not homeomorphic, $\dnp=\infty$ trivially implying that $\di\le \dnp$.
 Let $f:X\to Y$ be a homeomorphism such that $\|{\bfp_X}-{\bfp_Y}\circ f\|_\infty\le \eps$ for some $\eps\ge 0$. It then holds that $({\bfp_Y}-\epsdiag)\circ f\preceq {\bfp_X}$ and $({\bfp_X}-\epsdiag)\circ f^{-1}\preceq {\bfp_Y}$. By Lemma~\ref{globalmap-n}, $f$ induces a natural transformation $\xi^f$ from $T^{\bfp_X}$ to $T^{({\bfp_Y}-\epsdiag)}$, as well as a natural transformation $\xi^{f^{-1}}$ from $T^{\bfp_Y}$ to $T^{({\bfp_X}-\epsdiag)}$. By Proposition~\ref{shift-n},  $T^{({\bfp_Y}-\epsdiag)}=T^{\bfp_Y}_\epsdiag$ and $T^{({\bfp_X}-\epsdiag)}=T^{\bfp_X}_\epsdiag$. Moreover, for every $\bsy{u}\in \R^n$, $\xi^f(\bsy{u})=f$, and analogously $\xi^{f^{-1}}(\bsy{u}+\epsdiag)=f^{-1}$. Thus, we get  $\xi^{f^{-1}}(\bsy{u}+\epsdiag)\circ \xi^f(\bsy{u})=f^{-1}\circ f=\id_X=T^{\bfp_X}(\bsy{u}\preceq \bsy{u}+2\epsdiag)$. Similarly, $\xi^{f}({\bsy{u}+\epsdiag})\circ \xi^{f^{-1}}(\bsy{u})=f\circ f^{-1}=\id_Y=T^{\bfp_Y}(\bsy{u}\preceq \bsy{u}+2\epsdiag)$. Hence, the pair $(\xi^f,\xi^{f^{-1}})$ is an  $\eps$-interleaving between $T^{\bfp_X}$ and $T^{\bfp_Y}$, proving that $\di\le \dnp$ also in this case.

Let us now show that $\dnp\le \di$. The claim is obvious if $\di=\infty$, so let us assume that $T^{\bfp_X}$ and $T^{\bfp_Y}$ are $\eps$-interleaved for some $\eps\ge 0$ by an $\eps$-interleaving $(\xi,\eta)$ with $\xi:T^{\bfp_X}\Rightarrow T_\epsdiag^{\bfp_Y}$ and $\eta:T^{\bfp_Y}\Rightarrow T_\epsdiag^{\bfp_X}$. By Definition \ref{defn:interleaving}, for every $\bsy{u}\in\R^n$, $\eta(\bsy{u}+\epsdiag)\circ \xi(\bsy{u})= T^{\bfp_X}(\bsy{u}\preceq \bsy{u}+2\epsdiag)$, and $\xi(\bsy{u}+\epsdiag)\circ \eta(\bsy{u})= T^{\bfp_Y}(\bsy{u}\preceq \bsy{u}+2\epsdiag)$.  Hence, by Lemma~\ref{globalmap-n} and Proposition~\ref{shift-n}, there are two continuous maps $f:X\to Y$ and $g:Y\to X$ such that, for each $\bsy{u}\in\R^n$, $\xi(\bsy{u})=f$, $\eta(\bsy{u})=g$, $({\bfp_Y}-\epsdiag)\circ f\preceq {\bfp_X}$ and $({\bfp_X}-\epsdiag)\circ g\preceq {\bfp_Y}$. Moreover, from $\eta(\bsy{u}+\epsdiag)\circ \xi(\bsy{u})=T^{\bfp_X}(\bsy{u}\preceq \bsy{u}+2\epsdiag)$ and $\xi(\bsy{u}+\epsdiag)\circ \eta(\bsy{u})= T^{\bfp_Y}(\bsy{u}\preceq \bsy{u}+2\epsdiag)$, it follows that $g\circ f=\id_X$ and $f\circ g=\id_Y$. Hence, $g=f^{-1}$ and $\|{\bfp_X}-{\bfp_Y}\circ f\|_\infty\le \eps$.
\end{proof}


\subsubsection{The interleaving distance between functions up to homotopy}
 We now consider $\bf O= \bf{hTop}/\R^n_\preceq$, the  category  with continuous functions $\bfp_X:X\to \R^n$ as objects, and  $\veczero$-homotopy classes of $\veczero$-maps between $X$ and $Y$ as morphisms: for two objects $\bfp_X$,  $\bfp_Y$ in $\bf{hTop}/\R^n_\preceq$,  a morphism  from $\bfp_X$ to $\bfp_Y$ is  the $\veczero$-homotopy class with respect to $(\bfp_X,\bfp_Y)$ of  a $\veczero$-map $f:X\to Y$, that is, a continuous map between $X$ and $Y$ such that  $\bfp_Y\circ f\preceq \bfp_X$, and is denoted by $[f]_{(\bfp_X,\bfp_Y)}$.  In $\bf{hTop}/\R^n_\preceq$, the composition of morphisms  is defined as the $\veczero$-homotopy class of the composition of  $\veczero$-maps: for $\bfp_X$,  $\bfp_Y$, $\bfp_Z$ in $\bf{hTop}/\R^n_\preceq$, and $f:X\to Y$, $g:Y\to Z$ being $\veczero$-maps with respect to $(\bfp_X,\bfp_Y)$ and $(\bfp_Y,\bfp_Z)$, respectively, we set 
$$[g]_{(\bfp_Y,\bfp_Z)}\circ [f]_{(\bfp_X,\bfp_Y)}=[g\circ f]_{(\bfp_X,\bfp_Z)}.$$ 
This is well defined because the composition does not depend on the representatives, and $\bfp_Z\circ g\le \bfp_Y$, $\bfp_Y\circ f\preceq \bfp_X$ imply $\bfp_Z\circ g\circ f\preceq \bfp_X$. Composition is associative and, for any object $(X,\bfp_X)$,  the $\veczero$-homotopy class of the identity $\id_X$ with respect to $(\bfp_X, \bfp_X)$ is a morphism from $(X,\bfp_X)$ to $(X,\bfp_X)$ in $\bf{hTop}/\R^n_\preceq$.

\begin{prop}\label{defhT}
Every continuous function $\bfp_X:X\to \R^n$ defines a functor $hT^{\bfp_X}: \bf{R}\to \bf{hTop}/\R^n_\preceq$   by setting
\begin{itemize}
\item for every  $\bsy{u}\in\R^n$, $hT^{\bfp_X}(\bsy{u}):=\bfp_X-\bsy{u}$;
\item for every $\bsy{u}, \bsy{v}\in \R^n$ with $\bsy{u}\preceq \bsy{v}$, $hT^{\bfp_X}(\bsy{u}\preceq \bsy{v}):=[\id_X]_{(\bfp_X-\bsy{u},\bfp_X-\bsy{v})}$.
\end{itemize}
\end{prop}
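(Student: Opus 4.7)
The plan is to verify the three requirements for $hT^{\bfp_X}$ to be a functor: well-definedness on morphisms, preservation of identities, and preservation of composition. The argument parallels that of Proposition \ref{defT-n}; the only novelty relative to that proof is that the target category $\bf{hTop}/\R^n_\preceq$ uses $\veczero$-homotopy classes rather than honest maps, so one must briefly check that the chosen representatives are legitimate and that composition descends to classes.

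First I would establish well-definedness on morphisms. For $\bsy{u}\preceq \bsy{v}$ in $\R^n$, the identity $\id_X:X\to X$ is a $\veczero$-map with respect to $(\bfp_X-\bsy{u},\bfp_X-\bsy{v})$: indeed $(\bfp_X-\bsy{v})\circ \id_X = \bfp_X-\bsy{v}\preceq \bfp_X-\bsy{u}$ because $\bsy{u}\preceq \bsy{v}$ implies $-\bsy{v}\preceq -\bsy{u}$. Hence $[\id_X]_{(\bfp_X-\bsy{u},\bfp_X-\bsy{v})}$ is a morphism in $\bf{hTop}/\R^n_\preceq$ from $\bfp_X-\bsy{u}$ to $\bfp_X-\bsy{v}$, exactly as required.

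Next I would verify preservation of identity: for any $\bsy{u}\in\R^n$,
$$hT^{\bfp_X}(\bsy{u}\preceq \bsy{u}) = [\id_X]_{(\bfp_X-\bsy{u},\bfp_X-\bsy{u})},$$
which is by definition the identity morphism of $\bfp_X-\bsy{u}$ in $\bf{hTop}/\R^n_\preceq$. Finally, for preservation of composition, I would take any $\bsy{u}\preceq \bsy{v}\preceq \bsy{w}$ and compute, using the composition law of the target category recalled just before the statement,
$$hT^{\bfp_X}(\bsy{v}\preceq\bsy{w})\circ hT^{\bfp_X}(\bsy{u}\preceq \bsy{v}) = [\id_X]_{(\bfp_X-\bsy{v},\bfp_X-\bsy{w})}\circ [\id_X]_{(\bfp_X-\bsy{u},\bfp_X-\bsy{v})} = [\id_X\circ \id_X]_{(\bfp_X-\bsy{u},\bfp_X-\bsy{w})},$$
which equals $[\id_X]_{(\bfp_X-\bsy{u},\bfp_X-\bsy{w})} = hT^{\bfp_X}(\bsy{u}\preceq \bsy{w})$. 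All three conditions are purely formal, so there is no genuine obstacle; the proof is essentially a one-line unwinding of the definitions. The only point worth stating explicitly is that the morphism assignment is well-defined on homotopy classes, but this is immediate because only a single canonical representative, $\id_X$, ever appears.
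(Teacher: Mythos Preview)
Your proof is correct and follows exactly the same approach as the paper's own argument: the paper verifies that $[\id_X]_{(\bfp_X-\bsy{u},\bfp_X-\bsy{v})}$ is a legitimate morphism via the inequality $(\bfp_X-\bsy{v})\circ \id_X\preceq \bfp_X-\bsy{u}$, then simply asserts that identity and composition are preserved. You have merely spelled out those two preservation checks in full, which is fine.
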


\begin{proof}
For $\bsy{u}\preceq \bsy{v}$, $[\id_X]_{(\bfp_X-\bsy{u},\bfp_X-\bsy{v})}$ is a morphism between $\bfp_X-\bsy{u}$ and $\bfp_X-\bsy{v}$ because $(\bfp_X-\bsy{v})\circ \id_X\preceq \bfp_X-\bsy{u}$. Moreover, $hT^{\bfp_X}$ preserves the identity and the composition.
\end{proof}

Let us now consider $\veceps$-shifts of functors $hT^{\bfp_X}$.

\begin{prop}\label{h-shift}
For every $\veceps\succeq \veczero$, and for every $\bfp_X$, the $\veceps$-shift of $hT^{\bfp_X}$, $hT^{\bfp_X}_\veceps:\bf{R}^n\to \bf{hTop}/\R^n_\preceq$, is equal to the functor $hT^{\bfp_X-\veceps}:\bf{R}^n\to \bf{hTop}/\R^n_\preceq$.
\end{prop}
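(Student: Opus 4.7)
The plan is to imitate the proof of Proposition~\ref{shift-n} almost verbatim, the only novelty being that morphisms in $\bf{hTop}/\R_\preceq^n$ are $\veczero$-homotopy classes of $\veczero$-maps rather than the $\veczero$-maps themselves. Since the claim asserts an equality of functors, I would verify it separately on objects and on morphisms by unwinding Definition~\ref{shiftfunctor} and Proposition~\ref{defhT}.

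On objects, for every $\bsy{u}\in\R^n$ one has
$$hT^{\bfp_X}_\veceps(\bsy{u}) \;=\; hT^{\bfp_X}(\bsy{u}+\veceps) \;=\; \bfp_X-(\bsy{u}+\veceps) \;=\; (\bfp_X-\veceps)-\bsy{u} \;=\; hT^{\bfp_X-\veceps}(\bsy{u}),$$
which takes care of the object assignment.

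On morphisms, for every $\bsy{u}\preceq \bsy{v}$ in $\R^n$,
$$hT^{\bfp_X}_\veceps(\bsy{u}\preceq \bsy{v}) \;=\; hT^{\bfp_X}(\bsy{u}+\veceps\preceq \bsy{v}+\veceps) \;=\; [\id_X]_{(\bfp_X-\bsy{u}-\veceps,\,\bfp_X-\bsy{v}-\veceps)} \;=\; hT^{\bfp_X-\veceps}(\bsy{u}\preceq\bsy{v}),$$
where the last equality uses that the homotopy class is computed with respect to the pair $((\bfp_X-\veceps)-\bsy{u},(\bfp_X-\veceps)-\bsy{v})$, which coincides with $(\bfp_X-\bsy{u}-\veceps,\bfp_X-\bsy{v}-\veceps)$.

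I do not anticipate any genuine obstacle here: the only point worth noting explicitly is that the $\veczero$-homotopy class of $\id_X$ with respect to $(\bfp_X-\bsy{u}-\veceps,\bfp_X-\bsy{v}-\veceps)$ is well defined, which holds because $(\bfp_X-\bsy{v}-\veceps)\circ\id_X\preceq \bfp_X-\bsy{u}-\veceps$ whenever $\bsy{u}\preceq\bsy{v}$. Thus both the object and morphism parts of $hT^{\bfp_X}_\veceps$ coincide with those of $hT^{\bfp_X-\veceps}$, establishing the equality of functors.
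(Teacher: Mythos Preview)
Your proof is correct and follows essentially the same approach as the paper: both verify the equality of functors by unwinding Definition~\ref{shiftfunctor} and Proposition~\ref{defhT} on objects and on morphisms, arriving at the same chain of equalities. Your version adds a brief justification for the final identification of morphisms and a remark on well-definedness, but the argument is otherwise identical.
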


\begin{proof}
By Definition \ref{shiftfunctor} and the definition of $hT^{\bfp_X}$ (Proposition \ref{defhT}), the $\veceps$-shift of $hT^{\bfp_X}$ is the functor $hT^{\bfp_X}_\veceps:\bf{R}^n\to \bf{hTop}/\R^n_\preceq$ such that
\begin{enumerate}
\item for every  $\bsy{u}\in\R^n$, $hT^{\bfp_X}_\veceps(\bsy{u})=hT^{\bfp_X}(\bsy{u}+\veceps)=\bfp_X-\bsy{u}-\veceps=hT^{\bfp_X-\veceps}(\bsy{u})$;
\item for every $\bsy{u}, \bsy{v}\in \R^n$ with $\bsy{u}\preceq \bsy{v}$, $hT^{\bfp_X}_\veceps(\bsy{u}\preceq \bsy{v})=hT^{\bfp_X}(\bsy{u}+\veceps\preceq \bsy{v}+\veceps)=[\id_X]_{(\bfp_X-\bsy{u}-\veceps,\bfp_X-\bsy{v}-\veceps)}= hT^{\bfp_X-\veceps}(\bsy{u}\preceq \bsy{v})$.
\end{enumerate}
\end{proof}

The next lemma describes the natural transformations between pairs of  functors $hT^{\bfp_X}$ and $hT^{\bfp_Y}$.

\begin{lem}\label{h-globalmap}
Let $X$, $Y$ be topological spaces, and let $\bfp_X:X\to \R^n$, $\bfp_Y:Y\to \R^n$ be any two continuous functions. Then, every continuous map $f : X\to Y$ such that $\bfp_Y \circ f \preceq  \bfp_X$ induces a natural transformation $h\xi^f : hT^{\bfp_X}\Rightarrow hT^{\bfp_Y}$ such that, for every $\bsy{u}\in\R^n$, $h\xi^f(\bsy{u})$ is equal to $[f]_{(\bfp_X-\bsy{u},\bfp_Y-\bsy{u})}$.
 Reciprocally, for every natural transformation $h\xi:hT^{\bfp_X}\Rightarrow hT^{\bfp_Y}$, with $h\xi(\bsy{u})=[f_\bsy{u}]_{(\bfp_X-\bsy{u},\bfp_Y-\bsy{u})}$, and for every $\vecalpha\succeq \veczero$ in $\R^n$, $f_\bsy{u}$ and $f_{\bsy{u}+\vecalpha}$ are $\vecalpha$-homotopic with respect to $(\bfp_X,\bfp_Y)$ for every $\bsy{u}\in\R^n$.  
\end{lem}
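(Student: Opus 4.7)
The plan is to mirror the structure of Lemma~\ref{globalmap-n} but to account for the fact that morphisms in $\bf{hTop}/\R^n_\preceq$ are $\veczero$-homotopy classes of $\veczero$-maps rather than actual maps. The forward direction will be essentially bookkeeping, while the reverse direction is where the real content lies: naturality for the comparison $\bsy{u}\preceq \bsy{u}+\vecalpha$ will encode precisely the desired $\vecalpha$-homotopy.

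For the forward direction, given $f:X\to Y$ with $\bfp_Y\circ f\preceq \bfp_X$, I would define $h\xi^f(\bsy{u}):=[f]_{(\bfp_X-\bsy{u},\bfp_Y-\bsy{u})}$. First I check that this is a legitimate morphism from $hT^{\bfp_X}(\bsy{u})=\bfp_X-\bsy{u}$ to $hT^{\bfp_Y}(\bsy{u})=\bfp_Y-\bsy{u}$: subtracting $\bsy{u}$ from both sides of $\bfp_Y\circ f\preceq \bfp_X$ gives $(\bfp_Y-\bsy{u})\circ f\preceq \bfp_X-\bsy{u}$, so $f$ is a $\veczero$-map with respect to $(\bfp_X-\bsy{u},\bfp_Y-\bsy{u})$. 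Then, for $\bsy{u}\preceq \bsy{v}$, the naturality square becomes
\[
[\id_Y]_{(\bfp_Y-\bsy{u},\bfp_Y-\bsy{v})}\circ [f]_{(\bfp_X-\bsy{u},\bfp_Y-\bsy{u})} \;=\; [f]_{(\bfp_X-\bsy{v},\bfp_Y-\bsy{v})}\circ [\id_X]_{(\bfp_X-\bsy{u},\bfp_X-\bsy{v})},
\]
and both sides equal $[f]_{(\bfp_X-\bsy{u},\bfp_Y-\bsy{v})}$ by the definition of composition in $\bf{hTop}/\R^n_\preceq$.

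For the reverse direction, I would fix a natural transformation $h\xi:hT^{\bfp_X}\Rightarrow hT^{\bfp_Y}$, write $h\xi(\bsy{u})=[f_\bsy{u}]_{(\bfp_X-\bsy{u},\bfp_Y-\bsy{u})}$, and apply the naturality square to the pair $\bsy{u}\preceq \bsy{u}+\vecalpha$. After substituting the definitions of $hT^{\bfp_X}$ and $hT^{\bfp_Y}$ on morphisms, the left-hand side collapses to $[f_\bsy{u}]_{(\bfp_X-\bsy{u},\bfp_Y-\bsy{u}-\vecalpha)}$ and the right-hand side to $[f_{\bsy{u}+\vecalpha}]_{(\bfp_X-\bsy{u},\bfp_Y-\bsy{u}-\vecalpha)}$. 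Equality of these classes means that $f_\bsy{u}$ and $f_{\bsy{u}+\vecalpha}$ are $\veczero$-homotopic with respect to $(\bfp_X-\bsy{u},\bfp_Y-\bsy{u}-\vecalpha)$, i.e.\ there is a continuous $H:X\times[0,1]\to Y$ with $H(\cdot,0)=f_\bsy{u}$, $H(\cdot,1)=f_{\bsy{u}+\vecalpha}$, and $(\bfp_Y-\bsy{u}-\vecalpha)\circ H(\cdot,t)\preceq \bfp_X-\bsy{u}$ for every $t\in[0,1]$. Rearranging the last inequality yields $\bfp_Y\circ H(\cdot,t)\preceq \bfp_X+\vecalpha$, which by Definition~\ref{alpha-map} is exactly the statement that $H$ is an $\vecalpha$-homotopy with respect to $(\bfp_X,\bfp_Y)$ between $f_\bsy{u}$ and $f_{\bsy{u}+\vecalpha}$.

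The main subtlety, and the step that deserves the most care, is the translation between a $\veczero$-homotopy with respect to the shifted pair $(\bfp_X-\bsy{u},\bfp_Y-\bsy{u}-\vecalpha)$ and an $\vecalpha$-homotopy with respect to the original pair $(\bfp_X,\bfp_Y)$. This is purely an algebraic manipulation of the inequalities on function values but it is precisely where the shift $\vecalpha$ in the target enters as slack in the $\vecalpha$-homotopy condition. No other step is delicate: everything else follows from unpacking the definitions of composition in $\bf{hTop}/\R^n_\preceq$, the definitions of $hT^{\bfp_X}$ and $hT^{\bfp_Y}$ on morphisms (Proposition~\ref{defhT}), and the fact that the shifted functors are again functors of the form $hT^{\bfp_X-\vecalpha}$ (Proposition~\ref{h-shift}).
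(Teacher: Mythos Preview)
Your proof is correct and follows essentially the same route as the paper's: both directions unfold the definitions of $hT^{\bfp_X}$, $hT^{\bfp_Y}$, and composition in $\bf{hTop}/\R^n_\preceq$, with the naturality square for $\bsy{u}\preceq\bsy{u}+\vecalpha$ yielding $[f_{\bsy{u}}]_{(\bfp_X-\bsy{u},\bfp_Y-\bsy{u}-\vecalpha)}=[f_{\bsy{u}+\vecalpha}]_{(\bfp_X-\bsy{u},\bfp_Y-\bsy{u}-\vecalpha)}$, which is then translated into the $\vecalpha$-homotopy condition with respect to $(\bfp_X,\bfp_Y)$. The only difference is cosmetic (you spell out both sides of the naturality square a bit more explicitly, and your closing reference to Proposition~\ref{h-shift} is not actually needed here).
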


\begin{proof}
Any continuous map $f:X\to Y$ such that ${\bfp_Y}\circ f\preceq {\bfp_X}$  induces a natural transformation $h\xi^f:hT^{\bfp_X}\Rightarrow hT^{\bfp_Y}$ defined as follows: for every $\bsy{u}\in \R^n$, $h\xi^f(\bsy{u}):=[f]_{(\bfp_X-\bsy{u},\bfp_Y-\bsy{u})}$. Indeed, for every $\bsy{u}\in \R^n$, $f$ is a morphism between ${\bfp_X}-\bsy{u}$ and ${\bfp_Y}-\bsy{u}$ because  ${\bfp_Y}\circ f\preceq {\bfp_X}$ implies $({\bfp_Y}-\bsy{u})\circ f\preceq {\bfp_X}-\bsy{u}$,  proving that $h\xi^f(\bsy{u}): hT^{\bfp_X}(\bsy{u})\to hT^{\bfp_Y}(\bsy{u})$. Moreover, $[f]_{(\bfp_X-\bsy{u},\bfp_Y-\bsy{u})}\circ [\id_X]_{(\bfp_X-\bsy{u},\bfp_X-\bsy{u})}=[\id_Y]_{(\bfp_Y-\bsy{u},\bfp_Y-\bsy{u})}\circ [f]_{(\bfp_X-\bsy{u},\bfp_Y-\bsy{u})}$, proving that $h\xi^f(\bsy{u})\circ hT^{\bfp_X}(\bsy{u}\preceq \bsy{v})=hT^{\bfp_Y}(\bsy{u}\preceq \bsy{v})\circ h\xi^f(\bsy{v})$.

Reciprocally,  assume that there is a family  $h\xi=\{h\xi(\bsy{u}): hT^{\bfp_X}(\bsy{u})\to hT^{\bfp_Y}(\bsy{u}) \}_{\bsy{u}\in \R^n}$ of $\veczero$-homotopy classes of maps $f_\bsy{u}$ with respect to $(\bfp_X-\bsy{u},\bfp_Y-\bsy{u})$ such that diagram (\ref{natural}) commutes for every $\bsy{u}\le \bsy{v}\in \R^n$: $h\xi(\bsy{v})\circ hT^{\bfp_X}(\bsy{u}\preceq \bsy{v})=hT^{\bfp_Y}(\bsy{u}\preceq \bsy{v})\circ h\xi(\bsy{u})$. Take $\vecalpha=\bsy{v}-\bsy{u}$. From 
$$h\xi(\bsy{u}+\vecalpha)\circ hT^{\bfp_X}(\bsy{u}\preceq \bsy{u}+\vecalpha)=hT^{\bfp_Y}(\bsy{u}\preceq \bsy{u}+\vecalpha)\circ h\xi(\bsy{u})$$ it follows that 
$$[f_{\bsy{u}+\vecalpha}]_{(\bfp_X-\bsy{u}-\vecalpha,\bfp_Y-\bsy{u}-\vecalpha)}\circ [\id_X]_{(\bfp_X-\bsy{u},\bfp_X-\bsy{u}-\vecalpha)}=[\id_Y]_{(\bfp_Y-\bsy{u},\bfp_Y-\bsy{u}-\vecalpha)}\circ [f_\bsy{u}]_{(\bfp_X-\bsy{u},\bfp_Y-\bsy{u})}.$$ 
Equivalently, $[f_{\bsy{u}+\vecalpha}]_{(\bfp_X-\bsy{u},\bfp_Y-\bsy{u}-\vecalpha)}=[f_\bsy{u}]_{(\bfp_X-\bsy{u},\bfp_Y-\bsy{u}-\vecalpha)}$. In other words, there exists a homotopy $H:X\times I\to Y$ such that $H(\cdot , 0)=f_{\bsy{u}}$, $H(\cdot , 1)=f_{\bsy{u}+\vecalpha}$, and $\bfp_Y\circ H(\cdot, t)-\bsy{u}-\vecalpha\preceq \bfp_X-\bsy{u}$ for every $t\in I$. Hence $\bfp_Y\circ H(\cdot, t)\preceq \bfp_X+\vecalpha$ for every $t\in I$, yielding that $f_\bsy{u}$ and $f_{\bsy{u}+\vecalpha}$ are $\vecalpha$-homotopic with respect to $(\bfp_X,\bfp_Y)$. 
\end{proof}

We can now show that the interleaving distance between functors $hT^{\bfp_X}$ and $hT^{\bfp_Y}$ in $\bf{hTop}/\R^n_\preceq$ coincides with the persistent homotopy type pseudo-distance.


\begin{prop}\label{HT_interleaving}
Let $X$ and $Y$ be topological spaces. Denoting by $\dht$ the persistent homotopy type pseudo-distance, for every pair of continuous functions ${\bfp_X}:X\to \R^n$ and ${\bfp_Y}:Y\to \R^n$, it holds that
$$\di(hT^{\bfp_X},hT^{\bfp_Y})=\dht((X,{\bfp_X}),(Y,{\bfp_Y})).$$
\end{prop}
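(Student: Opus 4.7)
The plan is to mirror the proof of Proposition~\ref{NP_interleaving}, translating between $\alphadiag$-homotopy equivalences and $\alpha$-interleavings by using Lemma~\ref{h-globalmap} and Proposition~\ref{h-shift} as the homotopical analogues of Lemma~\ref{globalmap-n} and Proposition~\ref{shift-n}.

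To establish $\di \leq \dht$, I start from a pair $(f,g)$ witnessing $\alphadiag$-homotopy equivalence of $(X,\bfp_X)$ and $(Y,\bfp_Y)$. Setting $h\xi(\bsy{u}) := [f]_{(\bfp_X - \bsy{u},\, \bfp_Y - \alphadiag - \bsy{u})}$ and $h\eta(\bsy{u}) := [g]_{(\bfp_Y - \bsy{u},\, \bfp_X - \alphadiag - \bsy{u})}$ for every $\bsy{u}\in\R^n$ produces, via the forward part of Lemma~\ref{h-globalmap} combined with Proposition~\ref{h-shift}, natural transformations $h\xi\colon hT^{\bfp_X}\Rightarrow hT^{\bfp_Y}_{\alphadiag}$ and $h\eta\colon hT^{\bfp_Y}\Rightarrow hT^{\bfp_X}_{\alphadiag}$. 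Each interleaving identity then reduces to the requirement that $g\circ f$ (respectively $f\circ g$) coincide with $\id_X$ (respectively $\id_Y$) as a morphism from $\bfp_X - \bsy{u}$ to $\bfp_X - \bsy{u} - 2\alphadiag$ in $\bf{hTop}/\R^n_\preceq$, which unpacks to precisely the $2\alphadiag$-homotopy condition supplied by Definition~\ref{alpha-eqiv}. Hence an $\alpha$-interleaving exists, giving $\di \leq \alpha$ and therefore $\di \leq \dht$.

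For the converse $\dht \leq \di$, given an $\eps$-interleaving $(h\xi,h\eta)$, I extract continuous maps $f\colon X\to Y$ and $g\colon Y\to X$ as representatives of the homotopy classes $h\xi(\veczero)$ and $h\eta(\veczero)$; the description of the Hom sets in $\bf{hTop}/\R^n_\preceq$ forces $f$ and $g$ to be $\epsdiag$-maps for $(\bfp_X,\bfp_Y)$ and $(\bfp_Y,\bfp_X)$, respectively. The interleaving identities at $\bsy{u}=\veczero$, once unpacked through the definitions of composition and of morphism equality in $\bf{hTop}/\R^n_\preceq$, yield $2\epsdiag$-homotopies witnessing $g\circ f \sim \id_X$ with respect to $(\bfp_X,\bfp_X)$ and $f\circ g\sim\id_Y$ with respect to $(\bfp_Y,\bfp_Y)$, so that $(f,g)$ is a pair of $\epsdiag$-homotopy equivalences, giving $\dht \leq \eps$ and hence $\dht\leq\di$.

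The step I expect to be most delicate is the extraction in the reverse direction: the interleaving identity $h\eta(\bsy{u}+\epsdiag)\circ h\xi(\bsy{u}) = hT^{\bfp_X}(\bsy{u}\preceq\bsy{u}+2\epsdiag)$ a priori relates the representative picked for $g$ at $\bsy{u}+\epsdiag$ to the one picked for $f$ at $\bsy{u}$, whereas Definition~\ref{alpha-eqiv} demands a single pair $(f,g)$. The reciprocal part of Lemma~\ref{h-globalmap}, which describes how representatives at different values of $\bsy{u}$ are homotopic, must be combined with the naturality squares of $h\xi$ and $h\eta$ to ensure that, after appropriate bookkeeping, the extracted pair $(f,g)=(f_{\veczero},g_{\veczero})$ still satisfies the required $2\epsdiag$-homotopy bounds.
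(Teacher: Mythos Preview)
Your forward direction ($\di\le\dht$) is correct and essentially identical to the paper's argument.

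The reverse direction diverges from the paper at the extraction step, and this is where your sketch has a real gap. You take the \emph{symmetric} pair $(f,g)=(f_{\veczero},g_{\veczero})$ representing $h\xi(\veczero)$ and $h\eta(\veczero)$. The paper instead takes the \emph{asymmetric} pair $(f_{\bsy{u}},\,g_{\bsy{u}+\epsdiag})$, with $g$ extracted at the shifted index. The payoff is that the interleaving identity $h\eta(\bsy{u}+\epsdiag)\circ h\xi(\bsy{u})=hT^{\bfp_X}(\bsy{u}\preceq\bsy{u}+2\epsdiag)$ then \emph{directly} gives the $2\epsdiag$-homotopy $g_{\bsy{u}+\epsdiag}\circ f_{\bsy{u}}\sim\id_X$, with no bookkeeping. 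With your symmetric choice, neither $g_{\veczero}\circ f_{\veczero}$ nor $f_{\veczero}\circ g_{\veczero}$ appears in any interleaving identity, so both compositions need the Lemma~\ref{h-globalmap} bridge you flag in your final paragraph.

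That bridge is more costly than you indicate. Applied to $h\eta:hT^{\bfp_Y}\Rightarrow hT^{\bfp_X}_{\epsdiag}=hT^{\bfp_X-\epsdiag}$, Lemma~\ref{h-globalmap} yields that $g_{\veczero}$ and $g_{\epsdiag}$ are $\epsdiag$-homotopic with respect to $(\bfp_Y,\bfp_X-\epsdiag)$, which is only a $2\epsdiag$-homotopy with respect to $(\bfp_Y,\bfp_X)$; post-composing with the $\epsdiag$-map $f_{\veczero}$ then gives a $3\epsdiag$ bound on the homotopy between $g_{\veczero}\circ f_{\veczero}$ and $g_{\epsdiag}\circ f_{\veczero}$, overshooting the required $2\epsdiag$. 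The paper, having already secured one composition for free via the asymmetric extraction, handles the single remaining composition $f_{\bsy{u}}\circ g_{\bsy{u}+\epsdiag}\sim\id_Y$ by splicing an $f$-homotopy and a $g$-homotopy so that the shifts interlock. Your sketch names the difficulty but does not supply this construction, and with the symmetric extraction there is no evident way to land on exactly $2\epsdiag$ for both compositions. Adopting the paper's offset choice $g=g_{\epsdiag}$ (rather than $g_{\veczero}$) is the missing idea.
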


\begin{proof}
If $X$ and $Y$ are not homotopy equivalent, $\dht=\infty$ trivially implying that $\di\le \dht$.
 Let $f:X\to Y$, $g:Y\to X$ form a pair $(f,g)$ of $\veceps$-homotopy equivalences with respect to $(\bfp_X,\bfp_Y)$ for some $\veceps\succeq \veczero$. It holds that $({\bfp_Y}-\veceps)\circ f\preceq {\bfp_X}$ and $({\bfp_X}-\veceps)\circ g\preceq {\bfp_Y}$. By Lemma~\ref{h-globalmap}, $f$ induces the natural transformation $h\xi^f:hT^{\bfp_X}\Rightarrow hT^{({\bfp_Y}-\veceps)}$ with $h\xi^f(\bsy{u})=[f]_{(\bfp_X-\bsy{u},\bfp_Y-\veceps-\bsy{u})}$, and $g$ induces the natural transformation $h\xi^{g}:hT^{\bfp_Y}\Rightarrow hT^{({\bfp_X}-\veceps)}$ with $h\xi^{g}(\bsy{u})=[g]_{(\bfp_Y-\bsy{u},\bfp_X-\veceps-\bsy{u})}$. By Proposition~\ref{h-shift},  $hT^{({\bfp_Y}-\veceps)}=hT^{\bfp_Y}_\veceps$ and $hT^{({\bfp_X}-\veceps)}=hT^{\bfp_X}_\veceps$.  Thus, we get 
 
\begin{eqnarray*}
h\xi^{g}(\bsy{u}+\veceps)\circ h\xi^f(\bsy{u})&=&[g]_{(\bfp_Y-\veceps-\bsy{u},\bfp_X-2\veceps-\bsy{u})}\circ [f]_{(\bfp_X-\bsy{u},\bfp_Y-\veceps-\bsy{u})}\\
&=&[g\circ f]_{(\bfp_X-\bsy{u},\bfp_X-2\veceps-\bsy{u})}=[g\circ f]_{(\bfp_X,\bfp_X-2\veceps)}\\
&=&[\id_X]_{(\bfp_X,\bfp_X-2\veceps)}=hT^{\bfp_X}(\bsy{u}\preceq \bsy{u}+2\veceps).
\end{eqnarray*}
Similarly, $h\xi^{f}({\bsy{u}+\veceps})\circ h\xi^{g}(\bsy{u})=hT^{\bfp_Y}(\bsy{u}\preceq \bsy{u}+2\veceps)$. Hence, the pair $(h\xi^f,h\xi^{g})$ is an  $\|\veceps\|_\infty$-interleaving between $hT^{\bfp_X}$ and $hT^{\bfp_Y}$, proving that $\di\le \dht$ also in this case.

Let us now show that $\dht\le \di$. The claim is obvious if $\di=\infty$, so let us assume that $hT^{\bfp_X}$ and $hT^{\bfp_Y}$ are $\eps$-interleaved for some $\eps\geq 0$ by an $\eps$-interleaving $(h\xi,h\zeta)$.   By  Proposition~\ref{h-shift}, there is a continuous $\veczero$-map $f_\bsy{u}:X\to Y$ with respect to $(\bfp_X-\bsy{u},\bfp_Y-\bsy{u}-\epsdiag)$ and a continuous $\veczero$-map $g_{\bsy{u}+\epsdiag}:Y\to X$ with respect to $(\bfp_Y-\bsy{u}-\epsdiag,\bfp_X-\bsy{u}-2\epsdiag)$ such that  $ [f_\bsy{u}]_{(\bfp_X-\bsy{u},\bfp_Y-\bsy{u}-\epsdiag)}=h\xi(\bsy{u})$ and  $[g_{\bsy{u}+\epsdiag}]_{(\bfp_Y-\bsy{u}-\epsdiag,\bfp_X-\bsy{u}-2\epsdiag)}=h\zeta(\bsy{u}+2\epsdiag)$ for any fixed $\bsy{u}$ arbitrarily chosen. Note that $f_\bsy{u}$ and $g_{\bsy{u}+\epsdiag}$ are also  $\epsdiag$-maps with respect to $(\bfp_X,\bfp_Y)$ and $(\bfp_Y,\bfp_X)$, respectively.  By Definition \ref{defn:interleaving}, for every $\bsy{u}\in\R^n$, $h\zeta(\bsy{u}+\epsdiag)\circ h\xi(\bsy{u})= hT^{\bfp_X}(\bsy{u}\le u+2\epsdiag)$, and $h\xi(\bsy{u}+2\epsdiag)\circ h\zeta(\bsy{u}+\epsdiag)= hT^{\bfp_Y}(\bsy{u}+\epsdiag\preceq \bsy{u}+3\epsdiag)$. Hence,  from $h\zeta(\bsy{u}+\epsdiag)\circ h\xi(\bsy{u})= hT^{\bfp_X}(\bsy{u}\preceq \bsy{u}+2\epsdiag)$ we deduce that $g_{\bsy{u}+\epsdiag}\circ f_\bsy{u}$ is $\veczero$-homotopic to $\id_X$ with respect to $(\bfp_X-\bsy{u},\bfp_X-\bsy{u}-2\epsdiag)$, that is $g_{\bsy{u}+\epsdiag}\circ f_\bsy{u}$ is $2\epsdiag$-homotopic to $\id_X$ with respect to $(\bfp_X,\bfp_X)$. Analogously, from $h\xi(\bsy{u}+\epsdiag)\circ h\zeta(\bsy{u})= hT^{\bfp_Y}(\bsy{u}\preceq \bsy{u}+2\epsdiag)$ we deduce that $f_{\bsy{u}+\epsdiag}\circ g_\bsy{u}$ is $\veczero$-homotopic to $\id_Y$ with respect to $(\bfp_Y-\bsy{u},\bfp_Y-\bsy{u}-2\epsdiag)$, that is $f_{\bsy{u}+\epsdiag}\circ g_\bsy{u}$ is $2\epsdiag$-homotopic to $\id_Y$ with respect to $(\bfp_Y,\bfp_Y)$. On the other hand, by Lemma \ref{h-globalmap}, $g_\bsy{u}$ is $\epsdiag$-homotopic to $g_{\bsy{u}+\epsdiag}$ with respect to $(\bfp_Y, \bfp_X)$, and $f_{\bsy{u}+\epsdiag}$ is $\epsdiag$-homotopic to $f_{\bsy{u}}$ with respect to $(\bfp_X, \bfp_Y)$. Thus, $f_{\bsy{u}+\epsdiag}\circ g_\bsy{u}$ is $2\epsdiag$-homotopic to  $f_\bsy{u}\circ g_{\bsy{u}+\veceps}$ with respect to $(\bfp_Y, \bfp_Y)$, implying that $f_\bsy{u}\circ g_{\bsy{u}+\veceps}$ is $2\epsdiag$-homotopic to $\id_Y$ with respect to $(\bfp_Y,\bfp_Y)$. In conclusion, for an arbitrarily chosen $\bsy{u}\in\R^n$, $(f_\bsy{u}, g_{\bsy{u}+\epsdiag})$ is a pair of $\epsdiag$-homotopy equivalences with respect to $(\bfp_X,\bfp_Y)$,  thus proving that $\dht((X,\bfp_X),(Y,\bfp_Y))\le \eps$.
\end{proof}

\subsubsection{The interleaving distance between persistence modules}

Let ${\bf Vect}_{\mathbb F}$ be the category of vector spaces over a fixed field ${\mathbb F}$. An $n$-dimensional persistence module can be viewed as a functor \mbox{$\mathfrak{P}:\bf{R}^n\to \bf{Vect}_{\mathbb F}$}. Replacing $\bf{O}$ with ${\bf Vect}_{\mathbb F}$ in Definition~\ref{defn:interleaving}, we obtain the interleaving distance on the category ${\bf Vect}_{\mathbb F}$ as usual.


A standard way in which one obtains $n$-dimensional persistence modules from topological spaces endowed with $\R^n$ valued functions, is by composing the sublevelset filtration functor with the homology functor: for $k$ a non-negative integer, and for $(X,\bfp_X)$ an object in  ${\bf Top}/\R^n_\preceq$, $\mathfrak{P}_k^X= H_k\circ T^{\bfp_X}$ is an object in ${\bf Vect}_{\mathbb F}^{{\bf R}^n}$. 


An immediate remark is that given $k\in\N$ and any $\eps$-interleaving $(\zeta,\eta)$ between objects $(X,\bfp_X)$ and $(Y,\bfp_Y)$ in  ${\bf Top}/\R^n_\preceq$, the pair $(H_k(\zeta),H_k(\eta))$ is an $\eps$-interleaving between the $n$-dimensional persistence modules $\mathfrak{P}_k^X$ and $\mathfrak{P}_k^Y$.

As a corollary of Propositions \ref{HT_interleaving} and \ref{NP_interleaving} we then obtain the following result linking the interleaving distances on the categories ${\bf Vect}_{\mathbb F}^{\mathbf{R}^n}$, ${\bf Top}/\R^n_\preceq$, and ${{\bf hTop}/\R^n_\preceq}$.

\begin{cor}
For every $(X,\bfp_X), (Y,\bfp_Y)$ with $X,Y$  topological spaces, every $\bfp_X:X\to \R^n$, $\bfp_Y:Y\to \R^n$  continuous functions, and every non-negative integer $k$, it holds that
$$\di(\mathfrak{P}_k^X, \mathfrak{P}_k^Y)\le \di (hT^{\bfp_X}, hT^{\bfp_Y})\le \di(T^{\bfp_X}, T^{\bfp_Y}).$$
\end{cor}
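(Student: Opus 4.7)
The plan is to apply a general functorial principle for interleavings: if $F\colon\mathbf{O}_1\to\mathbf{O}_2$ is any functor and $T,T'\colon\mathbf{R}^n\to\mathbf{O}_1$ are functors, then any $\varepsilon$-interleaving $(\xi,\eta)$ between $T$ and $T'$, composed componentwise with $F$, yields an $\varepsilon$-interleaving $(F\xi,F\eta)$ between $F\circ T$ and $F\circ T'$ in $\mathbf{O}_2$. Indeed, the shift commutes with postcomposition, $(F\circ T')_\epsdiag = F\circ T'_\epsdiag$, and functoriality of $F$ preserves the commutativity of each diagram in Definition~\ref{defn:interleaving}. Consequently,
\[
\di^{\mathbf{O}_2}(F\circ T,F\circ T')\le\di^{\mathbf{O}_1}(T,T').
\]
The corollary then reduces to exhibiting two natural functors, one for each inequality, and identifying the relevant compositions.

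For the first inequality, I would introduce the sublevel-set homology functor $\mathcal{H}_k\colon\mathbf{hTop}/\R^n_\preceq\to\mathbf{Vect}_{\mathbb{F}}$ sending $\bfp_X$ to $H_k(\bfp_X^{-1}((-\infty,\veczero]))$ and sending $[f]_{(\bfp_X,\bfp_Y)}$ to the linear map induced on $H_k$ by the restriction of $f$ to the sublevel sets. Using the identity $(\bfp_X-\bsy{u})^{-1}((-\infty,\veczero])=\bfp_X^{-1}((-\infty,\bsy{u}])$, one checks directly that $\mathcal{H}_k\circ hT^{\bfp_X}=\mathfrak{P}_k^X$ and likewise for $Y$. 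The general principle then yields $\di(\mathfrak{P}_k^X,\mathfrak{P}_k^Y)\le\di(hT^{\bfp_X},hT^{\bfp_Y})$. For the second inequality, I would use the ``passage to homotopy classes'' functor $Q\colon\mathbf{Top}/\R^n_\preceq\to\mathbf{hTop}/\R^n_\preceq$, which acts as the identity on objects and sends each $\veczero$-map $f$ to its class $[f]$. Comparing Propositions~\ref{defT-n} and~\ref{defhT} shows at once that $Q\circ T^{\bfp_X}=hT^{\bfp_X}$, and the general principle delivers $\di(hT^{\bfp_X},hT^{\bfp_Y})\le\di(T^{\bfp_X},T^{\bfp_Y})$.

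The key technical point, and the main obstacle, is to verify that $\mathcal{H}_k$ is well-defined on morphisms, i.e., that $\veczero$-homotopic $\veczero$-maps induce the same linear map on sublevel-set homology. Given a $\veczero$-homotopy $H\colon X\times[0,1]\to Y$ with respect to $(\bfp_X,\bfp_Y)$ between $\veczero$-maps $f_1$ and $f_2$, the requirement that each $H(\cdot,t)$ be a $\veczero$-map forces $\bfp_Y(H(x,t))\preceq\bfp_X(x)\preceq\veczero$ whenever $\bfp_X(x)\preceq\veczero$. Hence $H$ restricts to a continuous homotopy between the restrictions of $f_1$ and $f_2$ to $\bfp_X^{-1}((-\infty,\veczero])$, landing in $\bfp_Y^{-1}((-\infty,\veczero])$, and the usual homotopy invariance of $H_k$ gives the required equality of induced maps. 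Once this is in place, functoriality of $\mathcal{H}_k$ (and of $Q$) is routine, and the chain of inequalities follows as described. As a sanity check, Propositions~\ref{HT_interleaving} and~\ref{NP_interleaving} together with Proposition~\ref{rem-bound} recover the second inequality directly in the form $\di(hT^{\bfp_X},hT^{\bfp_Y})=\dht\le\dnp=\di(T^{\bfp_X},T^{\bfp_Y})$.
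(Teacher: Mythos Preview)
Your proposal is correct. It takes a somewhat different, more directly categorical route than the paper. The paper derives the corollary by invoking the identifications $\di(T^{\bfp_X},T^{\bfp_Y})=\dnp$ (Proposition~\ref{NP_interleaving}) and $\di(hT^{\bfp_X},hT^{\bfp_Y})=\dht$ (Proposition~\ref{HT_interleaving}), and then appealing to the earlier inequalities $\dht\le\dnp$ (Proposition~\ref{rem-bound}) and, essentially, $\di(\mathfrak{P}_k^X,\mathfrak{P}_k^Y)\le\dht$ (Lemma~\ref{prop_interleaving}). You instead exhibit explicit functors $Q\colon\mathbf{Top}/\R^n_\preceq\to\mathbf{hTop}/\R^n_\preceq$ and $\mathcal{H}_k\colon\mathbf{hTop}/\R^n_\preceq\to\mathbf{Vect}_{\mathbb F}$ satisfying $Q\circ T^{\bfp_X}=hT^{\bfp_X}$ and $\mathcal{H}_k\circ hT^{\bfp_X}=\mathfrak{P}_k^X$, and apply the general fact that postcomposition with a functor does not increase the interleaving distance. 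Your argument is uniform, self-contained, and works directly for all $n$; the paper's route has the complementary virtue of tying the inequalities back to the concrete pseudo-distances $\dht$ and $\dnp$ established earlier. Your verification that $\mathcal{H}_k$ is well-defined on $\veczero$-homotopy classes is exactly the right technical point, and your closing sanity check already recovers the paper's argument for the second inequality.
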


\begin{rem}
Compare the leftmost inequality above (for $n=1$) with Lemma \ref{prop_interleaving}.
\end{rem}

\section{Discussion}

We have introduced the persistent homotopy type distance $\dht$ to quantify perturbations of functions defined on homotopy equivalent spaces. As a key consequence,  we were able to lift the standard stability result of persistence diagrams  of sublevel set filtrations of Cohen-Steiner, Edelsbrunner, and Harer \cite{edelsStab}  from the setting of functions defined on the \emph{same} domain to the more general setting of functions defined on possibly different but homotopy equivalent domains.

Focusing, as we did,  on sublevelsets filtrations implies a certain asymmetry in the definition of persistent homotopy type distance in that only up-shifts of maps impact it,  but not down-shifts, see Proposition \ref{prop:retract} and Example \ref{comb}.

 This lack of symmetry is shared neither by  the $L^\infty$ distance nor the natural pseudo-distance. One interesting line of research is whether a certain modification of our definition of the persistent homotopy type distance would correct this asymmetry. Further, it would be interesting to investigate whether such modification implies a lift of the standard $L^\infty$ stability results for extended persistence \cite{ext-pers} and/or interlevel set persistence \cite{inter-pers} to settings when functions are defined on possibly different homotopy equivalent domains. 

Another interesting open line of research is uncovering the relationship, if any, between our homotopy type distance and the distances that appear in the recent work of Blumberg and Lesnick \cite{lesnick-hid}.

\section{Acknowledgements}
This work started during a visit of the third author to the first author at the University of Bologna in 2014 which was partially supported by INdAM-GNSAGA.  The first two authors partially carried out this research within the
activities of ARCES “E. De Castro”, University of Bologna. The third author has been partially supported by NSF under grants DMS-1547357, CCF-1526513, and IIS-1422400. The authors thank Francesca Cagliari for her helpful advice about the categorical setting and Michael Lesnick for useful discussions that helped us to focus Section \ref{sec:interleaving}.


\bibliographystyle{amsplain}

\end{document}